\documentclass[a4,11pt]{article}
\usepackage{graphicx}
\usepackage{amsmath}
\usepackage{amssymb}
\usepackage{bbm}
\usepackage{caption}
\usepackage{multirow}
\usepackage{geometry}
\usepackage{tikz}
\usepackage{verbatim}
\usetikzlibrary{shapes}
\usepackage{amsthm}
\usepackage[bookmarks=true,colorlinks=true]{hyperref}
\usepackage{xspace}
\newtheorem{theorem}{Theorem}[section]
\newtheorem{lemma}{Lemma}[section]
\newtheorem{proposition}{Proposition}[section]

\newtheorem{remark}{Remark}[section]
\newtheorem{assumption}{Assumption}[section]

\numberwithin{equation}{section}
\numberwithin{figure}{section}
\numberwithin{table}{section}
\usepackage{float}
\usepackage{indentfirst}
\usepackage{caption}
\usepackage{subfigure}
\allowdisplaybreaks
\usepackage{epstopdf} 

\setlength{\parskip}{0.04cm}
\addtolength{\hoffset}{-1.8cm}
\addtolength{\textwidth}{3.6cm}
\addtolength{\voffset}{-1.2cm}
\addtolength{\textheight}{2.4cm}

\begin{document}
	\title{\bf Sensitivity of Optimal Retirement Problem to Liquidity Constraints} 
	\date{}
	\author{\sffamily  Guodong Ding$^{1}$, Daniele Marazzina$^{1,2}$\\
		{\sffamily\small $^1$ Department of Mathematics, Politecnico di Milano}\\ {\sffamily \small Piazza Leonardo da Vinci 32, I-20133, Milano, Italy}\\
		{\sffamily\small $^2$ Corresponding Author, daniele.marazzina@polimi.it}}
	\maketitle
	
	%MS+++++++++++++++++++++ Abstract +++++++++++++++++++++++++
	{\noindent\small{\bf Abstract:}
	 In this work we analytically solve an optimal retirement problem, in which the agent optimally allocates the risky investment, consumption and leisure rate to maximise a gain function characterised by a power utility function of consumption and leisure, through the duality method. We impose different liquidity constraints over different time spans and conduct a sensitivity analysis to discover the effect of this kind of constraint.  }
	
	\vspace{1ex}
	{\noindent\small{\bf Keywords:} Liquidity Constraints, Retirement Stopping Time, Consumption-Portfolio-Leisure Controls, Duality Method, Variational Inequalities }

	\section{Introduction}
	  We study a stochastic control problem involving the consumption-portfolio-leisure policy and the optimal stopping time of retirement. By determining the continuous and stopping regions of the corresponding optimal stopping time problem, we prove that the optimal retirement time is the first hitting time of the wealth process $X(t)$ upward to a critical wealth boundary. We implement different liquidity constraints over different time spans, which are $X(t)\ge R_{pre}$ and $X(t)\ge R_{post}$ separately for pre- and post-retirement periods. The numerical analysis shows that the wealth boundary triggering the retirement is decreasing to $R_{pre}$ but increasing to $R_{post}$. The additional retirement option impels the agent to consume less and invest more as the wealth approaches the retirement boundary, and this incentive becomes weaker as $R_{pre}$ decreases.
	 
	  The considered retirement mechanism is directly referred to \cite{farhi2007saving,choi2008optimal}. More precisely, \cite{farhi2007saving} studied the optimal retirement model regarding the consumption-portfolio-leisure strategy, in which the leisure rate is limited to the binomial choice. \cite{choi2008optimal} investigated a more complex optimization problem that endows the agent the flexibility in labour supply in the context of retirement planning. We extend their research and adopt a different utility function, a power utility function, as in \cite{shim2014optimal}, instead of the Constant Elasticity of Substitution (CES) function. Additionally, compared to \cite{choi2008optimal}, other extensions are i) the introduction of a continuous debt repayment the agent should face, ii) the different liquidity constraints before and after retirement, which is the main contribution of this work.

	 \section{Problem Formulation}
	 
	  We deal with a financial market  in which two kinds of investment are provided: the money market, concerning a fixed risk-free rate $r>0$, and a risky asset, which dynamics is described by the stochastic differential equation $d S(t)=\mu S(t)d t+\sigma S(t)dB(t), \,S(0)=S_0$, with $\mu$ and $\sigma$ representing the constant drift and diffusion coefficients. $B(t)$ represents a standard Brownian motion on the filtered probability space $(\Omega,\mathcal{F},\mathbb{P})$, and $\{\mathcal{F}_{t},0\leq t<\infty\}$ is the augmented natural filtration on $B(t)$. Moreover, by introducing the market price of risk as $\theta\triangleq\frac{\mu-r}{\sigma}$, we can define the state-price density process as $H(t)\triangleq\xi(t)\tilde{Z}(t)$ following \cite{karatzas1998methods}, where $\xi(t)\triangleq e^{-rt}$ and $\tilde{Z}(t)\triangleq e^{-\frac{\theta^{2}}{2}t-\theta B(t)}$ indicate the discount process and an exponential martingale, respectively. Then we define the equivalent martingale measure $\tilde{\mathbb{P}}$ by $\tilde{\mathbb{P}}(A)\triangleq\mathbb{E}\left[\tilde{Z}(t)\mathbb{I}_{A}\right]$, $\forall A\in \mathcal{F}_{t}$. Based on the Girsanov Theorem, a standard Brownian motion under $\tilde{\mathbb{P}}$ measure can be defined as $\tilde{B}(t)\triangleq B(t)+\theta t$, $\forall t\ge 0$.
	 
	  We now describe the optimization problem. The agent needs to optimally allocate the consumption $c(t)$, the amount of money for the risky investment $\pi(t)$ and the leisure rate $l(t)$. The sum of labour and leisure rates equals the constant $\bar{L}$. Furthermore, denoting the retirement time as $\tau$, the retirement mechanism can be elaborated as: $0\leq l(t)\leq L<\bar{L}$ on $0\leq t\leq\tau$, i.e., the leisure rate, as the complement of labour rate, is upper bounded for keeping the employment state; and $l(t)\equiv\bar{L}$ on $t>\tau$, since the agent enjoys the entire leisure $\bar{L}$ after declaring retirement. Then the dynamics of the wealth process $X(t)$, i.e., the state variable of the optimization, is
	 $$d X(t)=\left[r X(t)+\pi(t)(\mu-r)-c(t)-d+w(\bar{L}-l(t))\right]d t+\sigma\pi(t)dB(t),\ \forall t\ge 0,$$
	 $d$ and $w$ are the constant debt repayment and the wage rate, respectively. The initial wealth is $X(0)=x$. \\
	 The considered optimal retirement problem $(P)$ is
	 \begin{equation*}
	 V(x)\triangleq\sup_{(\tau,\{c(t),\pi(t),l(t)\})\in\mathcal{A}(x)}J(x;c,\pi,l,\tau)=\sup_{(\tau,\{c(t),\pi(t),l(t)\})\in\mathcal{A}(x)} \mathbb{E} \left[\int_{0}^{\infty}e^{-\gamma t}u(c(t),l(t))d t\right], \tag{$P$}
	 \end{equation*}
	 in which $\gamma$ is the subjective discount rate, and the utility is characterized by a power function $$u(c,l)\!=\!\frac{\left(c^{\delta}l^{1-\delta}\right)^{1-k}}{\delta(1-k)}, \ 0\!<\!\delta\!<\!1,\ k\!>\!1.$$ The admissible control set $\mathcal{A}(x)$ follows the standard definition, e.g., \cite[Definition 2.1]{2021arXiv210615426D}, imposing liquidity constraints: $X(t)\ge R_{pre}$ for $0\leq t<\tau$, $X(\tau)\ge R_{pre}\vee R_{post}$, and $X(t)\ge R_{post}$ for  $t>\tau$ a.s.. Notice that we must impose $R_{pre}\ge\frac{d-w\bar{L}}{r}$ and $R_{post}\ge\frac{d}{r}$ to have the existence of an admissible solution, where $\frac{d-w\bar{L}}{r}$ represents the discounted value of the full debt repayment minus the maximum amount to borrow against the future labour income (in the pre-retirement period). 
%	 \begin{definition}
%	 Given the initial wealth $x\ge R_{pre}$, $\mathcal{A}(x)$ is defined as the set of all admissible policies satisfying:
%	 \begin{itemize}
%	 \item  $\tau\leq\infty$ is an $\mathcal{F}_{t}$-stopping time,
%	 \item  $\{c(t):\!t\!\ge\! 0\}$ is an $\mathcal{F}_{t}$-progressively measurable and non-negative process such that $\int_{0}^{t}\!c(s)d s\!<\!\infty$, a.s., $\forall t\ge 0$,
%	 \item $\{\pi(t):t\ge 0\}$ is an $\mathcal{F}_{t}$-progressively measurable process such that $\int_{0}^{t}\pi^{2}(s)d s<\infty$, a.s., $\forall t\ge 0$,
%	 \item  $\{l(t):t \ge 0\}$ is an $\mathcal{F}_{t}$-progressively measurable and non-negative process, upper bounded by $\bar{L}$,
%	 \item $X(t)\ge R_{pre}$ for $0\leq t<\tau$, $X(\tau)\ge R_{pre}\vee R_{post}$ and $X(t)\ge R_{post}$ for  $t>\tau$ a.s.,
%	 \item $\mathbb{E}\left[\int_{0}^{\infty}e^{-\gamma t}u^{-}(c(t),l(t))d t\right]<\infty$ with $u^{-}\triangleq-\min(u,0)$.
%	 \end{itemize}
%	 \end{definition} 

	\section{Solution of Optimization Problem}
	Defining $J_{\scriptscriptstyle PR}(X(\tau);c,\pi)\triangleq\mathbb{E} \left[\left.\int_{\tau}^{\infty}e^{-\gamma (s-\tau)}u(c(s),\bar{L})d s\right| \mathcal{F}_{\tau}\right]$, the gain function of Problem $(P)$ can be rewritten as the expectation of two separated terms representing the pre- and post-retirement part $$J(x;c,\pi,l,\tau)\!=\!\mathbb{E} \left[\!\int_{0}^{\tau}\!e^{\!-\!\gamma t}u(c(t),l(t))d t+e^{\!-\!\gamma \tau}\!J_{\scriptscriptstyle PR}(X(\tau);c,\pi)\right],$$ where the subscript $\scriptstyle PR$ indicates that the corresponding variables and functions are related to the post-retirement problem.

The solutions of the pre- and post-retirement part are based on similar techniques, therefore in this letter we only report the solution of the post-retirement part, referring to the Online Appendix, Section \ref{PR_part}, for details. Depending on the value of $R_{post}$, the solution of the post-retirement problem is divided into two different cases: one is $ R_{post}=\frac{d}{r}$, in which the liquidity constraint has no restriction on the optimization, and the other is $R_{post}> \frac{d}{r}$, with the optimal solution being binded by the liquidity constraint.

	\begin{lemma}\label{Lemma 3}
		The post-retirement value function
\begin{equation*}
U(x)\triangleq\sup_{\{c(t),\pi(t)\}}J_{\scriptscriptstyle PR}(x;c,\pi), 
\end{equation*}
 for $x\ge R_{post}$, is given by:
		\begin{equation*}
			U\left(x\right)=\begin{cases}
				\left(x-\frac{d}{r}\right)^{\delta(1-k)}K_{1}^{1-\delta(1-k)}\bar{L}^{(1-k)(1-\delta)}\frac{1}{\delta(1-k)}, &  \mbox{if}\quad R_{post}=\frac{d}{r},\\
				B_{2,\scriptscriptstyle PR}(\lambda_{\scriptscriptstyle PR}^{*})^{n_{2}}+\frac{1-\delta(1-k)}{\delta(1-k)}K_{1}\bar{L}^{\frac{(1-k)(1-\delta)}{1-\delta(1-k)}}(\lambda_{\scriptscriptstyle PR}^{*})^{\frac{\delta(1-k)}{\delta(1-k)-1}}-\frac{d}{r}\lambda_{\scriptscriptstyle PR}^{*}+\lambda_{\scriptscriptstyle PR}^{*}x,& \mbox{if}\quad R_{post}>\frac{d}{r}.
			\end{cases}
		\end{equation*}
		The Legendre-Fenchel transform of $U(x)$, $\tilde{U}(z)\triangleq\sup\limits_{x\ge R_{post}} [U(x)-zx]$, is:
		\begin{itemize}
			\item  $\tilde{U}(z)=\frac{1-\delta(1-k)}{\delta(1-k)}z^{\frac{\delta(1-k)}{\delta(1-k)-1}}K_{1}\bar{L}^{\frac{(1-k)(1-\delta)}{1-\delta(1-k)}}-\frac{d}{r}z$, $z>0$, if  $R_{post}\!=\!\frac{d}{r}$;
			\item 
				$\tilde{U}(z)\!=\!\begin{cases}
					\!B_{2,\scriptscriptstyle PR}\hat{z}_{\scriptscriptstyle PR}^{n_{2}}\!+\!\frac{1\!-\!\delta(1\!-\!k)}{\delta(1\!-\!k)}K_{1}\!\bar{L}^{\frac{(1\!-\!k)(1\!-\!\delta)}{1\!-\!\delta(1\!-\!k)}}\hat{z}_{\scriptscriptstyle PR}^{\frac{\delta(1\!-\!k)}{\delta(1\!-\!k)\!-\!1}}\!\!\!\!-\!\frac{d}{r}\hat{z}_{\scriptscriptstyle PR}\!\!-R_{post}(z\!-\!\hat{z}_{\scriptscriptstyle PR}), & z\!\ge\!\hat{z}_{\scriptscriptstyle PR},\\
					\!B_{2,\scriptscriptstyle PR}z^{n_{2}}+\frac{1-\delta(1-k)}{\delta(1-k)}K_{1}\bar{L}^{\frac{(1-k)(1-\delta)}{1-\delta(1-k)}}z^{\frac{\delta(1-k)}{\delta(1-k)-1}}-\frac{d}{r}z, & 0\!\!<\!z\!\!<\!\hat{z}_{\scriptscriptstyle PR},
				\end{cases}$ if $R_{post}\!>\!\frac{d}{r}$.
		\end{itemize}
	\end{lemma}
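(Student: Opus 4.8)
The plan is to solve the post-retirement problem $U(x) = \sup_{\{c,\pi\}} J_{PR}(x;c,\pi)$ by the martingale/duality method, treating the two cases $R_{post}=\frac{d}{r}$ and $R_{post}>\frac{d}{r}$ in parallel. First I would set up the static budget constraint. After retirement the agent enjoys full leisure $\bar L$, so the only controls are consumption $c(s)$ and the risky allocation $\pi(s)$, with wealth dynamics $dX = [rX + \pi(\mu-r) - c - d]\,ds + \sigma\pi\,dB$. Using the state-price density $H(t)=\xi(t)\tilde Z(t)$, the wealth equation integrates to the budget inequality $\mathbb{E}\!\left[\int_\tau^\infty H(s)\,c(s)\,ds\right] \le H(\tau)\,x - \frac{d}{r}H(\tau)$ (the $-\frac{d}{r}$ coming from capitalising the perpetual debt stream $d$), subject to the liquidity floor $X(s)\ge R_{post}$. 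The objective to maximise is $\mathbb{E}\!\left[\int_\tau^\infty e^{-\gamma(s-\tau)} u(c(s),\bar L)\,ds\right]$ with $u(c,\bar L)=\frac{(c^\delta \bar L^{1-\delta})^{1-k}}{\delta(1-k)}$.

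Next I would dualise. Introducing a Lagrange multiplier $\lambda>0$ for the budget constraint, the pointwise optimisation of $e^{-\gamma(s-\tau)}u(c,\bar L) - \lambda H(s)c$ over $c>0$ gives the optimal consumption as a power of the adjusted dual variable, which after taking expectations yields the value expressed through $K_1$ (the constant aggregating $\gamma$, $r$, $\theta$, $\delta$, $k$) and the leisure factor $\bar L^{(1-k)(1-\delta)}$. In the unconstrained case $R_{post}=\frac{d}{r}$ the liquidity floor $X(s)\ge \frac{d}{r}$ is automatically nonbinding (the agent never needs to touch the capitalised debt value), so the solution is the classical Merton-type closed form: $U(x)$ is a pure power of the surplus wealth $x-\frac{d}{r}$, and its Legendre transform $\tilde U(z)$ follows by the explicit one-dimensional maximisation $\sup_x[U(x)-zx]$, producing the stated single-line formula.

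The harder case is $R_{post}>\frac{d}{r}$, where the liquidity constraint binds. Here I would pass to the dual HJB / variational inequality for $\tilde U(z)$ rather than attacking the primal directly, because the dual variable $z$ (the marginal utility of wealth) evolves as a nice diffusion and the constraint $X\ge R_{post}$ translates into a reflecting/free-boundary condition at a threshold $\hat z_{PR}$. The dual value function should solve a second-order linear ODE whose homogeneous solutions are powers $z^{n_1}, z^{n_2}$ (with $n_2$ the relevant root of the characteristic quadratic coming from the generator of $z$), plus the particular solutions $z^{\delta(1-k)/(\delta(1-k)-1)}$ and the linear-in-$z$ debt term. I would argue that for $0<z<\hat z_{PR}$ (wealth above the floor) the constraint is slack and $\tilde U$ coincides with the interior form $B_{2,PR}z^{n_2}+\cdots-\frac{d}{r}z$, while for $z\ge\hat z_{PR}$ (wealth pinned at $R_{post}$) the value is affine in $z$ with slope $-R_{post}$, reflecting that wealth is frozen at the floor. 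The free boundary $\hat z_{PR}$, the coefficient $B_{2,PR}$, and (in the primal) $\lambda_{PR}^*$ are then fixed by smooth-fit ($C^1$) conditions at $\hat z_{PR}$ together with the budget/transversality requirement.

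The main obstacle I expect is twofold: first, correctly identifying and justifying that $n_2$ is the admissible root (the one giving integrability and the right behaviour as $z\to 0$ or $z\to\infty$) and discarding the other homogeneous mode, since choosing the wrong power gives a formula that violates the boundary behaviour; second, verifying the smooth-fit conditions at $\hat z_{PR}$ actually pin down $B_{2,PR}$ and $\hat z_{PR}$ consistently and that the resulting $\tilde U$ is genuinely the value function — this requires a verification argument (showing the candidate is concave, satisfies the variational inequality with the correct inequality directions, and that the associated controls are admissible and attain the supremum). Recovering $U(x)$ from $\tilde U(z)$ by inverse Legendre transform, and checking the two pieces match at $x=R_{post}$ (i.e. $z=\hat z_{PR}$), completes the argument; the explicit constant-chasing for $B_{2,PR}$, $\hat z_{PR}$ and $\lambda_{PR}^*$ I would relegate to routine computation.
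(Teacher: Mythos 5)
Your proposal is correct and follows essentially the same route as the paper: a Merton-type closed form with direct Legendre maximisation when $R_{post}=\frac{d}{r}$, and for $R_{post}>\frac{d}{r}$ a dual free-boundary problem whose solution is a combination of the powers $z^{n_1},z^{n_2}$ (discarding $z^{n_1}$ since $n_1<0$ causes explosion as $z\downarrow 0$) pasted to an affine piece of slope $-R_{post}$ at $\hat{z}_{\scriptscriptstyle PR}$, with $U$ recovered by inverting the transform and checking $x^*=-v_{\scriptscriptstyle PR}^{\prime}(z)\ge R_{post}$. The only substantive differences are that the paper rigorously encodes the running constraint via a non-increasing multiplier process $D(t)$ in the He--Pag\`es style (proving the duality theorem through a Snell-envelope admissibility lemma before assuming $dD=-\psi D\,dt$ to reach your variational inequality), and that the free boundary is pinned down by both $C^{1}$ and $C^{2}$ smooth-fit conditions at $\hat{z}_{\scriptscriptstyle PR}$, not $C^{1}$ plus the budget equation as you suggest (the budget only determines $\lambda_{\scriptscriptstyle PR}^{*}$).
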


	\begin{proof}
	    See the Online Appendix \ref{PR_part} for the proof and the definition of the constants $\lambda_{\scriptscriptstyle PR}^{*},\,K_1,\,n_2,\,\hat{z}_{\scriptscriptstyle PR}$ and $B_{2,PR}$.
	\end{proof}

	 \subsection{Pre-retirement Part}
Based on the dynamic programming principle, we can only consider a subset of the admissible control set of Problem $(P)$, that is $\mathcal{A}_{1}(x)\subset\mathcal{A}(x)$, in which any policy achieves the maximum of the post-retirement problem's gain function. Hence we have
	\begin{equation*}
		V(x)= \sup_{\left(\tau, \{c(t),\pi(t),l(t) \}\right)\in\mathcal{A}_{1}(x)} \mathbb{E}\left[\int_{0}^{\tau}e^{-\gamma t}u(c(t),l(t))d t+e^{-\gamma\tau}U\left(X^{x,c,\pi,l}(\tau)\right)\right],
	\end{equation*} 
	where $U\left(X^{x,c,\pi,l}(\tau)\right)\!\triangleq\! \!\sup\limits_{ \{c(t),\pi(t),l(t) \}\in\mathcal{A}_{1}(x)}\!\! \mathbb{E}\!\left[\!\left.\int_{\tau}^{\infty}\!\!e^{\!-\!\gamma (s\!-\!\tau)}u(c(s),\bar{L})d s\right|\mathcal{F}_{\tau}\right]$ is given in the previous lemma.

We first define an admissible control set corresponding to a fixed stopping time $\tau\in\mathcal{T}$, with $\mathcal{T}$ representing the set of $\mathcal{F}_{t}$-stopping times, as
	$\mathcal{A}_{\tau}(x)\triangleq \{\{c(t),\pi(t),l(t)\}: \left(\tau,\{c(t),\pi(t),l(t)\}\right)\in\mathcal{A}(x)  \}$,
	and the utility maximization problem
	\begin{equation*}
	    V_{\tau}(x)\triangleq\sup\limits_{\{c(t),\pi(t),l(t)\}\in\mathcal{A}_{\tau}(x) } J(x;c,\pi,l,\tau).   \tag{$P_{\tau}$}
	\end{equation*}
	Then, Problem $(P)$ is converted into an optimal stopping time problem, that is 
$$V(x)=\sup\limits_{\tau\in\mathcal{T}}V_{\tau}(x).$$
	Similar to the post-retirement problem, the primal optimization problem's solution depends on the value of $R_{pre}$, which prompts us to solve it in two different cases. Before the discussion, we follow \cite[Proposition 2.1]{2021arXiv210615426D} to provide the pre-retirement budget constraint, that is:
	\begin{equation}
	    \mathbb{E}\left[\int_{0}^{\tau}H(t)\left(c(t)+d+w l(t)-w\bar{L}\right)d t+H(\tau)X(\tau)\right]\leq x.\label{Equation 6}
	\end{equation}
	Additionally, we define the Legendre-Fenchel transform of $u(c,l)$ by $$\tilde{u}(y)\!\triangleq\!\!\!\sup\limits_{c\ge 0,\,0\leq l\leq L}\!\!\left[u(c,l)\!-\!(c\!+\!w l)y\right].$$

	\subsubsection[Without Liquidity Constraint]{Duality Approach with $R_{pre}=\frac{d-w\bar{L}}{r}$}\label{sub1}
	
	Following the method from \cite[Section 6]{karatzas2000utility}, we first deduce an inequality of $J(x;c,\pi,l,\tau)$  by introducing a Lagrange multiplier $\lambda>0$ and using the budget constraint (\ref{Equation 6}),
	\begin{equation*}
	    J(x;c,\pi,l,\tau)\leq \mathbb{E}\left[\int_{0}^{\tau}e^{-\gamma t}\left(\tilde{u}(\lambda e^{\gamma t}H(t))-(d-w\bar{L})\lambda e^{\gamma t}H(t)\right)d t+e^{-\gamma\tau}\tilde{U}(\lambda e^{\gamma\tau}H(\tau))\right]+\lambda x.
	\end{equation*}
	The inequality turns to equality if and only if the conditions
	\begin{equation*}
		c(t)+w l(t)=-\tilde{u}^{\prime}(\lambda e^{\gamma t}H(t)), \quad \forall t\in[0,\tau], \quad X(\tau)=-\tilde{U}^{\prime}(\lambda e^{\gamma \tau}H(\tau)), \quad \mbox{a.s.,}
	\end{equation*}
	and $\mathbb{E}\!\left[\int_{0}^{\tau}\!\left(c(t)\!+\!w l(t)\!+\!d\!-\!w\bar{L}\right)H(t)d t\!+\!X(\tau)H(\tau)\right]\!=\!x$ hold. 
	
	Additionally, Lemma \ref{Lemma 3} implies
	$X(\tau)\!=\!-\tilde{U}^{\prime}(\lambda e^{\gamma \tau }H(\tau))\!\ge\! R_{post}\!\ge\!\frac{d-w\bar{L}}{r}$.
	Then the following lemma shows that under the above conditions, there always exists a portfolio policy to ensure $X^{x,c,\pi,l}(t)\!\ge\!\frac{d-w\bar{L}}{r}\!=\!R_{pre}$, $\forall t\!\in\![0,\tau]$, which implies the liquidity constraint is satisfied automatically.
	
	\begin{lemma}\label{Lemma 4}
		For any given initial wealth $x\ge R_{pre}$, any fixed stopping time $\tau\in\mathcal{T}$, any $\mathcal{F}_{\tau}$-measurable random variable $K$ with $\mathbb{P}(K\ge\frac{d-w\bar{L}}{r})=1$ under the $\mathbb{P}$ measure, and any given progressively measurable consumption and leisure processes $c(t)\ge 0$, $l(t)\ge 0$, $\forall t\ge 0$, satisfying
		$\mathbb{E}\left[\int_{0}^{\tau}H(t)(c(t)+w l(t)+d-w\bar{L})d t+H(\tau )K\right]=x$,
		there exists a portfolio process $\pi(t)$ making $X^{x,c,\pi,l}(t)\ge \frac{d-w\bar{L}}{r}$, $\forall t\in[0,\tau]$, and $X^{x,c,\pi,l}(\tau)=K$ hold almost surely.
	\end{lemma}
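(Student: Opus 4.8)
The plan is to follow the classical martingale/representation construction (as in Karatzas--Shreve and \cite{karatzas2000utility}), adapted to the presence of the constant outflow $d-w\bar{L}$ and the terminal wealth target $K$. First I would introduce the auxiliary process
\begin{equation*}
N(t)\triangleq\mathbb{E}\left[\int_{0}^{\tau}H(s)\left(c(s)+wl(s)+d-w\bar{L}\right)ds+H(\tau)K\,\Big|\,\mathcal{F}_{t}\right],\qquad t\in[0,\tau].
\end{equation*}
The budget hypothesis gives exactly $N(0)=x$, and a preliminary integrability check---using $\mathbb{E}[\int_{0}^{\infty}H(s)ds]=1/r<\infty$ together with the a.s.\ bound $K\ge\frac{d-w\bar{L}}{r}$---ensures that $N$ is a genuine martingale. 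The martingale representation theorem then yields a progressively measurable $\psi$ with $N(t)=x+\int_{0}^{t}\psi(s)dB(s)$.

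Next I would define the candidate wealth process and portfolio by
\begin{equation*}
X(t)\triangleq\frac{1}{H(t)}\left(N(t)-\int_{0}^{t}H(s)\left(c(s)+wl(s)+d-w\bar{L}\right)ds\right),\qquad \pi(t)\triangleq\frac{1}{\sigma}\left(\frac{\psi(t)}{H(t)}+\theta X(t)\right).
\end{equation*}
Applying It\^o's product rule to $H(t)X(t)$, using $dH(t)=-H(t)(r\,dt+\theta\,dB(t))$ and $\mu-r=\theta\sigma$, the $dt$-drift collapses to $-H(t)(c+wl+d-w\bar{L})$, so that $X$ solves the wealth SDE with this $\pi$; matching the $dB$-coefficient $H(t)(\sigma\pi(t)-\theta X(t))$ with $\psi(t)$ recovers precisely the above definition of $\pi$. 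Evaluating the defining formula at $t=0$ and $t=\tau$ (where $N(\tau)=\int_{0}^{\tau}H(s)(\cdots)ds+H(\tau)K$) gives $X(0)=x$ and $X(\tau)=K$.

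The main work is the liquidity bound $X(t)\ge R_{pre}=\frac{d-w\bar{L}}{r}$. Here I would rewrite, for $t\le\tau$,
\begin{equation*}
H(t)X(t)=\mathbb{E}\left[\int_{t}^{\tau}H(s)\left(c(s)+wl(s)+d-w\bar{L}\right)ds+H(\tau)K\,\Big|\,\mathcal{F}_{t}\right],
\end{equation*}
and bound the integrand from below via $c,l\ge 0$ (so $c+wl+d-w\bar{L}\ge d-w\bar{L}=rR_{pre}$) and the terminal condition $K\ge R_{pre}$. This reduces the estimate to the identity $\mathbb{E}[\int_{t}^{\tau}rH(s)ds+H(\tau)\mid\mathcal{F}_{t}]=H(t)$, which follows by taking conditional expectation in $H(\tau)-H(t)+\int_{t}^{\tau}rH(s)ds=-\int_{t}^{\tau}\theta H(s)dB(s)$ and using that the stochastic integral is a martingale. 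Combining, $H(t)X(t)\ge R_{pre}H(t)$, i.e.\ $X(t)\ge R_{pre}$.

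I expect the main obstacle to be the integrability bookkeeping rather than the core inequality: one must verify that $N$ is a true (not merely local) martingale so that the representation theorem and the conditional-expectation manipulations are legitimate, and that the constructed $\pi$ is admissible. These hinge on controlling $\mathbb{E}[\int_{0}^{\tau}H(s)(c+wl)ds]$ through the budget constraint and on the a.s.\ lower bound on $K$; once these are in place, the decisive liquidity estimate is a short computation.
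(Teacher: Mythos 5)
Your construction is correct and essentially the same as the paper's proof (Online Appendix \ref{Appendix 4}): define the conditional-expectation process of the remaining discounted outflows plus terminal value, show it is a martingale, apply the martingale representation theorem, read off the replicating portfolio $\pi(t)$ from the diffusion coefficient, and obtain the liquidity bound from $c+wl\ge 0$ and $K\ge\frac{d-w\bar{L}}{r}$. The only divergence is that you work under $\mathbb{P}$ with the stochastic kernel $H(t)$, whereas the paper works under $\tilde{\mathbb{P}}$ with the deterministic discount $\xi(t)=e^{-rt}$, which turns your auxiliary martingale identity $\mathbb{E}\left[\int_{t}^{\tau}rH(s)\,ds+H(\tau)\,\middle|\,\mathcal{F}_{t}\right]=H(t)$ (correct, but requiring the optional-sampling/uniform-integrability check you flag) into the pathwise triviality $\int_{t}^{\tau}r\xi(s)\,ds+\xi(\tau)=\xi(t)$.
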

	
	\begin{proof}
	    See Online Appendix \ref{Appendix 4}.
	\end{proof}
	
	Moreover, the Lagrange method indicates that $V_{\tau}(x)=\inf\limits_{\lambda>0} [\tilde{J}_{\tau}(\lambda)+\lambda x]$ with introducing
	\begin{equation*}
		\tilde{J}_{\tau}(\lambda)\triangleq\mathbb{E}\left[\int_{0}^{\tau}e^{-\gamma t}\left(\tilde{u}(\lambda e^{\gamma t}H(t))-(d-w\bar{L})\lambda e^{\gamma t}H(t)\right)d t+e^{-\gamma\tau}\tilde{U}(\lambda e^{\gamma\tau}H(\tau))\right],
	\end{equation*}
	and the value function of Problem $(P)$ can be transformed as
	\begin{equation*}
		V(x)=\sup_{\tau \in \mathcal{T}}V_{\tau}(x)=\sup_{\tau \in \mathcal{T}}\inf_{\lambda>0}[\tilde{J}_{\tau}(\lambda)+\lambda x]\leq \inf_{\lambda>0}\sup_{\tau \in \mathcal{T}}[\tilde{J}_{\tau}(\lambda)+\lambda x]=\inf_{\lambda>0}[\sup_{\tau \in \mathcal{T}}\tilde{J}_{\tau}(\lambda)+\lambda x].
	\end{equation*}
	Defining $\tilde{V}(\lambda)\triangleq\sup\limits_{\tau \in \mathcal{T}}\tilde{J}_{\tau}(\lambda)$, \cite[Section 8, Theorem 8.5]{karatzas2000utility} shows that $V(x)=\inf\limits_{\lambda>0}[\tilde{V}(\lambda)+\lambda x]$
	holds under the condition that the function $\tilde{V}(\lambda)$ exists and it is differentiable for any $\lambda>0$. Then, the process to solve Problem $(P)$ is divided into two steps: the first is involved in the pure optimal stopping time problem of $\tilde{V}(\lambda)$, and the second step mainly concerns finding the optimal Lagrange multiplier. We begin with the first step and introduce a new process, $Z(t)\triangleq\lambda e^{\gamma t}H(t)$. Then $\tilde{V}(\lambda)$ can be rewritten as
	\begin{equation*}
		\tilde{V}(\lambda)=\sup_{\tau \in S}\mathbb{E}\left[\int_{0}^{\tau}e^{-\gamma t}\left(\tilde{u}(Z(t))-(d-w\bar{L})Z(t)\right)d t+e^{-\gamma\tau}\tilde{U}(Z(\tau))\right].
	\end{equation*}
	We proceed with a generalized optimal stopping time problem
	\begin{equation}
		\phi(t,z)=\sup_{\tau\ge t}\mathbb{E}\left[\left.\int_{t}^{\tau}e^{-\gamma s}\left(\tilde{u}(Z(s))-(d-w\bar{L})Z(s)\right)d s+e^{-\gamma \tau}\tilde{U}(Z(\tau))\right|Z(t)=z\right],\label{Equation 8}
	\end{equation}
	which shows that $\tilde{V}(\lambda)=\phi(0,\lambda)$. The following lemma gives the continuous region and stopping region of the above optimal stopping time problem.
	\begin{lemma}\label{Lemma 5}
		Considering the optimal stopping time problem (\ref{Equation 8}) with the state variable $Z(t)$, the continuous region is $\Omega_{1}\!=\!\{Z(t)\!>\!\bar{z}\}$, the stopping region is $\Omega_{2}\!=\!\{0\!<\!Z(t)\!\leq\!\bar{z}\}$, where $\bar{z}$ denotes the boundary that separates $\Omega_{1}$ and $\Omega_{2}$. %Moreover, there exist the conditions,
%		$\bar{z}\!<\!\tilde{y}$ and $\bar{z}\!<\!\hat{z}_{\scriptscriptstyle PR}$.
	\end{lemma}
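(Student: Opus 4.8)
The plan is to exploit the multiplicative discount structure of \eqref{Equation 8} to reduce it to a \emph{stationary} one-dimensional obstacle problem, to locate the free boundary by smooth fit, and to close the argument with a verification theorem. First I would record that $Z(t)=\lambda e^{\gamma t}H(t)$ is a geometric Brownian motion, $dZ(t)=Z(t)\left[(\gamma-r)\,dt-\theta\,dB(t)\right]$, whose generator $\mathcal{L}f(z)=(\gamma-r)z f'(z)+\tfrac12\theta^2 z^2 f''(z)$ is time-homogeneous. Since the running and terminal rewards in \eqref{Equation 8} carry the common factor $e^{-\gamma s}$ and $Z$ is a homogeneous Markov process, the shift $s\mapsto s-t$ together with the Markov property yields $\phi(t,z)=e^{-\gamma t}\psi(z)$, where
$$\psi(z)=\sup_{\tau\ge 0}\mathbb{E}\left[\int_0^\tau e^{-\gamma s}g(Z(s))\,ds+e^{-\gamma\tau}\tilde{U}(Z(\tau))\,\Big|\,Z(0)=z\right],\qquad g(z):=\tilde{u}(z)-(d-w\bar{L})z.$$
Consequently $\Omega_1$ and $\Omega_2$ are determined by $\psi$ alone, and it suffices to analyse this stationary problem.

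Next I would write the associated variational inequality. Standard optimal-stopping theory identifies $\psi$ as the smallest supersolution dominating the obstacle $\tilde{U}$, so that $\psi\ge\tilde{U}$ and $\gamma\psi-\mathcal{L}\psi-g\ge0$ hold with complementary slackness. Hence $\Omega_1=\{\psi>\tilde{U}\}$, where $\psi$ solves the ODE $\gamma\psi-\mathcal{L}\psi-g=0$, while on $\Omega_2=\{\psi=\tilde{U}\}$ the obstacle function $G(z):=\gamma\tilde{U}(z)-\mathcal{L}\tilde{U}(z)-g(z)$ is nonnegative. The homogeneous equation is of Euler type: $z^{n}$ solves it iff $Q(n):=\tfrac12\theta^2 n(n-1)+(\gamma-r)n-\gamma=0$, and since $Q(0)=-\gamma<0$ and $Q(1)=-r<0$ with positive leading coefficient, the two roots satisfy $n_2<0<1<n_1$. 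Anticipating $\Omega_1=(\bar z,\infty)$, a growth/transversality condition as $z\to\infty$ (the obstacle $\tilde{U}$ grows at most linearly) forces the coefficient of the fast-growing mode $z^{n_1}$ to vanish, leaving a candidate $\psi(z)=A\,z^{n_2}+\psi_p(z)$ on $(\bar z,\infty)$, with $\psi_p$ a particular solution generated by $g$; the surviving decaying mode is the exponent labelled $n_2$ in Lemma \ref{Lemma 3}.

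Finally I would pin down the free boundary and verify. Imposing value matching $\psi(\bar z)=\tilde{U}(\bar z)$ and smooth pasting $\psi'(\bar z)=\tilde{U}'(\bar z)$ gives two equations for the unknowns $A$ and $\bar z$, from which I would extract a single transcendental equation for $\bar z$ and argue it admits a unique positive root. The remaining work is the verification step: one checks that the candidate built from $A$ and $\bar z$ satisfies the variational inequality \emph{globally}, namely $\psi\ge\tilde{U}$ on $(\bar z,\infty)$ and $G\ge0$ on $(0,\bar z]$, and then an application of It\^o's formula to $e^{-\gamma t}\psi(Z(t))$ combined with optional sampling shows that this candidate coincides with $\psi$ and that $\tau^{\ast}=\inf\{t\ge0:Z(t)\le\bar z\}$ is optimal, which is exactly the claimed partition $\Omega_1=\{Z(t)>\bar z\}$, $\Omega_2=\{0<Z(t)\le\bar z\}$. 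The main obstacle is the global sign analysis: because $\tilde{u}$ and $\tilde{U}$ are explicit but $\tilde{U}$ is only piecewise defined (the regimes $R_{post}=\frac{d}{r}$ and $R_{post}>\frac{d}{r}$, joined in a $C^1$ fashion at $\hat z_{\scriptscriptstyle PR}$), establishing that $G$ is single-crossing --- negative for large $z$, where continuation must be optimal, and nonnegative for small $z$ --- together with well-posedness of the smooth-fit system, is where the delicate, case-dependent estimates concentrate; the convexity of the Legendre transforms $\tilde{u}$ and $\tilde{U}$ is the structural fact I would lean on to secure this single-crossing property.
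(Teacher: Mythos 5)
Your overall plan (stationary reduction $\phi(t,z)=e^{-\gamma t}\psi(z)$, smallest-supersolution characterization, smooth fit, It\^o verification) is a legitimate route, but it leaves out precisely what constitutes the paper's proof of this lemma. The paper (Appendix \ref{Appendix 5}), following Oksendal's Example 10.3.1, never constructs $\psi$ here: it characterizes the continuation region directly as the positivity set of
$h(z)=-\gamma\tilde{U}(z)+(\gamma-r)z\,\tilde{U}^{\prime}(z)+\frac{\theta^{2}}{2}z^{2}\tilde{U}^{\prime\prime}(z)+\tilde{u}(z)-(d-w\bar{L})z$
(your $-G$), and the entire content of the lemma is the single-crossing analysis of $h$: $\lim_{z\downarrow 0}h(z)=0$, $\lim_{z\downarrow 0}h^{\prime}(z)=-\infty$, strict convexity of $h$ on all of $(0,\infty)$, and $h(\tilde{y})>0$ via an auxiliary function $f$ and the point $\tilde{y}_{1}=\bar{L}^{-k}\left(\frac{1-\delta}{\delta w}\right)^{1-\delta(1-k)}$; in the case $R_{post}>\frac{d}{r}$ one additionally needs the cancellation $-\gamma+(\gamma-r)n_{2}+\frac{\theta^{2}}{2}n_{2}(n_{2}-1)=0$ to kill the $B_{2,\scriptscriptstyle PR}z^{n_{2}}$ term of the obstacle on $(0,\hat{z}_{\scriptscriptstyle PR})$, and a separate contradiction argument to establish $\bar{z}<\hat{z}_{\scriptscriptstyle PR}$. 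You explicitly defer this sign analysis and assert that convexity of the Legendre transforms $\tilde{u}$ and $\tilde{U}$ would secure it; it does not. Convexity of $\tilde{U}$ gives no control over the sign of $-\gamma\tilde{U}+(\gamma-r)z\tilde{U}^{\prime}$, and the paper's convexity proof for $h$ rests on the explicit power forms and on $\bar{L}>L$ (e.g.\ $h^{\prime\prime}(z)=\frac{1}{\delta(1-k)-1}z^{\frac{2-\delta(1-k)}{\delta(1-k)-1}}\bigl[\bar{L}^{\frac{(1-k)(1-\delta)}{1-\delta(1-k)}}-L^{\frac{(1-k)(1-\delta)}{1-\delta(1-k)}}\bigr]>0$ on $0<z<\tilde{y}$), handled piecewise across the kinks at $\tilde{y}$ and $\hat{z}_{\scriptscriptstyle PR}$. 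That case-dependent work is the lemma; your proposal names it as "where the estimates concentrate" but supplies no mechanism that would deliver it.

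Two concrete slips in the constructive part as well. First, the root labelling: with the paper's convention $n_{1}<0<1<n_{2}$, the mode that survives as $z\to\infty$ in the continuation solution is $z^{n_{1}}$ (the paper sets $B_{22}=0$ in Appendix \ref{Appendix 7} because $z^{n_{2}}$ explodes); your identification of the surviving decaying mode with "the exponent labelled $n_{2}$ in Lemma \ref{Lemma 3}" conflates it with the $B_{2,\scriptscriptstyle PR}z^{n_{2}}$ term of the obstacle $\tilde{U}$ on $(0,\hat{z}_{\scriptscriptstyle PR})$, which is a different object. Second, your candidate $\psi(z)=A\,z^{n_{2}}+\psi_{p}(z)$ with a single free coefficient on $(\bar{z},\infty)$ cannot be right, because the forcing $\tilde{u}$ has a kink at $\tilde{y}$ and $\tilde{y}$ lies \emph{inside} the continuation region --- indeed $\bar{z}<\tilde{y}$ is itself part of what the paper proves. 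The correct candidate is piecewise, with both homogeneous modes present on $(\bar{z},\tilde{y})$, $C^{1}$ pasting at $\tilde{y}$, and the growing mode killed only on $(\tilde{y},\infty)$: four unknowns ($B_{11},B_{21},B_{12},\bar{z}$) matched by four conditions, not your two. The verification strategy could be repaired along these lines, but as written both the candidate construction and the decisive sign estimates are missing, so the proposal does not yet prove the claimed partition $\Omega_{1}=\{Z(t)>\bar{z}\}$, $\Omega_{2}=\{0<Z(t)\leq\bar{z}\}$.
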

	
	\begin{proof}
	    See Online Appendix \ref{Appendix 5}.
	\end{proof}
	
%	Besides, we provide Figure \ref{Figure 1} to intuitively show the relationship between the optimal retirement time and the continuous and stopping regions. The red dotted line represents the boundary separating the continuous region and the stopping region. The optimal retirement stopping time is the first hitting time of the boundary $\bar{z}$ for the
%	process $Z(t)$ from the inner region of $\Omega_{1}$, that is, $\tau\triangleq \inf\{t\ge 0: Z(t)\leq\bar{z}\}$. Subsequently, a lemma is presented to declare that the eventual retirement occurs finitely with probability 1 under a sufficient constraint.
%	\begin{figure}[t]
%		\centering
%		\caption{ Relationship between Optimal Retirement Time and Continuous and Stopping Regions}\label{Figure 1}
%		\includegraphics[width=1\linewidth,height=0.25\textheight]{C2ContinuousStoppingRegion.eps}
%	\end{figure}
%	
%	\begin{lemma}\label{Lemma 6}
%	Under the condition $\gamma\neq r+\frac{\theta^{2}}{2}$,
%	we have $\mathbb{P}\left(\tau_{\bar{z}}<\tau_{\infty}\right)=1$, with two stopping times $\tau_{\bar{z}}=\inf\{t\ge 0:Z(t)=\bar{z}\}$, and $\tau_{\infty}=\inf\{t\ge 0:Z(t)=\infty\}$.
%	\end{lemma}
%	
%	\begin{proof}
%	    See Appendix \ref{Appendix 6}.
%	\end{proof}
	
	Straight follows, with the operator $\mathcal{L}=\frac{\partial}{\partial t}+(\gamma-r)z\frac{\partial}{\partial z}+\frac{\theta^{2}}{2}z^{2}\frac{\partial^{2}}{\partial z^{2}}$,
	the optimal stopping time problem (\ref{Equation 8}) is equivalent to solving the free boundary problem below.\\
	\textbf{Variational Inequalities:} Find a free boundary $\bar{z}\!>\!0$ (Retirement level), and a function $\phi(t,z)\!\in\! C^{1}\!\left((0,\infty)\!\times\!\mathbb{R}^{+}
	\right)\cap C^{2}\left((0,\infty)\times\left(\mathbb{R}^{+}\setminus \{\bar{z}\}\right)\right)$ satisfying
	\begin{equation}
	\begin{cases}
	(V1) \quad\mathcal{L}\phi(t,z)+e^{-\gamma t}\left(\tilde{u}(z)-(d-w\bar{L})z\right)=0, & z> \bar{z},\\
	(V2) \quad\mathcal{L}\phi(t,z)+e^{-\gamma t}\left(\tilde{u}(z)-(d-w\bar{L})z\right)\leq 0, &0<z\leq \bar{z},\\
	(V3) \quad\phi(t,z)\ge e^{-\gamma t}\tilde{U}(z), &z>\bar{z}, \\
	(V4) \quad\phi(t,z)=e^{-\gamma t}\tilde{U}(z), &0<z\leq\bar{z},\label{Equation 9} 
	\end{cases}
	\end{equation}
	for any $t\ge 0$, with the smooth fit conditions $\phi(t,\bar{z})=e^{-\gamma t}\tilde{U}(\bar{z})$ and $\frac{\partial \phi}{\partial z}(t,\bar{z})=e^{-\gamma t}\tilde{U}^{\prime}(\bar{z})$.
	The analytical solution of the above inequalities is presented in Online Appendix \ref{Appendix 7}.
	
Once $\phi$ is computed, we recover $\tilde{V}(\lambda)=\phi(0,\lambda)$, and the value function is given by
\begin{equation*}
		V(x)=\inf_{\lambda>0}[\tilde{V}(\lambda)+\lambda x]=\tilde{V}(\lambda^*)+\lambda^* x,
	\end{equation*}
$x$ being the initial wealth. The retirement time is the first time the process $Z^*(t)\triangleq\lambda^* e^{\gamma t}H(t)$ touches the barrier $\bar{z}$ from above. The optimal strategies are reported at the end of the Online Appendix \ref{Appendix 7}.

\begin{remark}\label{Rem1}
The optimal retirement time is the first time the process $Z^*(t)$ touches the lower barrier $\bar{z}$. The same can be obtained with respect to the wealth level $X(t)$. In fact, the optimal process $Z^*$ is connected to the optimal wealth $X$ by the relation $X(t)=-v^{\prime}(Z^*(t))$, being ${\phi}(t,z)=e^{-\gamma t}v(z),$ see the online appendix. The convex property of $v(\cdot)$, see \cite[Section 3.4, Lemma 4.3]{karatzas1998methods}, indicates that $X(t)$ is a decreasing function of $Z^*(t)$, therefore, in this case the optimal retirement time is the first time the process $X(t)$ touches an upper barrier $\bar{x}=-v^{\prime}(\bar{z})$.
\end{remark}

	\subsubsection[With Liquidity Constraint]{Duality Approach with $R_{pre}>\frac{d-w\bar{L}}{r}$}
	
	Before proceeding to solve the problem, we present the following proposition to construct expectation form of the liquidity constraint related to $X(t)\!\ge\! R_{pre}$, $\forall t\in[0,\tau]$.
	\begin{proposition}
		The liquidity constraint of the considered problem is
		\begin{equation}
			\mathbb{E}\left[\left.\int_{t}^{\tau}\frac{H(s)}{H(t)}\left(c(s)+d+w l(s)-w\bar{L}\right)d s+\frac{H(\tau)}{H(t)}X(\tau)\right|\mathcal{F}_{t}\right]\ge R_{pre}, \quad\forall t\in[0,\tau].\label{Equation 10}
		\end{equation}
	\end{proposition}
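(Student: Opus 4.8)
The plan is to show that the pathwise liquidity constraint $X(t)\ge R_{pre}$ on $[0,\tau]$ admits the equivalent conditional-expectation representation (\ref{Equation 10}) by deflating the wealth process with the state-price density $H$ and exploiting the completeness of the market. The key identity I would establish is that, for any admissible control and every $t\in[0,\tau]$,
\begin{equation*}
X(t)=\mathbb{E}\left[\left.\int_{t}^{\tau}\frac{H(s)}{H(t)}\left(c(s)+d+w l(s)-w\bar{L}\right)d s+\frac{H(\tau)}{H(t)}X(\tau)\right|\mathcal{F}_{t}\right],
\end{equation*}
from which the stated equivalence is immediate, since $H(t)>0$ is $\mathcal{F}_{t}$-measurable and the condition $X(t)\ge R_{pre}$ is then read off directly.

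First I would compute the dynamics of the deflator. Since $H(t)=e^{-rt}\tilde{Z}(t)$ with $\tilde{Z}$ solving $d\tilde{Z}=-\theta\tilde{Z}\,dB$, Itô's lemma gives $dH(t)=-H(t)\left(r\,dt+\theta\,dB(t)\right)$. Applying the product rule to $H(t)X(t)$ and inserting the wealth dynamics together with the cross-variation term $d\langle H,X\rangle=-\theta\sigma\pi(t)H(t)\,dt$, the crucial cancellation is that the risky-investment contribution to the drift, $H\pi(\mu-r)-\theta\sigma\pi H$, vanishes because $\theta\sigma=\mu-r$ by the very definition of the market price of risk. After this cancellation the drift collapses to $-H(t)\left(c(t)+d+w l(t)-w\bar{L}\right)$, so that
\begin{equation*}
d\left(H(t)X(t)\right)=-H(t)\left(c(t)+d+w l(t)-w\bar{L}\right)d t+H(t)\left(\sigma\pi(t)-\theta X(t)\right)d B(t).
\end{equation*}

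I would then integrate this identity from $t$ to $\tau$, rearrange to isolate $H(t)X(t)$, take the conditional expectation $\mathbb{E}[\,\cdot\mid\mathcal{F}_{t}]$, and finally divide through by the strictly positive, $\mathcal{F}_{t}$-measurable factor $H(t)$. A Fubini argument handles the Lebesgue-integral term, while the stochastic-integral term is eliminated once it is shown to be a genuine $\mathcal{F}_{t}$-martingale over $[t,\tau]$; this produces exactly the displayed identity for $X(t)$ and hence the equivalence with (\ref{Equation 10}).

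The main obstacle is precisely this martingale property of $\int_{t}^{\cdot}H(s)\left(\sigma\pi(s)-\theta X(s)\right)d B(s)$: a priori it is only a local martingale, and establishing that its conditional expectation vanishes, rather than obtaining merely the one-sided supermartingale inequality underlying the budget constraint (\ref{Equation 6}), is what upgrades the relation from $X(t)\ge\mathbb{E}[\cdots\mid\mathcal{F}_{t}]$ to an equality. I would close this gap using the integrability built into the admissible class $\mathcal{A}(x)$ (\cite[Definition 2.1]{2021arXiv210615426D}): introduce a localizing sequence $\tau_{n}\uparrow\tau$, invoke the martingale property on each $[t,\tau_{n}]$, and pass to the limit by dominated and monotone convergence, using that the constraint $X\ge R_{pre}$ bounds the deflated wealth below while the admissibility conditions control the consumption--leisure and terminal terms. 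Equality then follows for every $t\in[0,\tau]$, which completes the proof.
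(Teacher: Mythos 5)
Your route---It\^o's product rule on $H(t)X(t)$, the cancellation $\sigma\theta=\mu-r$, conditional expectation, division by $H(t)$---is the standard martingale-method argument and is in spirit the one behind the cited Proposition 4.1 (the paper itself only points to \cite[Proposition 4.1]{2021arXiv210615426D}); your computation of $d(H(t)X(t))$ is correct. The genuine gap is in your closing step. After localizing with $\tau_n\uparrow\tau$ you have $\mathbb{E}\left[M(\tau_n)\,|\,\mathcal{F}_t\right]=M(t)$ for $M(t)=H(t)X(t)+\int_0^t H(s)\left(c(s)+d+wl(s)-w\bar{L}\right)ds$, and you propose to pass to the limit by ``dominated and monotone convergence.'' Neither applies: the integrand $c+d+wl-w\bar{L}$ has no fixed sign and $H(\tau_n)X(\tau_n)$ is not monotone, so monotone convergence is out; and dominated convergence needs an integrable dominating bound, which neither $X\ge R_{pre}$ nor the structure of $\mathcal{A}(x)$ supplies---a floor of the form $R_{pre}H(t)$ is only a \emph{lower} bound, and for $R_{pre}<0$ (allowed here, cf.\ the numerics with $R_{pre}=-60$) one would need $\mathbb{E}\bigl[\sup_t H(t)\bigr]<\infty$, which fails for the exponential deflator. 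What the lower bound does yield, via Fatou, is precisely the supermartingale inequality $X(t)\ge\mathbb{E}\left[\left.\int_t^\tau\frac{H(s)}{H(t)}\left(c(s)+d+wl(s)-w\bar{L}\right)ds+\frac{H(\tau)}{H(t)}X(\tau)\right|\mathcal{F}_t\right]$, i.e.\ the one-sided relation you set out to upgrade, not the equality.

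This is not a cosmetic point, because the inequality runs the wrong way for how (\ref{Equation 10}) is used: the supermartingale bound shows that (\ref{Equation 10}) \emph{implies} $X(t)\ge R_{pre}$, whereas inserting (\ref{Equation 10}) into the dual estimate requires that every admissible policy \emph{satisfies} (\ref{Equation 10}), and that direction fails whenever $\int H(\sigma\pi-\theta X)dB$ is a strict local martingale (money-burning portfolios keep $X\ge R_{pre}$ while the conditional replication value on the right drops below $R_{pre}$). Two honest repairs: (i) invoke integrability conditions in $\mathcal{A}(x)$ strong enough to make the stochastic integral a true martingale on $[0,\tau]$, stating them explicitly; or (ii) follow the construction the paper itself uses in the analogous places (Lemma \ref{Lemma 4} and the proof of the post-retirement duality theorem), where the identity $\int_0^t H(s)(\cdot)\,ds+H(t)X(t)=x+\int_0^t\zeta(s)dB(s)$ holds \emph{by construction} because the wealth process is built from the martingale representation theorem, and for arbitrary admissible policies only inequality versions are ever needed. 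A useful intermediate observation you could also exploit: a supermartingale with constant expectation is a martingale, so on the subclass of policies saturating the budget constraint (\ref{Equation 6})---which includes the candidates produced by the first-order conditions---your equality is valid; restricting the claimed identity to that subclass, rather than asserting it ``for any admissible control,'' would close the gap.
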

	\begin{proof}
	    See \cite[Proposition 4.1]{2021arXiv210615426D}.
	\end{proof}
	
	Considering the budget and liquidity constraints, (\ref{Equation 6}) and (\ref{Equation 10}), and introducing a Lagrange multiplier $\lambda>0$ and a non-increasing process $D(t)\ge 0$  \cite{karatzas2000utility,he1993labor}, the following inequality is obtained:
	\begin{equation*}
	\begin{split}
	    J(x;c,\pi,l,\tau)
	    &\!\leq\!\mathbb{E}\!\left[\!\int_{0}^{\tau}\!e^{\!-\!\gamma t}\left(\tilde{u}(\lambda D(t)e^{\gamma t}H(t))\!-\!(d\!-\!w\bar{L})\lambda e^{\gamma t}D(t)H(t)\right)\!d t\!+\!e^{\!-\!\gamma\tau}\tilde{U}(\lambda D(\tau)e^{\gamma\tau}\!H(\tau))\!\right]\\
	    &\qquad
	    +\lambda\mathbb{E}\left[\int_{0}^{\tau}R_{pre}H(t)d D(t)\right]+\lambda x,
	\end{split}
	\end{equation*}
	which inspires us to define the dual individual's shadow price problem
	\begin{equation*}\tag{$S_{\tau}$}
	\begin{split}
	    \tilde{V}_{\tau}(\lambda)
	    &\triangleq \inf_{D(t)\in\mathcal{D}}\mathbb{E}\bigg[\int_{0}^{\tau}e^{-\gamma t}\left(\tilde{u}(\lambda D(t)e^{\gamma t}H(t))-(d-w\bar{L})\lambda e^{\gamma t}D(t)H(t)\right)d t\\
	   &\qquad+e^{-\gamma\tau}\tilde{U}(\lambda D(\tau)e^{\gamma\tau}H(\tau))\bigg]
	   +\lambda\mathbb{E}\left[\int_{0}^{\tau}R_{pre}H(t)d D(t)\right],
	\end{split} 
	\end{equation*}
	where $\mathcal{D}$ is the set of non-negative, non-increasing and progressively measurable processes. Then we establish the duality between Problem $(S_{\tau})$ and $(P_{\tau})$.
	
	\begin{theorem}
	\textbf{(Duality Theorem)}
	Suppose $D^{*}(t)$ is the optimal solution to Problem $(S_{\tau})$, then $c^{*}(t)\!+\!w l^{*}(t)\!=\!-\!\tilde{u}^{\prime}(Z^*(t))$ and $X^{x,c^{*},\pi^{*},l^{*}}(\tau)\!=\!-\tilde{U}^{\prime}(Z^*(\tau))$ coincide with the optimal solution of Problem $(P_{\tau})$, and there exists
	$V_{\tau}(x)\!=\!\inf\limits_{\lambda>0}\! \left[\!\tilde{V}_{\tau}(\lambda)\!+\!\lambda x\right]$, $\forall x\!\ge\! R_{pre}$. Here $Z^*(t)=\lambda^{*} e^{\gamma t}D^{*}(t)H(t)$, where $\lambda^{*}$ and $D^*(t)$ are the parameters $\lambda$ and $D(t)$ giving the infimum.
	\end{theorem}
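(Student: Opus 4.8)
The plan is to establish the two inequalities $V_\tau(x)\le\inf_{\lambda>0}[\tilde V_\tau(\lambda)+\lambda x]$ and $V_\tau(x)\ge\inf_{\lambda>0}[\tilde V_\tau(\lambda)+\lambda x]$ separately, and then to verify that the control reconstructed from the dual minimiser attains the common value. The first (weak-duality) inequality is essentially free: the Lagrangian bound displayed immediately before the definition of Problem $(S_\tau)$ holds for every admissible $(c,\pi,l)$ and every pair $(\lambda,D)$ with $\lambda>0$, $D\in\mathcal D$. Since its left-hand side $J(x;c,\pi,l,\tau)$ depends on neither $\lambda$ nor $D$, I would first take the supremum over admissible controls on the left and then the infimum over $D\in\mathcal D$, which by definition produces $\tilde V_\tau(\lambda)$, and finally the infimum over $\lambda>0$; this yields $V_\tau(x)\le\inf_{\lambda>0}[\tilde V_\tau(\lambda)+\lambda x]$.

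For the reverse inequality I would construct an explicit candidate from the dual optimisers $\lambda^*$ and $D^*$. Writing $Z^*(t)=\lambda^* e^{\gamma t}D^*(t)H(t)$, the inner maximiser in $\tilde u(y)=\sup_{c\ge0,\,0\le l\le L}[u(c,l)-(c+wl)y]$ evaluated at $y=Z^*(t)$ determines $c^*(t)$ and $l^*(t)$ individually and, by the envelope relation, forces $c^*(t)+wl^*(t)=-\tilde u'(Z^*(t))$; prescribing the terminal wealth $X^{x,c^*,\pi^*,l^*}(\tau)=-\tilde U'(Z^*(\tau))\ge R_{post}$ (using the transform in Lemma \ref{Lemma 3}) makes the terminal Legendre bound tight as well. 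I would then invoke Lemma \ref{Lemma 4} to obtain a portfolio $\pi^*$ that finances this consumption stream and delivers exactly the prescribed terminal wealth while keeping $X^*$ above the lower floor, provided the budget constraint (\ref{Equation 6}) holds with equality.

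The hard part -- and the step I expect to be the main obstacle -- is admissibility together with complementary slackness. Lemma \ref{Lemma 4} only guarantees $X^*(t)\ge\frac{d-w\bar L}{r}$, whereas $(P_\tau)$ requires the strictly higher path-wise floor $X^*(t)\ge R_{pre}$ of (\ref{Equation 10}); simultaneously the slack term $\lambda^*\mathbb E\left[\int_0^\tau R_{pre}H(t)\,dD^*(t)\right]$ must disappear in order for the Lagrangian gap to close. Both facts hinge on the first-order characterisation of the singular minimiser $D^*$ of $(S_\tau)$: perturbing $D^*$ by an admissible non-increasing variation and using the convexity of $\tilde u$ and $\tilde U$ yields the flat-off (reflection) condition that $dD^*(t)$ charges only the set on which the reconstructed wealth attains the floor, i.e. $\int_0^\tau\left(X^*(t)-R_{pre}\right)H(t)\,dD^*(t)=0$. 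This is precisely the He--Pag\`es/Karatzas analysis of the deflator $D$ \cite{he1993labor,karatzas2000utility} transcribed to the present model, and making the variational argument rigorous for this class of non-increasing processes is where the real work lies.

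With admissibility and complementary slackness established, every inequality used to derive the Lagrangian bound becomes an equality along the candidate, so $J(x;c^*,\pi^*,l^*,\tau)=\tilde V_\tau(\lambda^*)+\lambda^* x$. Combined with weak duality this forces $V_\tau(x)=\inf_{\lambda>0}[\tilde V_\tau(\lambda)+\lambda x]$, the infimum being attained at $\lambda^*$, and identifies $(c^*,\pi^*,l^*)$ as optimal for $(P_\tau)$. I would close by noting that $\lambda^*$ is pinned down by the binding budget constraint, equivalently by the stationarity condition $\left.\frac{d}{d\lambda}[\tilde V_\tau(\lambda)+\lambda x]\right|_{\lambda=\lambda^*}=0$, which also ensures that the value $-\tilde U'(Z^*(\tau))$ is consistent with the terminal wealth actually delivered by $\pi^*$.
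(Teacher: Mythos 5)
Your outline follows the same route the paper itself takes: the paper does not prove this theorem in-text but cites \cite[Theorem 4.1]{2021arXiv210615426D}, and the template of that argument is displayed inside the paper for the post-retirement analogue (Theorem \ref{Theorem 1}, proved in Appendix \ref{Appendix 1}): the He--Pag\`es perturbation analysis of the deflator $D^{*}$, Fatou's lemma, a Snell-envelope financing lemma, and a closing complementary-slackness step. So methodologically you are aligned with the paper, and your weak-duality half and the reconstruction $c^{*}+wl^{*}=-\tilde{u}^{\prime}(Z^{*})$, $X(\tau)=-\tilde{U}^{\prime}(Z^{*}(\tau))$ are correct as stated.

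The genuine gap is in your financing/admissibility step, and it is not merely the ``rigor'' you defer. You propose to finance the candidate via Lemma \ref{Lemma 4} and then let the flat-off identity $\int_{0}^{\tau}(X^{*}(t)-R_{pre})H(t)\,dD^{*}(t)=0$ repair the floor; that is backwards. Lemma \ref{Lemma 4} is the tool for the \emph{unconstrained} case and can only deliver $X^{*}(t)\ge\frac{d-w\bar{L}}{r}$, while the flat-off identity constrains $X^{*}$ only on the support of $dD^{*}$ and says nothing where $D^{*}$ is flat, so the two together do not yield $X^{*}(t)\ge R_{pre}$ on all of $[0,\tau]$. What the paper's cited proof actually does (mirroring part (1) of Appendix \ref{Appendix 1}) is: (i) perturb one-sidedly, $D^{\epsilon}(t)=D^{*}(t)+\epsilon\,\mathbb{I}_{[0,\sigma)}(t)$ for an \emph{arbitrary} stopping time $\sigma\le\tau$, and use optimality of $D^{*}$, Fatou, and $\tilde{u}^{\prime}=-I$ to obtain a whole family of budget inequalities indexed by $\sigma$ in which the constant $R_{pre}$ appears; (ii) prove an $R_{pre}$-analogue of Lemma \ref{Lemma Appendix 1}, where the Snell envelope of the cumulative-cost process builds the floor $R_{pre}$ directly into the financed wealth. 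Without (ii) your candidate is not shown admissible for $(P_{\tau})$ and the equality chain collapses. Two smaller remarks: the complementary slackness actually needed is only the integrated identity produced by the two-sided proportional perturbation $(1\pm\epsilon)D^{*}$ --- weaker than your pathwise claim, and the paper's argument then concludes through an auxiliary relaxed problem (the analogue of $(P_{\scriptscriptstyle PR}^{\prime})$) rather than by literally closing the Lagrangian chain; and your candidate's terminal wealth is checked only against $R_{post}$, whereas the admissible set requires $X(\tau)\ge R_{pre}\vee R_{post}$.
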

	
	\begin{proof}
	    See \cite[Theorem 4.1]{2021arXiv210615426D}.
	\end{proof}
	
	\noindent This duality theorem allows us to link Problem $(P)$ with the shadow price problem through
	\begin{equation*}
		V(x)=\sup_{\tau\in\mathcal{T}}V_{\tau}(x)=\sup_{\tau\in\mathcal{T}}\inf_{\lambda>0}[\tilde{V}_{\tau}(\lambda)+\lambda x]\leq\inf_{\lambda>0}\sup_{\tau\in\mathcal{T}}[\tilde{V}_{\tau}(\lambda)+\lambda x]=\inf_{\lambda>0}[\sup_{\tau\in\mathcal{T}}\tilde{V}_{\tau}(\lambda)+\lambda x].
	\end{equation*}
	Defining
	$\tilde{V}(\lambda)\triangleq\sup\limits_{\tau\in\mathcal{T}}\tilde{V}_{\tau}(\lambda)$, \cite[Section 8, Theorem 8.5]{karatzas2000utility} indicates that the last inequality takes the equal sign with the condition that $\tilde{V}(\lambda)$ exists and is differentiable for any $\lambda> 0$. Thereafter, the objective optimization problem can be divided into two parts:
	\begin{equation*}
	\begin{cases}
	    \tilde{V}(\lambda)=\sup\limits_{\tau \in S}\tilde{V}_{\tau}(\lambda),\\
	    V(x)=\inf\limits_{\lambda>0} [\tilde{V}(\lambda)+\lambda x]\triangleq\tilde{V}(\lambda^{*})+\lambda^{*} x.
	\end{cases}
	\end{equation*}
	We now consider the technique of \cite{davis1990portfolio} and insert an assumption on the process $D(t)$ for acquiring a closed-form solution.
	\begin{assumption}
		The non-increasing process $D(t)$ is absolutely continuous with respect to $t$. Hence, there is a non-negative process $\psi(t)$ such that $d D(t)=-\psi(t)D(t)d t$.
	\end{assumption}

	Then, by means of a new defined process $Z(t)\triangleq\lambda D(t)e^{\gamma t}H(t)$, the value function of the individual's shadow price problem can be written as
	\begin{equation*}
		\tilde{V}_{\tau}(\lambda)=\inf_{\psi(t)\ge 0}\mathbb{E}\left[\int_{0}^{\tau} e^{-\gamma t} \left(\tilde{u}(Z(t))-(d-w\bar{L})Z(t)-R_{pre}\psi(t)Z(t)\right)d t+e^{-\gamma \tau}\tilde{U}(Z(\tau))\right],
	\end{equation*}
	where $\psi(t)$ is the control variable, and $Z(t)$ is the state variable. Introducing a generalized problem
	\begin{equation*}
		\phi(t,z)\!\triangleq\!\sup_{\tau\ge t}\! \inf_{\psi(t)\ge 0}\! \mathbb{E}\!\left[\!\left.\int_{t}^{\tau}\!\!\!\!\!e^{\!-\!\gamma s}\!\! \left(\tilde{u}(Z(s))\!-\!(d\!-\!w\bar{L})Z(s)\!-\!R_{pre}\psi(s)Z(s)\right)d s\!+\!e^{\!-\!\gamma \tau}\!\tilde{U}(Z(\tau))\right|\!Z(t)\!=\!z\!\right],
	\end{equation*}
	the solution of $\tilde{V}(\lambda)$ is turned to $\phi(t,z)$ with $\tilde{V}(\lambda)\!=\!\phi(0,\lambda)$. We first handle the infimum part by defining
	\begin{equation*}
		\phi_{\scriptscriptstyle inf}(t,z)\!\triangleq\!\inf_{\psi(t)>0} \mathbb{E}\!\left[\!\left.\int_{t}^{\tau} \!\!\!\!\!e^{\!-\!\gamma s}\! \left(\tilde{u}(Z(s))\!-\!(d\!-\!w\bar{L})Z(s)\!-\!R_{pre}\psi(s)Z(s)\right)\!d s\!+\!e^{\!-\!\gamma \tau}\!\tilde{U}(Z(\tau))\right|\!Z(t)\!=\!z\right].
	\end{equation*}
	The corresponding Bellman equation is
	\begin{equation*}
		\min_{\psi\ge 0} \left\{\mathcal{L}\phi_{\scriptscriptstyle inf}(t,z)+e^{-\gamma t}\left(\tilde{u}(z)-(d-w\bar{L})z\right)-\psi z\left[\frac{\partial \phi_{\scriptscriptstyle inf}}{\partial z}(t,z)+R_{pre}e^{-\gamma t}\right]\right\}=0.
	\end{equation*}
	The optimum $\psi^{*}$ has the following characterization,
	\begin{itemize}
	\item \!$\frac{\partial \phi_{\scriptscriptstyle inf}}{\partial z}(t,z)\!+\!R_{pre}e^{\!-\!\gamma t}\!=\!0 \Rightarrow \psi^{*}\!\!\ge\! 0$ and $\frac{\partial \phi}{\partial z}(t,z)=\frac{\partial \phi_{\scriptscriptstyle inf}}{\partial z}(t,z)=-R_{pre}e^{-\gamma t},\quad z\ge\hat{z}$.
	\item \!$\frac{\partial \phi_{\scriptscriptstyle inf}}{\partial z}(t,z)\!+\!R_{pre}e^{\!-\!\gamma t}\!\leq \!0 \Rightarrow \psi^{*}\!\!=\!0$, then $\phi(t,z)$ switches to a pure optimal stopping time problem,
	\begin{equation*}
	    \phi(t,z)\!=\! \sup_{\tau\ge t} \mathbb{E}\!\left[\!\left.\int_{t}^{\tau}\! \!e^{\!-\!\gamma s} \left(\tilde{u}(Z(s))\!-\!(d\!-\!w\bar{L})Z(s)\right)d s\!+\!e^{\!-\!\gamma \tau}\tilde{U}(Z(\tau))\right|Z(t)\!=\!z\right],
	\end{equation*} 
	which has the same form as (\ref{Equation 8}) but applies to the interval $0<z<\hat{z}$.
	\end{itemize}
Lemma \ref{Lemma 5} can be easily extended also in this case, therefore the optimal retirement time is the first time the process $Z^*(t)$ touches the lower barrier $\bar{z}$. Therefore, we need to compare the value of $\bar{z}$ and $\hat{z}$, and split the discussion into two cases: the first one is $\bar{z}<\hat{z}$, which corresponds to the case where the liquidity constraint boundary, $R_{pre}$,  is lower than the retirement threshold. 

	\noindent \textbf{Variational Inequalities assuming $\bar{z}<\hat{z}$:} 
	Find the free boundaries $\bar{z}>0$ (retirement), $\hat{z}>0$ ($R_{pre}$-wealth level), and a function ${\phi}(\cdot,\cdot)\in C^{1}((0,\infty)\times\mathbb{R}^{+})\cap C^{2}((0,\infty)\times\mathbb{R}^{+}\setminus\{\bar{z}\})$ satisfying
	\begin{equation}
		\begin{cases}
			
			(V1) \quad\frac{\partial {\phi}}{\partial z}(t,z)+R_{pre}e^{-\gamma t}=0, & z\ge\hat{z},\\
			(V2) \quad \frac{\partial {\phi}}{\partial z}(t,z)+R_{pre}e^{-\gamma t}\leq 0, & 0<z<\hat{z},\\
			(V3)\quad \mathcal{L}{\phi}(t,z)+e^{-\gamma t}\left(\tilde{u}(z)-(d-w\bar{L})z\right)=0, & \bar{z}<z<\hat{z},\\
			(V4)\quad \mathcal{L}{\phi}(t,z)+e^{-\gamma t}\left(\tilde{u}(z)-(d-w\bar{L})z\right)\leq 0, & 0<z\leq\bar{z},\\
			(V5) \quad {\phi}(t,z)\ge e^{-\gamma t}\tilde{U}(z), & \bar{z}<z<\hat{z},\\
			(V6) \quad {\phi}(t,z)= e^{-\gamma t}\tilde{U}(z), & 0<z\leq\bar{z},\label{Equation 12}
		\end{cases}
	\end{equation}
	for any $t\ge 0$, with the smooth fit conditions
	\begin{equation*}
	   \frac{\partial {\phi}}{\partial z}(t,\hat{z})=-R_{pre}e^{-\gamma t},\quad \frac{\partial^{2}{\phi}}{\partial z^{2}}(t,\hat{z})=0, \quad {\phi}(t,\bar{z})=e^{-\gamma t}\tilde{U}(\bar{z}),\quad \mbox{and} \quad \frac{\partial {\phi}}{\partial z}(t,\bar{z})=e^{-\gamma t}\tilde{U}^{\prime}(\bar{z}). 
	\end{equation*}
	The analytical solution of the variational equation (\ref{Equation 12}) is reported in Online Appendix \ref{Appendix 8}. Once $\phi$ and $\bar{z}$ are computed, the value function and the optimal retirement decision can be recovered as in Section \ref{sub1}. The optimal strategies are reported at the end of the Online Appendix \ref{Appendix 8}.

If the first case does not admit a solution, that is, the liquidity constraint boundary $R_{pre}$ is high enough (and larger than $R_{post}$) to make the agent declare retirement at time 0 for any admissible initial wealth, we deal with an immediate retirement, and therefore $V(x)=U(x)$, and all the optimal strategies are the ones of the post-retirement problem. 

	\section{Numerical Analysis}
	 
	 We now perform the sensitivity analysis to the liquidity constraint boundaries. All the input parameters are reported in Table \ref{Table 1}. We change the values of $R_{pre}$, $R_{post}$ and keep all other input parameters consistent with Table \ref{Table 1} to discover the different convergence phenomena of retirement wealth threshold concerning the pre- and post-retirement liquidity constraints.
	\begin{table}[t]
	    \centering
		\renewcommand\arraystretch{1.2}
		\caption{Input Parameters}\label{Table 1}
		\begin{tabular}{c c c c c c c c c c c c } \hline
			\textbf{ $\delta$} & \textbf{ $k$ } & \textbf{ $r$ } & \textbf{$\mu$} &\textbf{$\sigma$} & \textbf{ $\gamma$ } & \textbf{ $d$ } & \textbf{ $w$ } & \textbf{ $R_{pre}$ }  & \textbf{ $R_{post}$ }& \textbf{ $\bar{L}$ } & \textbf{ $L$ } \\ \hline
			0.6  & 3 & 0.02 & 0.07 & 0.15 & 0.1 & 0.3 & 1.5 & 0 & 15 & 1 & 0.8\\ \hline
		\end{tabular}
	\end{table}
	
	   Figure \ref{Figure 2} shows that the retirement wealth threshold $\bar{x}$ is a decreasing function of $R_{pre}$ due to the fact that the agent with higher $R_{pre}$ values prefers to set a lower wealth threshold to make sure entering in retirement ahead of schedule such that getting rid of the restriction caused by $R_{pre}$. Whereas, the critical wealth level of retirement is increasing with respect to $R_{post}$. Since the pre-retirement restriction keeps constant, a higher value of $R_{post}$, which implies a more rigorous circumstance for the post-retirement period, impels the agent to step into retirement with a higher wealth level.
	 
	 \begin{figure}[t]
		\centering
		\caption{Convergence w.r.t. Liquidity Constraint Boundary of Pre- and Post-Retirement Part}\label{Figure 2}
		\includegraphics[width=1\linewidth,height=0.3\textheight]{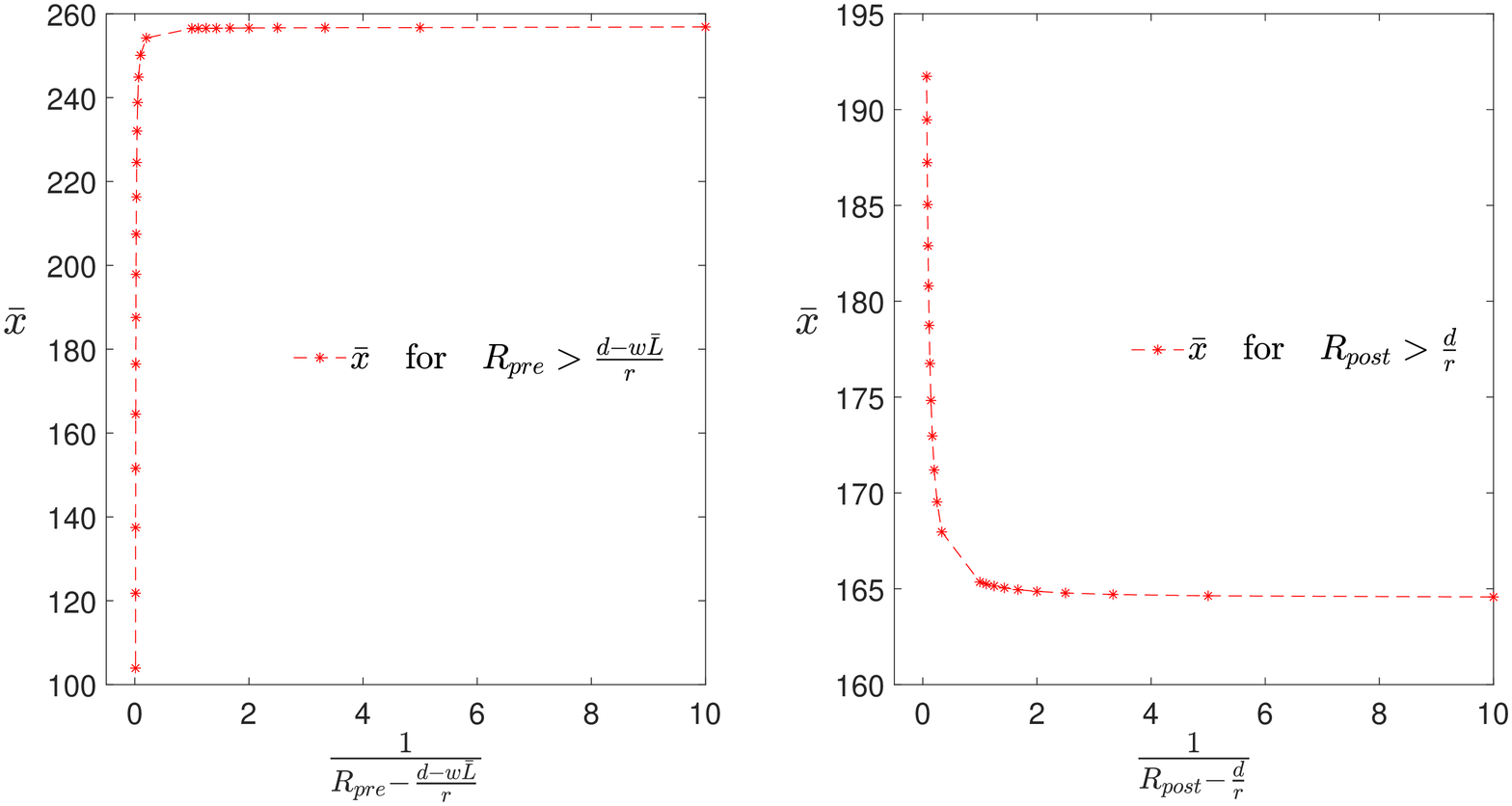}
	 \end{figure}
	 
	 Moreover, we provide figures to illustrate the sensitivity of optimal consumption, portfolio and leisure fractions in terms of $x-\frac{d}{r}$ with respect to different values of $R_{pre}$ and $R_{post}$. We begin this kind of analysis fixing the value of $R_{post}$ and arranging three values to $R_{pre}$. The optimal control strategies for different cases are presented in Proposition \ref{Proposition 1}, Proposition \ref{Proposition Appendix 5} and Proposition \ref{Proposition Appendix 6} in the online appendix.
	\begin{figure}[t]
		\centering
		\caption{Optimal Control Fractions w.r.t. Liquidity Constraint Boundary of Pre-Retirement Part}\label{Figure 5}
		\includegraphics[width=1\linewidth,height=0.35\textheight]{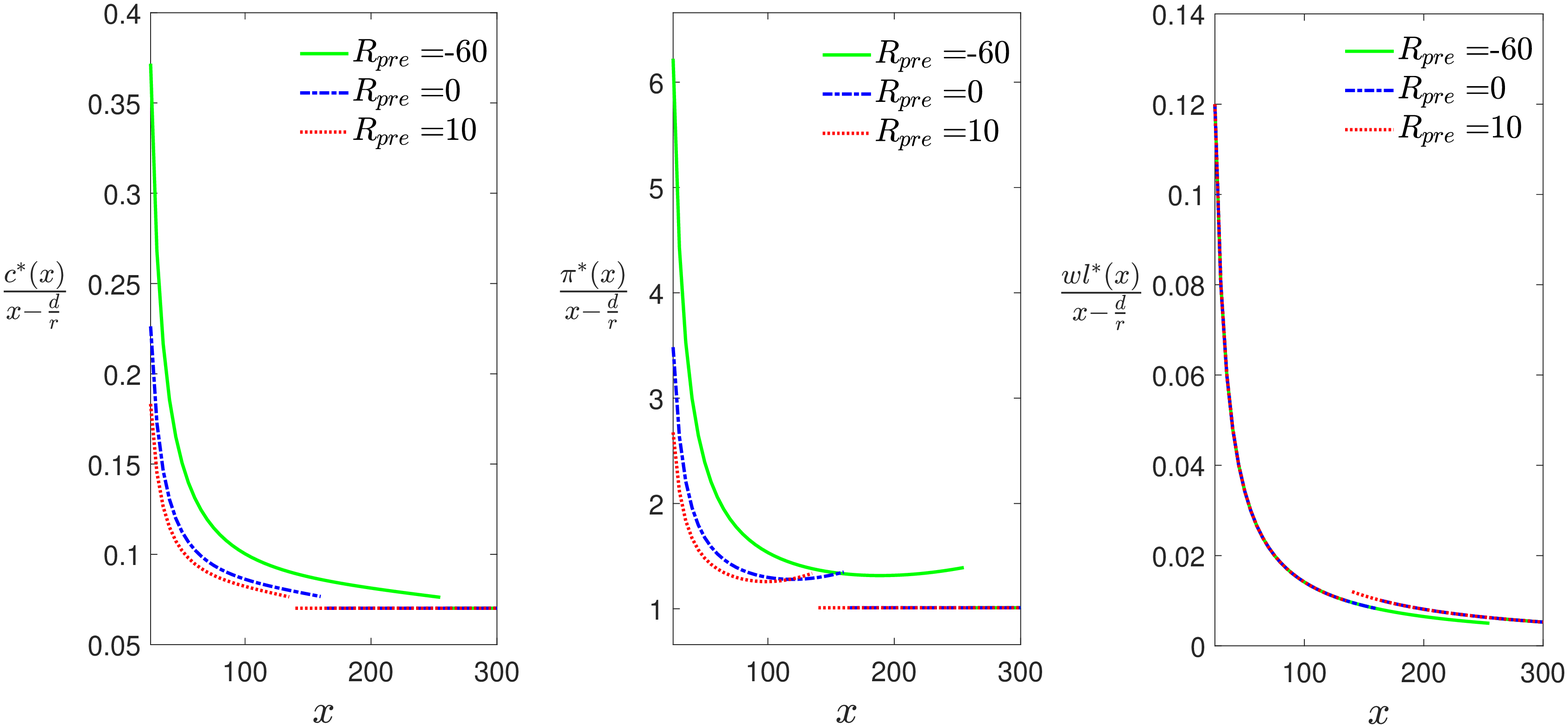}
	\end{figure}
	\noindent In Figure \ref{Figure 5}, $R_{post}$ is set equal to $\frac{d}{r}=15$, which implies that the post-retirement part is not restricted by the liquidity constraint. We can observe that the optimal consumption and portfolio fractions suffer a downward jump for various $R_{pre}$ values. This is due to the discontinuity of the leisure rate at the retirement time, which leads to a shrinkage of labour income and reduces the resources allocated to the consumption and investment. In fact, if $x>256.6913$ $(R_{pre}=-60)$, $x>164.5320$ $(R_{pre}=0)$, $x>137.4776$ $(R_{pre}=10)$, i.e., the initial wealth is larger than the retirement threshold $\bar{x}$, the agent is facing the post-retirement region, with $l^{*}(t)=\bar{L}=1$ (full leisure). In addition, it should be noted that for different $R_{pre}$ values, the jump happens at different wealth levels. As also shown in the left plot of Figure \ref{Figure 2}, the agent with a higher $R_{pre}$ value experiences the jump at a lower wealth threshold of retirement $\bar{x}$. Moreover, since the value of $R_{post}$ keeps identical, the optimal consumption and portfolio fractions of different curves are coincident for the post-retirement part and equal to a constant, in line with the Merton classical problem.
	
	Then we conduct a similar sensitivity analysis with respect to $R_{post}$. Figure \ref{Figure 6} shows that the retirement threshold is increasing with the value of $R_{post}$ ($\bar{x}=164.5320$ for $R_{post}=15$, $\bar{x}=171.1993$ for $R_{post}=20$, $\bar{x}=180.7943$ for $R_{post}=25$), in line with the right plot of Figure \ref{Figure 2}, and describes that the optimal control fractions for the post-retirement part of blue dashed and red dotted curves, whose $R_{post}$ values are greater than the boundary $\frac{d}{r}$, i.e., the liquidity constraints impose restrictions on optimal solutions, converge to the ones of the green curve ( $R_{post}=\frac{d}{r}$) as $x$ increases. It can be explained by the fact that the liquidity constraint plays a slighter role as the wealth becomes comparably larger and imposes a weaker restriction on the admissible control set. Moreover, we also notice that a high liquidity constraint for the post-retirement part induces the agent to take a large risk (high value of $\pi^*$) when the retirement threshold is close. 	
	\begin{figure}[t]
		\centering
		\caption{Optimal Control Fractions w.r.t. Liquidity Constraint Boundary of Post-Retirement Part}\label{Figure 6}
		\includegraphics[width=1\linewidth,height=0.35\textheight]{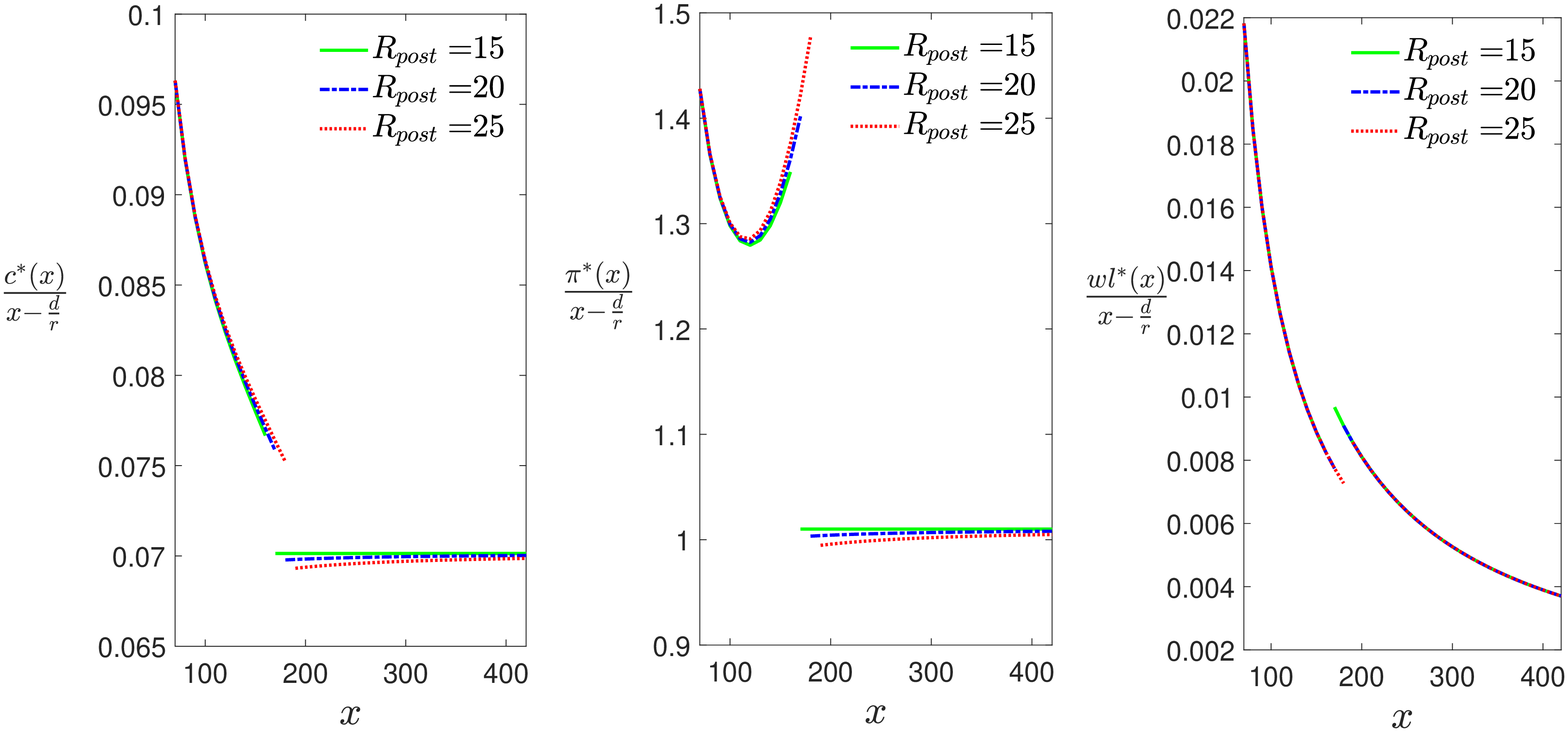}
	\end{figure} 
	
	 Finally, we conduct the sensitivity analysis of optimal control strategies to both the liquidity constraint boundary and the retirement option. In Figure \ref{Figure 3}, we fix the value of $R_{post}$ to $\frac{d}{r}$ and plot the curves of optimal consumption and portfolio fractions in terms of $x-\frac{d}{r}$ under different situations. The dashed lines represent the optimal control fractions of different $R_{pre}$ values with retirement option, while the solid lines represent the corresponding optimal control fractions without retirement option (and therefore with fixed liquidity constraint $R=R_{pre}$).\footnote{The theoretical solutions of optimal consumption-portfolio problem without retirement comes from \cite[Section 5]{2021arXiv210615426D} by replacing the liquidity constraint boundary $F+\eta$ with $R_{pre}$.}  From all the dashed lines, we can see that the optimal consumption and portfolio fractions suffer a downward jump for various $R_{pre}$ values. This is due to the discontinuity of the leisure rate at the retirement time, which leads to a shrinkage of labour income and reduces the resources allocated to the consumption and investment. Comparing the solid and dashed lines with the same colour, the agent with the additional retirement option tends to consume less and invest more in the risky asset for the motivation of arriving at the retirement wealth threshold and enjoying the full leisure rate faster. This kind of difference becomes more significant as the wealth approaches the critical level. Furthermore, the degree of this motivation is related to the liquidity constraint boundary. Observing the convexity of the pre-retirement part of different dashed lines, the optimal control fraction with a higher $R_{pre}$ value takes a larger convexity, which is because stricter liquidity constraints give the agent a stronger motivation to achieve the critical wealth level to get rid of this restriction.

	\begin{figure}[tb]
		\centering
		\caption{Optimal Control Fractions w.r.t. Liquidity Constraint Boundary and Retirement Option.}\label{Figure 3}
		\includegraphics[width=1\linewidth,height=0.3\textheight]{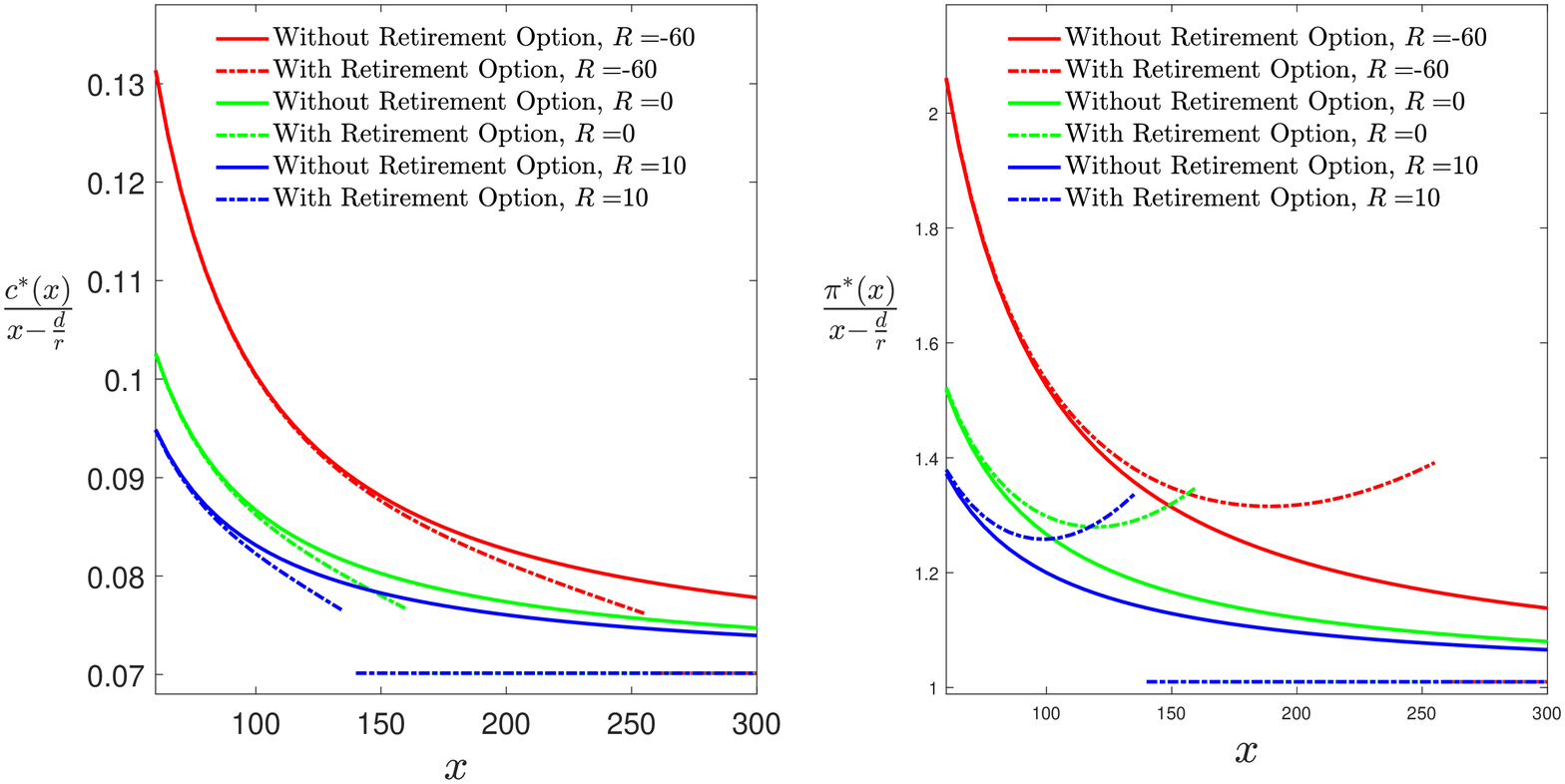}
	 \end{figure}

	 \bibliographystyle{ieeetr}
	 \bibliography{ReferenceOfP2}
%%%%%%%%%%%%%%%%%%%%%%%%%%%%	 
%%%%%%%%%%%%%%%%%%%%%%%%%%%%	 
%%%%%%%%%%%%%%%%%%%%%%%%%%%%	 
%%%%%%%%%%%%%%%%%%%%%%%%%%%%	 
%%%%%%%%%%%%%%%%%%%%%%%%%%%%	 
%%%%%%%%%%%%%%%%%%%%%%%%%%%%	 
%%%%%%%%%%%%%%%%%%%%%%%%%%%%	 
%%%%%%%%%%%%%%%%%%%%%%%%%%%%	 
%%%%%%%%%%%%%%%%%%%%%%%%%%%%	 
%%%%%%%%%%%%%%%%%%%%%%%%%%%%	 
	\clearpage
	 \newpage
	 \setcounter{page}{1}
	 \setcounter{footnote}{0}
\begin{center}
	 \title{\bf \Large Sensitivity of Optimal Retirement Problem to Liquidity Constraints - Online Appendix }
	 
	\author{\sffamily Guodong Ding$^{1}$, Daniele Marazzina$^{1,2}$\\
		{\sffamily\small $^1$ Department of Mathematics, Politecnico di Milano}\\{\sffamily \small Piazza Leonardo da Vinci 32, I-20133, Milano, Italy}\\
		{\sffamily\small $^2$ Corresponding Author, daniele.marazzina@polimi.it}}
	 \end{center}
	 
	 \appendix 

	\section{Post Retirement Part}\label{PR_part}
	
	 Assuming $\tau=0^{-}$, we deal with the post-retirement problem, which is an infinite-time optimization problem with two control variables, the consumption and portfolio processes. Introducing $u_{\scriptscriptstyle PR}(c)\triangleq u(c,\bar{L})=\frac{c^{\delta(1-k)}\bar{L}^{(1-\delta)(1-k)}}{\delta(1-k)}$, the corresponding value function, denoted as $(P_{\scriptscriptstyle PR})$, is
	\begin{equation*}
	    V_{\scriptscriptstyle PR}(x)\triangleq\sup_{\{c(t),\pi(t)\}\in\mathcal{A}_{\scriptscriptstyle PR}(x)}J_{\scriptscriptstyle PR}(x;c,\pi). \tag{$P_{\scriptscriptstyle PR}$}
	\end{equation*}
	 
	 \noindent The admissible control set $\mathcal{A}_{\scriptscriptstyle PR}(x)$ takes the compatible definition with $\mathcal{A}(x)$, except that the condition for stopping time is abolished, and the condition for liquidity constraint is given by $X(t)\!\ge\! R_{post}$, a.s., $\forall t\!\ge\! 0$. Then we derive the derivative function
	 $u_{\scriptscriptstyle PR}^{\prime}(c)=c^{\delta(1-k)-1}\bar{L}^{(1-k)(1-\delta)}$,
	 which is positive and strictly decreasing and has the inverse function 
	 $I_{\scriptscriptstyle PR}(z)\triangleq z^{\frac{1}{\delta(1-k)-1}}\bar{L}^{\frac{(1-k)(1-\delta)}{1-\delta(1-k)}}$. Furthermore, referring to \cite[Section 3, Definition 4.2]{karatzas1998methods}, we define the Legendre-Fenchel transform of $u_{\scriptscriptstyle PR}(z)$ as
	 $\tilde{u}_{\scriptscriptstyle PR}(z)\triangleq\sup\limits_{c\ge 0}\left[u_{\scriptscriptstyle PR}(c)-c z\right]$, which has the explicit expression
	 \begin{equation*}
	     \tilde{u}_{\scriptscriptstyle PR}(z)=u_{\scriptscriptstyle PR}(I_{\scriptscriptstyle PR}(z))-z I_{\scriptscriptstyle PR}(z)=\frac{1-\delta(1-k)}{\delta(1-k)}z^{\frac{\delta(1-k)}{\delta(1-k)-1}}\bar{L}^{\frac{(1-k)(1-\delta)}{1-\delta(1-k)}}.
	 \end{equation*}
	 
	 \begin{proposition}
	 The infinite horizon budget constraint of the post-retirement problem is
	 \begin{equation}
	     \mathbb{E}\left[\int_{0}^{\infty}H(t)(c(t)+d)d t\right]\leq x.\label{Equation 1}
	\end{equation}
	\end{proposition}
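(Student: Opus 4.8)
The plan is to derive (\ref{Equation 1}) from the martingale structure encoded in the state-price density, by tracking the dynamics of the discounted wealth process $H(t)X(t)$. The guiding principle is that $H(t)$ is constructed precisely so that the portfolio term in the wealth dynamics is killed when one discounts by $H$, leaving only the consumption-and-debt flow.

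First I would record the two relevant dynamics. In the post-retirement regime $l(t)\equiv\bar{L}$, so the wealth equation reduces to $dX(t)=\left[rX(t)+\pi(t)(\mu-r)-c(t)-d\right]dt+\sigma\pi(t)dB(t)$. For the state-price density, writing $H(t)=e^{-rt}\tilde{Z}(t)$ with $\tilde{Z}(t)=e^{-\frac{\theta^{2}}{2}t-\theta B(t)}$, Itô's formula gives $dH(t)=-H(t)\left[r\,dt+\theta\,dB(t)\right]$. Applying the product rule, $d(H(t)X(t))=H(t)\,dX(t)+X(t)\,dH(t)+d\langle H,X\rangle(t)$, and collecting the drift I would use the two key cancellations: the term $rX(t)H(t)$ cancels against $-rX(t)H(t)$, and the portfolio contribution $H(t)\pi(t)(\mu-r)$ cancels against the cross-variation term $-H(t)\theta\sigma\pi(t)$ because $\theta\sigma=\mu-r$ by the definition of the market price of risk. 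What survives is $d(H(t)X(t))=-H(t)(c(t)+d)\,dt+H(t)(\sigma\pi(t)-\theta X(t))\,dB(t)$. Integrating from $0$ to $T$ and using $H(0)=1$, $X(0)=x$, I obtain the identity $H(T)X(T)+\int_{0}^{T}H(t)(c(t)+d)\,dt=x+\int_{0}^{T}H(t)(\sigma\pi(t)-\theta X(t))\,dB(t)$.

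The crucial step is to control the stochastic integral and then pass to the limit $T\to\infty$. I would set $M(t)\triangleq H(t)X(t)+\int_{0}^{t}H(s)(c(s)+d)\,ds$, which is a local martingale by the computation above. Because admissibility enforces $X(t)\ge R_{post}>0$, and both $H(t)$ and the running integral $\int_{0}^{t}H(s)(c(s)+d)\,ds$ are nonnegative, we have $M(t)\ge 0$ for all $t$; a nonnegative local martingale is a supermartingale (by Fatou's lemma), whence $\mathbb{E}[M(T)]\le M(0)=x$ for every finite $T$. Discarding the nonnegative term $\mathbb{E}[H(T)X(T)]$ yields $\mathbb{E}\left[\int_{0}^{T}H(t)(c(t)+d)\,dt\right]\le x$ uniformly in $T$, and since the integrand is nonnegative, monotone convergence lets me send $T\to\infty$ to conclude (\ref{Equation 1}).

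The main obstacle I anticipate is the supermartingale/transversality argument rather than the Itô calculation, which is routine. The passage from a \emph{local} martingale to a genuine supermartingale rests entirely on the lower bound $X(t)\ge R_{post}$ guaranteed by the liquidity constraint: without a lower bound on wealth the stochastic integral need not have nonpositive expectation and the inequality could fail. This is exactly why the admissibility requirement $R_{post}\ge\frac{d}{r}$, ensuring a strictly positive floor for the wealth process, is indispensable at this point.
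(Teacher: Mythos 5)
Your proof is correct and follows essentially the same route as the paper: the paper disposes of this proposition by citing Proposition 3.1 of the companion paper (arXiv:2106.15426), with $l(t)$ replaced by $\bar{L}$ and the debt term $d$ inserted, and the argument behind that citation is precisely the standard one you carry out --- It\^o on $H(t)X(t)$, the local-martingale $M(t)=H(t)X(t)+\int_0^t H(s)(c(s)+d)\,ds$ made a supermartingale via the nonnegativity supplied by the liquidity constraint $X(t)\ge R_{post}\ge \frac{d}{r}$, and monotone convergence as $T\to\infty$. The only cosmetic remark is that strict positivity of the wealth floor is not needed: $X(t)\ge R_{post}\ge 0$ already makes $M(t)$ nonnegative, which is all the Fatou argument requires.
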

	\begin{proof}
	The proof can be accomplished directly by replacing $l(t)$ in \cite[Proposition 3.1]{2021arXiv210615426D} with the constant $\bar{L}$, meanwhile inserting a constant term $d$ in the integral.
	\end{proof}
	
	 Depending on the value of $R_{post}$, the solution of Problem $(P_{\scriptscriptstyle PR})$ is divided into two different cases. One is $ R_{post}=\frac{d}{r}$, in which the liquidity constraint has no restriction on the optimization, and the other is $R_{post}> \frac{d}{r}$, with the optimal solution being binded by the liquidity constraint.
	
	 As in \cite[Chapter 3, Example 9.22]{karatzas1998methods}, the optimal wealth process under the condition $ R_{post}\!=\!\frac{d}{r}$ is
	 $X^{*}(t)\!=\!\left(x\!-\!\frac{d}{r}\right)e^{\frac{1}{1\!-\!\delta(1\!-\!k)}\left(r\!-\!\gamma\!+\!\frac{\theta^{2}}{2}\right)t+\frac{\theta}{1\!-\!\delta(1\!-\!k)} B(t)}\!+\!\frac{d}{r}$. The optimal consumption-portfolio polices are $c^{*}(t)\!=\!\frac{1}{K_{1}}\!\!\left[\!X^{*}(t)\!-\!\frac{d}{r}\!\right]$ and $\pi^{*}(t)\!=\!\frac{\theta}{\sigma(1\!-\!\delta(1\!-\!k))}\!\!\left[\!X^{*}(t)\!-\!\frac{d}{r}\!\right]$, with $K_{1}\triangleq\frac{1-\delta(1-k)}{\gamma-r\delta(1-k)-\frac{\theta^{2}}{2}\frac{\delta(1-k)}{1-\delta(1-k)}}>0$. And the value function of Problem $(P_{\scriptscriptstyle PR})$ can be obtained as
	\begin{equation}
	    V_{\scriptscriptstyle PR}(x)=\left(x-\frac{d}{r}\right)^{\delta(1-k)}K_{1}^{1-\delta(1-k)}\frac{\bar{L}^{(1-k)(1-\delta)}}{\delta(1-k)}.\label{Equation 5}
	\end{equation}
	\begin{remark}
	Analogous to the solution of the Merton problem, under the infinite time horizon, the optimal fraction invested in the risky asset in terms of the wealth minus the debt, i.e., $\frac{\pi^{*}(t)}{X^{*}(t)-\frac{d}{r}}$ keeps constant as $-\frac{\theta}{\sigma(\delta(1-k)-1)}=\frac{\mu-r}{\sigma^{2}(1-\delta(1-k))}$, and the optimal fractional consumption $\frac{c^{*}(t)}{X^{*}(t)-\frac{d}{r}}$  takes a constant ratio as $\frac{1}{K_{1}}$.
	\end{remark}
	
	 Hereafter, we impose a stricter liquidity constraint on the wealth process, $X(t)\ge R_{post}>\frac{d}{r}$. The following proposition provides the expectation form of the liquidity constraint, which will be accessible to deduce the duality problem subsequently.
	\begin{proposition}
	The infinite horizon liquidity constraint of the post-retirement problem is
	\begin{equation}
	    \mathbb{E}\left[\left.\int_{t}^{\infty}\frac{H(s)}{H(t)}(c(s)+d)d s\right|\mathcal{F}_{t}\right]\ge R_{post}.\label{Equation 2}
	\end{equation}
	\end{proposition}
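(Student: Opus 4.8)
The plan is to derive the forward budget equation for the post-retirement wealth process and then read the liquidity constraint off it. Since the agent enjoys full leisure $l\equiv\bar{L}$ after retirement, the wage term $w(\bar{L}-l(t))$ disappears and the dynamics reduce to $dX(t)=[rX(t)+\pi(t)(\mu-r)-c(t)-d]\,dt+\sigma\pi(t)\,dB(t)$. First I would apply It\^o's formula to the product $H(t)X(t)$, using $dH(t)=-H(t)(r\,dt+\theta\,dB(t))$ together with the identity $\theta\sigma=\mu-r$. A direct computation shows that the drift of $H(t)X(t)$ collapses to $-H(t)(c(t)+d)$ and that the diffusion term equals $H(t)(\sigma\pi(t)-\theta X(t))\,dB(t)$; hence
\[
G(t)\triangleq H(t)X(t)+\int_{0}^{t}H(s)(c(s)+d)\,ds
\]
is a local martingale.

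Next I would upgrade this to a representation of the wealth itself. Because the liquidity constraint forces $X(t)\ge R_{post}>0$ and $c(t)+d>0$, the process $G(t)$ is nonnegative, hence a supermartingale, and under the integrability requirements built into the admissible set $\mathcal{A}_{\scriptscriptstyle PR}(x)$ it is in fact a true martingale. Localising and taking conditional expectations then yields, for every $T>t$,
\[
X(t)=\mathbb{E}\!\left[\frac{H(T)}{H(t)}X(T)+\int_{t}^{T}\frac{H(s)}{H(t)}(c(s)+d)\,ds\ \middle|\ \mathcal{F}_{t}\right].
\]
Letting $T\to\infty$, the integral term converges by monotone convergence (its integrand is nonnegative), while the terminal term $\mathbb{E}[\tfrac{H(T)}{H(t)}X(T)\mid\mathcal{F}_{t}]$ vanishes by the transversality condition attached to the infinite-horizon admissible controls. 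This produces the forward budget equation
\[
X(t)=\mathbb{E}\!\left[\int_{t}^{\infty}\frac{H(s)}{H(t)}(c(s)+d)\,ds\ \middle|\ \mathcal{F}_{t}\right].
\]

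With this equality in hand the proposition is immediate: the pathwise constraint $X(t)\ge R_{post}$ a.s.\ is equivalent to $\mathbb{E}[\int_{t}^{\infty}\tfrac{H(s)}{H(t)}(c(s)+d)\,ds\mid\mathcal{F}_{t}]\ge R_{post}$, which is exactly \eqref{Equation 2}. Structurally this is the $\tau\to\infty$ specialisation of the pre-retirement statement \eqref{Equation 10}, with the leisure term $wl(s)-w\bar{L}$ suppressed (since $l\equiv\bar{L}$) and the terminal wealth contribution $\tfrac{H(\tau)}{H(t)}X(\tau)$ pushed to zero in the limit, so the whole argument can be imported from \cite[Proposition 4.1]{2021arXiv210615426D} after replacing $l(t)$ by $\bar{L}$ and inserting the constant debt flow $d$.

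The main obstacle is the second step: justifying that the nonnegative local martingale $G(t)$ is a genuine martingale and that the terminal wealth term vanishes in the limit. The supermartingale property alone only delivers the budget inequality \eqref{Equation 1} (one direction), whereas the liquidity constraint needs the forward budget relation to hold with \emph{equality}. This is precisely where the transversality and integrability conditions encoded in the definition of $\mathcal{A}_{\scriptscriptstyle PR}(x)$ must be invoked; absent them the conditional expectation could strictly undershoot $X(t)$, and the claimed equivalence would break down.
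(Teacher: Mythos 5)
Your overall route is the natural one, and it matches the intended provenance of the result: the paper offers no in-text argument for this proposition, deferring entirely to \cite[Proposition 3.2]{2021arXiv210615426D}, and your closing observation that the statement is the $\tau\to\infty$, $l\equiv\bar{L}$ specialisation of \eqref{Equation 10} with the debt flow $d$ inserted is exactly how the paper treats the sibling budget-constraint proposition \eqref{Equation 1}. Your It\^o computation is correct: $G(t)=H(t)X(t)+\int_{0}^{t}H(s)(c(s)+d)\,ds$ is a nonnegative local martingale, hence a supermartingale, and this yields unconditionally
\begin{equation*}
X(t)\;\ge\;\mathbb{E}\left[\left.\int_{t}^{\infty}\frac{H(s)}{H(t)}(c(s)+d)\,ds\right|\mathcal{F}_{t}\right],
\end{equation*}
so that \eqref{Equation 2} \emph{implies} the pathwise constraint $X(t)\ge R_{post}$. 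That direction of your argument is sound.

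The genuine gap is the converse direction, which is where your proof actually lives. The true-martingale property of $G$ and the transversality condition $\mathbb{E}[H(T)X(T)\,|\,\mathcal{F}_{t}]\to 0$ are not ``built into'' $\mathcal{A}_{PR}(x)$: the admissible set imposes only integrability and the pathwise floor, and there are admissible controls for which your forward budget \emph{equality} fails outright. Concretely, take $x>R_{post}>\frac{d}{r}$, $c\equiv 0$, $\pi\equiv 0$: then $X(t)=\left(x-\frac{d}{r}\right)e^{rt}+\frac{d}{r}\ge x\ge R_{post}$ for all $t$, so the control is admissible, yet since $\mathbb{E}[H(s)/H(t)\,|\,\mathcal{F}_{t}]=e^{-r(s-t)}$ the left-hand side of \eqref{Equation 2} equals $\frac{d}{r}<R_{post}$, and indeed $\mathbb{E}[H(T)X(T)]\to x-\frac{d}{r}\neq 0$. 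Hence ``pathwise $\Rightarrow$ \eqref{Equation 2}'' is false on the admissible set as defined, and the two formulations can only be reconciled by an argument you do not supply: for instance, that by monotonicity of $u_{\scriptscriptstyle PR}$ any admissible plan is dominated by one that is exactly financed, i.e.\ with $H(t)X(t)=\mathbb{E}\left[\left.\int_{t}^{\infty}H(s)(c(s)+d)\,ds\right|\mathcal{F}_{t}\right]$, so that restricting to \eqref{Equation 2} loses no value; or, as the paper itself does in the reverse construction, by characterising financeability through the Snell-envelope condition $\sup_{\tau\in\mathcal{T}}\mathbb{E}\left[\int_{0}^{\tau}H(t)(c(t)+d)\,dt\right]\le x-R_{post}$ of Lemma \ref{Lemma Appendix 1} rather than through a conditional-expectation identity. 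Your final paragraph concedes that without transversality ``the claimed equivalence would break down''; flagging the hole is not the same as filling it, and as written the central step of your proof asserts properties that admissibility does not deliver.
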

	\begin{proof}
	See \cite[Proposition 3.2]{2021arXiv210615426D}.
	\end{proof}
	
	 Referring to \cite{karatzas2000utility,he1993labor}, we introduce a real number $\lambda_{\scriptscriptstyle PR}>0$, the Lagrange multiplier, and a non-increasing process $D_{\scriptscriptstyle PR}(t)\ge 0$, then rewrite the post-retirement gain function as
	\begin{equation*}
	\begin{split}
	    J_{\scriptscriptstyle PR}(x;c,\pi)
	    \leq\mathbb{E} & \left[\int_{0}^{\infty}e^{-\gamma t}\left(\tilde{u}_{\scriptscriptstyle PR}(\lambda_{\scriptscriptstyle PR} e^{\gamma t}H(t)D_{\scriptscriptstyle PR}(t))-d\lambda_{\scriptscriptstyle PR} e^{\gamma t}D_{\scriptscriptstyle PR}(t)H(t)\right)d t\right]\\
	    &+\lambda_{\scriptscriptstyle PR} x+\lambda_{\scriptscriptstyle PR}\mathbb{E}\left[\int_{0}^{\infty}R_{post}H(t)d D_{\scriptscriptstyle PR}(t)\right].
	\end{split}
	\end{equation*}
	The derivation of this inequality involves the budget constraint (\ref{Equation 1}) and the liquidity constraint (\ref{Equation 2}). In line with \cite[Section 4]{he1993labor}, the post-retirement individual's dual shadow price problem, labelled $(S_{\scriptscriptstyle PR})$, can be defined as
	\begin{equation*}\tag{$S_{\scriptscriptstyle PR}$}
	\begin{split}
	    \tilde{V}_{\scriptscriptstyle PR}(\lambda_{\scriptscriptstyle PR})\triangleq\inf_{D_{\scriptscriptstyle PR}(t)\in\mathcal{D}} & \mathbb{E}\left[\int_{0}^{\infty}e^{-\gamma t}\left(\tilde{u}_{\scriptscriptstyle PR}(\lambda_{\scriptscriptstyle PR} e^{\gamma t}H(t)D_{\scriptscriptstyle PR}(t))-d\lambda_{\scriptscriptstyle PR} e^{\gamma t}D_{\scriptscriptstyle PR}(t)H(t)\right)d t\right]\\
	    & +\lambda_{\scriptscriptstyle PR}\mathbb{E}\left[\int_{0}^{\infty}R_{post}H(t)d D_{\scriptscriptstyle PR}(t)\right], 
	\end{split}
	\end{equation*} 
	where $\mathcal{D}$ is the set of non-negative, non-increasing and progressively measurable processes. Then the duality between Problem $(P_{\scriptscriptstyle PR})$ and Problem $(S_{\scriptscriptstyle PR})$ is put forward.
	
	\begin{theorem}
	\textbf{(Duality Theorem)}\label{Theorem 1}
	Suppose $D_{\scriptscriptstyle PR}^{*}(t)$ is the optimal solution to the dual shadow price problem $(S_{\scriptscriptstyle PR})$, then $c^{*}(t)=I_{\scriptscriptstyle PR}(\lambda_{\scriptscriptstyle PR}^{*} e^{\gamma t}D_{\scriptscriptstyle PR}^{*}(t)H(t))$ is the optimal consumption solution to the problem $(P_{\scriptscriptstyle PR})$. And we have the relation
	$V_{\scriptscriptstyle PR}(x)\!=\!\inf\limits_{\lambda_{\scriptscriptstyle PR}>0}[\tilde{V}_{\scriptscriptstyle PR}(\lambda_{\scriptscriptstyle PR})\!+\!\lambda_{\scriptscriptstyle PR} x]$, with $\lambda_{\scriptscriptstyle PR}^{*}$ attaining the infimum.
	\end{theorem}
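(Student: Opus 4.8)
The plan is to prove the asserted duality by the Lagrangian/shadow-price method of \cite{karatzas2000utility,he1993labor}, handling the static budget constraint (\ref{Equation 1}) with the scalar multiplier $\lambda_{\scriptscriptstyle PR}$ and the pathwise liquidity constraint (\ref{Equation 2}) with the non-increasing multiplier process $D_{\scriptscriptstyle PR}(t)$ (normalized by $D_{\scriptscriptstyle PR}(0)=1$). The argument splits into a weak-duality inequality valid for every admissible policy, followed by a verification step exhibiting a policy that attains it.

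First I would establish weak duality. For any admissible $(c,\pi)\in\mathcal{A}_{\scriptscriptstyle PR}(x)$, any $\lambda_{\scriptscriptstyle PR}>0$ and any $D_{\scriptscriptstyle PR}\in\mathcal{D}$, set $Z(t)\triangleq\lambda_{\scriptscriptstyle PR}e^{\gamma t}H(t)D_{\scriptscriptstyle PR}(t)$. Fenchel's inequality $u_{\scriptscriptstyle PR}(c)\leq\tilde{u}_{\scriptscriptstyle PR}(Z(t))+c\,Z(t)$, multiplied by $e^{-\gamma t}$ and integrated, gives
$$J_{\scriptscriptstyle PR}(x;c,\pi)\leq\mathbb{E}\left[\int_{0}^{\infty}e^{-\gamma t}\tilde{u}_{\scriptscriptstyle PR}(Z(t))\,dt\right]+\lambda_{\scriptscriptstyle PR}\,\mathbb{E}\left[\int_{0}^{\infty}c(t)H(t)D_{\scriptscriptstyle PR}(t)\,dt\right].$$
The second term I would rewrite using $D_{\scriptscriptstyle PR}(t)=1+\int_{0}^{t}dD_{\scriptscriptstyle PR}(u)$ and Fubini, so that the conditional integral $\mathbb{E}[\int_{u}^{\infty}H(s)(c(s)+d)\,ds\,|\,\mathcal{F}_{u}]$ appears integrated against the non-positive measure $dD_{\scriptscriptstyle PR}$; the liquidity constraint (\ref{Equation 2}) bounds this inner term below by $R_{post}H(u)$, and since $dD_{\scriptscriptstyle PR}\leq0$ the product is bounded above by $R_{post}H(u)\,dD_{\scriptscriptstyle PR}(u)$, while the budget constraint (\ref{Equation 1}) controls the $D_{\scriptscriptstyle PR}(0)=1$ part by $x$. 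Collecting terms reproduces exactly the inequality displayed just before Problem $(S_{\scriptscriptstyle PR})$; taking the infimum over $D_{\scriptscriptstyle PR}\in\mathcal{D}$ and the supremum over admissible policies then yields $V_{\scriptscriptstyle PR}(x)\leq\tilde{V}_{\scriptscriptstyle PR}(\lambda_{\scriptscriptstyle PR})+\lambda_{\scriptscriptstyle PR}x$ for every $\lambda_{\scriptscriptstyle PR}>0$, hence $V_{\scriptscriptstyle PR}(x)\leq\inf_{\lambda_{\scriptscriptstyle PR}>0}[\tilde{V}_{\scriptscriptstyle PR}(\lambda_{\scriptscriptstyle PR})+\lambda_{\scriptscriptstyle PR}x]$.

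For the reverse inequality and optimality I would start from the dual optimizer $D_{\scriptscriptstyle PR}^{*}$ — the reflecting process that keeps $Z^{*}$ from crossing the free boundary $\hat{z}_{\scriptscriptstyle PR}$ of Lemma \ref{Lemma 3} — define the candidate $c^{*}(t)=I_{\scriptscriptstyle PR}(Z^{*}(t))$ with $Z^{*}(t)=\lambda_{\scriptscriptstyle PR}^{*}e^{\gamma t}H(t)D_{\scriptscriptstyle PR}^{*}(t)$, and fix $\lambda_{\scriptscriptstyle PR}^{*}$ by the budget equation $\mathbb{E}[\int_{0}^{\infty}H(t)(c^{*}(t)+d)\,dt]=x$, equivalently $-\tilde{V}_{\scriptscriptstyle PR}'(\lambda_{\scriptscriptstyle PR}^{*})=x$, which is the first-order condition characterizing the (convex) outer infimum. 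The choice $c^{*}=I_{\scriptscriptstyle PR}(Z^{*})$ turns Fenchel's inequality into a pointwise equality; complementary slackness — that $D_{\scriptscriptstyle PR}^{*}$ decreases only where $Z^{*}=\hat{z}_{\scriptscriptstyle PR}$, i.e. where $X^{*}=-\tilde{U}'(Z^{*})=R_{post}$ by the smooth fit of Lemma \ref{Lemma 3}, so that the liquidity constraint binds — turns the constraint manipulation into an equality; and the choice of $\lambda_{\scriptscriptstyle PR}^{*}$ makes the budget bind. Admissibility of $c^{*}$ would then be secured by constructing the financing portfolio $\pi^{*}$ from the martingale representation of $t\mapsto\mathbb{E}[\int_{0}^{\infty}H(s)(c^{*}(s)+d)\,ds\,|\,\mathcal{F}_{t}]$ and checking that the associated wealth satisfies $X^{*}(t)=-\tilde{U}'(Z^{*}(t))\geq R_{post}$ for all $t$.

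Chaining these equalities gives $J_{\scriptscriptstyle PR}(x;c^{*},\pi^{*})=\tilde{V}_{\scriptscriptstyle PR}(\lambda_{\scriptscriptstyle PR}^{*})+\lambda_{\scriptscriptstyle PR}^{*}x$, and together with weak duality this forces $V_{\scriptscriptstyle PR}(x)=\inf_{\lambda_{\scriptscriptstyle PR}>0}[\tilde{V}_{\scriptscriptstyle PR}(\lambda_{\scriptscriptstyle PR})+\lambda_{\scriptscriptstyle PR}x]$, with the infimum attained at $\lambda_{\scriptscriptstyle PR}^{*}$ and $c^{*}$ optimal, as claimed. The main obstacle I anticipate is the verification step rather than weak duality: precisely matching the reflecting dual process $D_{\scriptscriptstyle PR}^{*}$ to the binding of the liquidity constraint (the complementary-slackness identity $X^{*}=R_{post}$ on $\{dD_{\scriptscriptstyle PR}^{*}<0\}$), constructing $\pi^{*}$ and proving $X^{*}\geq R_{post}$ pathwise, and — because the horizon is infinite — discharging the transversality/integrability condition $\lim_{T\to\infty}\mathbb{E}[H(T)X^{*}(T)]=0$ needed to close the budget identity and to justify the interchange of limits and expectations.
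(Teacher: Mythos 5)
Your proposal is correct in outline, and its first half coincides with what the paper leaves implicit: the chain Fenchel inequality $u_{\scriptscriptstyle PR}(c)\le\tilde{u}_{\scriptscriptstyle PR}(Z)+cZ$, Fubini against $dD_{\scriptscriptstyle PR}\le 0$, liquidity constraint (\ref{Equation 2}) and budget constraint (\ref{Equation 1}) is exactly the derivation of the displayed inequality preceding Problem $(S_{\scriptscriptstyle PR})$, hence of weak duality. Your second half, however, takes a genuinely different route from the paper's proof in Appendix \ref{Appendix 1}. You verify optimality by \emph{identifying} $D_{\scriptscriptstyle PR}^{*}$ with the process reflecting $Z^{*}$ at the free boundary $\hat{z}_{\scriptscriptstyle PR}$, imposing complementary slackness ($dD_{\scriptscriptstyle PR}^{*}<0$ only where $X^{*}=R_{post}$), fixing $\lambda_{\scriptscriptstyle PR}^{*}$ by $-\tilde{V}_{\scriptscriptstyle PR}^{\prime}(\lambda_{\scriptscriptstyle PR}^{*})=x$, and chaining equalities in the weak-duality bound. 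The paper never touches this explicit structure: it treats $D_{\scriptscriptstyle PR}^{*}$ as an abstract dual optimizer and extracts what it needs variationally --- perturbing by $D_{\scriptscriptstyle PR}^{*}+\epsilon\mathbb{I}_{[0,\tau)}$ and by $(1+\epsilon)D_{\scriptscriptstyle PR}^{*}$ and applying Fatou's lemma yields, respectively, the stopped budget inequality $\mathbb{E}[\int_{0}^{\tau}H(t)(c^{*}(t)+d)dt]\le x-R_{\scriptscriptstyle post}$ for \emph{every} stopping time $\tau$ and the complementary-slackness equality $\mathbb{E}[\int_{0}^{\infty}D_{\scriptscriptstyle PR}^{*}H(c^{*}+d)dt]=xD_{\scriptscriptstyle PR}^{*}(0)+\mathbb{E}[\int_{0}^{\infty}R_{\scriptscriptstyle post}H\,dD_{\scriptscriptstyle PR}^{*}]$; the financing portfolio with $X\ge R_{\scriptscriptstyle post}$ is then built from the stopped inequalities alone via a Snell-envelope/Doob--Meyer construction (Lemma \ref{Lemma Appendix 1}), and optimality is concluded by comparison with the auxiliary static problem $(P_{\scriptscriptstyle PR}^{\prime})$ rather than by your equality chain. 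What each buys: the paper's route requires no free-boundary data, no identification of $D_{\scriptscriptstyle PR}^{*}$ with a reflected process, and no transversality check --- precisely the three obstacles you flag at the end --- at the cost of being non-constructive; your route delivers the optimal policy explicitly, but it can only be run \emph{after} the dual variational inequality has been solved (Proposition \ref{Proposition 1}, Appendix \ref{Appendix 2}), which inverts the paper's logical order (Theorem \ref{Theorem 1} precedes and is independent of Proposition \ref{Proposition 1}).

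One caveat to record: as a proof of the theorem \emph{as stated}, your verification establishes optimality of the specific reflected candidate, not the asserted property of an arbitrary optimal $D_{\scriptscriptstyle PR}^{*}$ of $(S_{\scriptscriptstyle PR})$; to close that gap you would need either uniqueness of the dual optimizer or the paper's perturbation argument. The steps you defer are real work but would not fail: $X^{*}=-v_{\scriptscriptstyle PR}^{\prime}(Z^{*})\ge R_{\scriptscriptstyle post}$ follows from $Z^{*}\le\hat{z}_{\scriptscriptstyle PR}$ together with the convexity of $v_{\scriptscriptstyle PR}$ and the smooth-fit identity $-v_{\scriptscriptstyle PR}^{\prime}(\hat{z}_{\scriptscriptstyle PR})=R_{\scriptscriptstyle post}$, and the transversality condition $\lim_{T\to\infty}\mathbb{E}[H(T)X^{*}(T)]=0$ can be discharged by direct computation from the explicit power form of $v_{\scriptscriptstyle PR}^{\prime}$ under the standing parameter condition $K_{1}>0$.
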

	\begin{proof}
	See Appendix \ref{Appendix 1}.
	\end{proof}
	
	 The Duality Theorem enables us to transform the solution of Problem $(P_{\scriptscriptstyle PR})$ to its duality, $(S_{\scriptscriptstyle PR})$. Besides, adopting the technique from \cite{davis1990portfolio}, the subsequent assumption should be imposed for solving the problem explicitly.
	
	\begin{assumption}
	The non-increasing process $D_{\scriptscriptstyle PR}(t)$ is absolutely continuous with respect to t. Hence, there exists a process $\psi_{\scriptscriptstyle PR}(t)$ such that
	$d D_{\scriptscriptstyle PR}(t)=-\psi_{\scriptscriptstyle PR}(t)D_{\scriptscriptstyle PR}(t)d t$.
	\end{assumption}
	
	 Introducing $Z_{\scriptscriptstyle PR}(t)\! \triangleq\! \lambda_{\scriptscriptstyle PR} e^{\gamma t}\!D_{\scriptscriptstyle PR}(t)H(t)$, the value function of Problem $(S_{\scriptscriptstyle PR})$ is converted into
	\begin{equation*}
		\tilde{V}_{\scriptscriptstyle PR}(\lambda_{\scriptscriptstyle PR})=\inf_{\psi_{\scriptscriptstyle PR}(t)\ge 0} \mathbb{E}\left[\int_{0}^{\infty}e^{-\gamma t}(\tilde{u}_{\scriptscriptstyle PR}(Z_{\scriptscriptstyle PR}(t))-d Z_{\scriptscriptstyle PR}(t)-\psi_{\scriptscriptstyle PR}(t)Z_{\scriptscriptstyle PR}(t)R_{post})d t\right].
	\end{equation*}
	Then we define
	\begin{equation*}
	    \phi_{\scriptscriptstyle PR}(t,z)\!\triangleq\!\inf\limits_{\psi_{\scriptscriptstyle PR}(t)\ge 0}\! \mathbb{E}\left[\left.\!\int_{t}^{\infty} \!e^{\!-\!\gamma s} \left(\tilde{u}_{\scriptscriptstyle PR}(Z_{\scriptscriptstyle PR}(s))\!-\!d Z_{\scriptscriptstyle PR}(s)\!-\!\psi_{\scriptscriptstyle PR}(s)Z_{\scriptscriptstyle PR}(s)R_{post}\right)d s\right|Z_{\scriptscriptstyle PR}(t)\!=\!z\right],
	\end{equation*}
	and observe that $\tilde{V}_{\scriptscriptstyle PR}(\lambda_{\scriptscriptstyle PR})=\phi_{\scriptscriptstyle PR}(0,\lambda_{\scriptscriptstyle PR})$. The associated Bellman equation to $\phi_{\scriptscriptstyle PR}(t,z)$ follows
	\begin{equation*}
		\min_{\psi_{\scriptscriptstyle PR}\ge 0}\left\{\tilde{\mathcal{L}}\phi_{\scriptscriptstyle PR}(t,z)+e^{-\gamma t}(\tilde{u}_{\scriptscriptstyle PR}(z)-dz)-\psi_{\scriptscriptstyle PR} z\left[\frac{\partial \phi_{\scriptscriptstyle PR}}{\partial z}(t,z)+e^{-\gamma t} R_{post}\right] \right\}=\gamma\phi_{\scriptscriptstyle PR}(t,z),
	\end{equation*}
	with the operator
	$\tilde{\mathcal{L}}=(\gamma-r)z\frac{\partial}{\partial z}+\frac{1}{2}\theta^{2}z^{2}\frac{\partial^{2}}{\partial z^{2}}$.
	From the characterization of optimum $\psi_{\scriptscriptstyle PR}^{*}$:
	\begin{equation*}
	    \frac{\partial \phi_{\scriptscriptstyle PR}}{\partial z}(t,z)+e^{-\gamma t}R_{post}=0 \Rightarrow \psi_{\scriptscriptstyle PR}^{*}\ge 0; \qquad \frac{\partial \phi_{\scriptscriptstyle PR}}{\partial z}(t,z)+e^{-\gamma t}R_{post}\leq 0 \Rightarrow \psi_{\scriptscriptstyle PR}^{*}= 0,
	\end{equation*}
	the Bellman equation is equivalent to
	\begin{equation*}
		\min \left\{\tilde{\mathcal{L}}\phi_{\scriptscriptstyle PR}(t,z)-\gamma\phi_{\scriptscriptstyle PR}(t,z)+e^{-\gamma t}(\tilde{u}_{\scriptscriptstyle PR}(z)-dz),-\left[\frac{\partial \phi_{\scriptscriptstyle PR}}{\partial z}(t,z)+e^{-\gamma t}R_{post}\right]\right\}=0,
	\end{equation*}
	which results in the consequent modified variational inequalities:  Find a free boundary $\hat{z}_{\scriptscriptstyle PR}>0$, which makes $R_{post}$-wealth level, and a function $\phi_{\scriptscriptstyle PR}(\cdot,\cdot)\in C^{2}((0,\infty)\times\mathbb{R}^{+})$ satisfying
	\begin{equation}\label{Equation 3}
	\begin{cases}
	(V1) \quad\frac{\partial \phi_{\scriptscriptstyle PR}}{\partial z}(t,z)+e^{-\gamma t}R_{post}=0, & z\ge\hat{z}_{\scriptscriptstyle PR},\\
	(V2) \quad \frac{\partial \phi_{\scriptscriptstyle PR}}{\partial z}(t,z)+e^{-\gamma t}R_{post}\leq 0, & 0<z<\hat{z}_{\scriptscriptstyle PR},\\
	(V3)\quad \tilde{\mathcal{L}}\phi_{\scriptscriptstyle PR}(t,z)-\gamma\phi_{\scriptscriptstyle PR}(t,z)+e^{-\gamma t}(\tilde{u}_{\scriptscriptstyle PR}(z)-dz)=0, & 0<z<\hat{z}_{\scriptscriptstyle PR},\\
	(V4)\quad \tilde{\mathcal{L}}\phi_{\scriptscriptstyle PR}(t,z)-\gamma\phi_{\scriptscriptstyle PR}(t,z)+e^{-\gamma t}(\tilde{u}_{\scriptscriptstyle PR}(z)-dz)\ge 0, & z\ge\hat{z}_{\scriptscriptstyle PR},
	\end{cases}
	\end{equation}
	for any $t\ge 0$, with the smooth fit conditions
	$\frac{\partial \phi_{\scriptscriptstyle PR}}{\partial z}(t,\hat{z}_{\scriptscriptstyle PR})=-R_{post} e^{-\gamma t}$ and $\frac{\partial^{2}\phi_{\scriptscriptstyle PR}}{\partial z^{2}}(t,\hat{z}_{\scriptscriptstyle PR})=0$.
	
	\begin{proposition}\label{Proposition 1}
	Under the assumption $\phi_{\scriptscriptstyle PR}(t,z)=e^{-\gamma t}v_{\scriptscriptstyle PR}(z)$, the variational inequalities (\ref{Equation 3}) takes the solution 
	\begin{equation*}
	    v_{\scriptscriptstyle PR}(z)=\begin{cases}
	    B_{2,\scriptscriptstyle PR}\hat{z}_{\scriptscriptstyle PR}^{n_{2}}+\frac{1-\delta(1-k)}{\delta(1-k)}K_{1}\bar{L}^{\frac{(1-k)(1-\delta)}{1-\delta(1-k)}}\hat{z}_{\scriptscriptstyle PR}^{\frac{\delta(1-k)}{\delta(1-k)-1}}-\frac{d}{r}\hat{z}_{\scriptscriptstyle PR}-R_{post}(z-\hat{z}_{\scriptscriptstyle PR}), & z\ge\hat{z}_{\scriptscriptstyle PR},\\
	    B_{2,\scriptscriptstyle PR}z^{n_{2}}+\frac{1-\delta(1-k)}{\delta(1-k)}K_{1}\bar{L}^{\frac{(1-k)(1-\delta)}{1-\delta(1-k)}}z^{\frac{\delta(1-k)}{\delta(1-k)-1}}-\frac{d}{r}z, & 0<z<\hat{z}_{\scriptscriptstyle PR},
	\end{cases}
	\end{equation*}
	with
	\begin{equation*}
		n_{2}=-\frac{\gamma-r-\frac{\theta^{2}}{2}}{\theta^{2}}+\sqrt{\left(\frac{\gamma-r-\frac{\theta^{2}}{2}}{\theta^{2}}\right)^{2}+\frac{2\gamma}{\theta^{2}}},
	\end{equation*}
$$\hat{z}_{\scriptscriptstyle PR}=\bar{L}^{(1-k)(1-\delta)}\left[\frac{(1-n_{2})(1-\delta(1-k))}{n_{2}(\delta(1-k)-1)-\delta(1-k)}\frac{\left(R_{post}-\frac{d}{r}\right)}{K_{1}}\right]^{\delta(1-k)-1}>0,$$ 
and $$B_{2,\scriptscriptstyle PR}=\!=\!\frac{K_{1}^{(\delta(1\!-\!k)\!-\!1)(n_{2}\!-\!1)}\!\bar{L}^{(1\!-\!k)(1\!-\!\delta)(1\!-\!n_{2})}}{n_{2}(n_{2}-1)(\delta(1\!-\!k)\!-\!1)}\!\!\left[\!\frac{(1\!-\!n_{2})(1\!-\!\delta(1\!-\!k))}{n_{2}(\delta(1\!-\!k)\!-\!1)\!-\!\delta(1\!-\!k)}\!\!\left(\!R_{\scriptscriptstyle post}\!-\!\frac{d}{r}\!\right)\!\right]^{\delta(1\!-\!k)\!-\!n_{2}(\delta(1\!-\!k)\!-\!1)}\!\!\!<\!0.$$ Furthermore, for a given initial wealth $x\ge R_{\scriptscriptstyle post}$, the value function of the post-retirement problem is
	\begin{equation}
		V_{\scriptscriptstyle PR}(x)\!=\!
		B_{2,\scriptscriptstyle PR}(\lambda_{\scriptscriptstyle PR}^{*})^{n_{2}}\!+\!\frac{1\!-\!\delta(1\!-\!k)}{\delta(1\!-\!k)}K_{1}\bar{L}^{\frac{(1\!-\!k)(1\!-\!\delta)}{1\!-\!\delta(1\!-\!k)}}(\lambda_{\scriptscriptstyle PR}^{*})^{\frac{\delta(1\!-\!k)}{\delta(1\!-\!k)\!-\!1}}\!-\!\frac{d}{r}\lambda_{\scriptscriptstyle PR}^{*}\!+\!\lambda_{\scriptscriptstyle PR}^{*}x,\label{Equation 4}
	\end{equation}
	with $-n_{2}B_{2,\scriptscriptstyle PR}(\lambda_{\scriptscriptstyle PR}^{*})^{n_{2}-1}+K_{1}\bar{L}^{\frac{(1-k)(1-\delta)}{1-\delta(1-k)}}(\lambda_{\scriptscriptstyle PR}^{*})^{\frac{1}{\delta(1-k)-1}}+\frac{d}{r}=x$. Taking $Z_{\scriptscriptstyle PR}^{*}(t) \triangleq \lambda_{\scriptscriptstyle PR}^{*} e^{\gamma t}H(t)$, the optimal wealth process follows
	\begin{equation*}
		X^{*}(t)=-n_{2}B_{2,\scriptscriptstyle PR}(Z_{\scriptscriptstyle PR}^{*}(t))^{n_{2}-1}+K_{1}\bar{L}^{\frac{(1-k)(1-\delta)}{1-\delta(1-k)}}(Z_{\scriptscriptstyle PR}^{*}(t))^{\frac{1}{\delta(1-k)-1}}+\frac{d}{r},\quad 0<Z_{\scriptscriptstyle PR}^{*}(t)\leq\hat{z}_{\scriptscriptstyle PR},
	\end{equation*}
	and the corresponding optimal consumption and portfolio strategies are
	\begin{equation*}
		c^{*}(t)=I_{\scriptscriptstyle PR}(\lambda_{\scriptscriptstyle PR}^{*} e^{\gamma t}H(t))=(Z_{\scriptscriptstyle PR}^{*}(t))^{\frac{1}{\delta(1-k)-1}}\bar{L}^{\frac{(1-k)(1-\delta)}{1-\delta(1-k)}},
	\end{equation*}
	\begin{equation*}
		\pi^{*}(t)=\frac{\theta}{\sigma}\left[n_{2}(n_{2}-1)B_{2,\scriptscriptstyle PR}(Z_{\scriptscriptstyle PR}^{*}(t))^{n_{2}-1}-K_{1}\frac{1}{\delta(1-k)-1}\bar{L}^{\frac{(1-k)(1-\delta)}{1-\delta(1-k)}}(Z_{\scriptscriptstyle PR}^{*}(t))^{\frac{1}{\delta(1-k)-1}}\right].
	\end{equation*}
	\end{proposition}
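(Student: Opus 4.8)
The plan is to collapse the variational inequalities (\ref{Equation 3}) to a one–dimensional ODE problem via the separation ansatz $\phi_{\scriptscriptstyle PR}(t,z)=e^{-\gamma t}v_{\scriptscriptstyle PR}(z)$. Because $\tilde{\mathcal{L}}$ carries no time derivative, the factor $e^{-\gamma t}$ cancels everywhere: on $0<z<\hat{z}_{\scriptscriptstyle PR}$ equation (V3) becomes the inhomogeneous Euler equation $\frac{1}{2}\theta^{2}z^{2}v_{\scriptscriptstyle PR}''+(\gamma-r)zv_{\scriptscriptstyle PR}'-\gamma v_{\scriptscriptstyle PR}+\tilde{u}_{\scriptscriptstyle PR}(z)-dz=0$, while on $z\ge\hat{z}_{\scriptscriptstyle PR}$ the equality in (V1) reads $v_{\scriptscriptstyle PR}'(z)=-R_{post}$, forcing $v_{\scriptscriptstyle PR}$ to be affine there. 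Integrating the affine branch and pasting its value at $\hat{z}_{\scriptscriptstyle PR}$ produces the first line of the claimed $v_{\scriptscriptstyle PR}$, namely $v_{\scriptscriptstyle PR}(\hat{z}_{\scriptscriptstyle PR})-R_{post}(z-\hat{z}_{\scriptscriptstyle PR})$.

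For the inner region I would solve the Euler equation explicitly. Trying $v_{\scriptscriptstyle PR}=z^{n}$ in the homogeneous part gives the indicial equation $\frac{1}{2}\theta^{2}n(n-1)+(\gamma-r)n-\gamma=0$; since the opening–upward quadratic is negative at $n=0$ and at $n=1$, its roots satisfy $n_{1}<0<1<n_{2}$, with $n_{2}$ equal to the stated radical expression. I retain only the $z^{n_{2}}$ mode, discarding $z^{n_{1}}$: as $z\to0^{+}$ (equivalently, wealth $\to\infty$) the constrained value must reduce to the unconstrained one (\ref{Equation 5}), and the $n_{1}<0$ mode would blow up, violating this limit. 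For the particular solution I split by source term: the linear term $-dz$ yields the Euler particular solution $-\frac{d}{r}z$, and the source $\tilde{u}_{\scriptscriptstyle PR}(z)\propto z^{\delta(1-k)/(\delta(1-k)-1)}$ yields a particular solution of the same power. A short computation shows that evaluating the Euler operator $\frac{1}{2}\theta^{2}z^{2}\partial_{zz}+(\gamma-r)z\partial_{z}-\gamma$ at that exponent reproduces exactly $-1/K_{1}$, which is precisely how the coefficient $\frac{1-\delta(1-k)}{\delta(1-k)}K_{1}\bar{L}^{\frac{(1-k)(1-\delta)}{1-\delta(1-k)}}$ arises; this assembles the second branch.

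Next I would pin down the two unknowns $B_{2,\scriptscriptstyle PR}$ and $\hat{z}_{\scriptscriptstyle PR}$ from the two smooth-fit conditions $v_{\scriptscriptstyle PR}'(\hat{z}_{\scriptscriptstyle PR})=-R_{post}$ and $v_{\scriptscriptstyle PR}''(\hat{z}_{\scriptscriptstyle PR})=0$. Eliminating the common quantity $B_{2,\scriptscriptstyle PR}\hat{z}_{\scriptscriptstyle PR}^{n_{2}-1}$ between the two equations leaves a single algebraic relation isolating $\hat{z}_{\scriptscriptstyle PR}^{1/(\delta(1-k)-1)}$, from which the displayed closed form of $\hat{z}_{\scriptscriptstyle PR}$ follows; back-substitution then delivers $B_{2,\scriptscriptstyle PR}$. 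A sign check using $n_{2}>1$, $\delta(1-k)-1<0$, and $R_{post}>\frac{d}{r}$ confirms $\hat{z}_{\scriptscriptstyle PR}>0$ and $B_{2,\scriptscriptstyle PR}<0$. Finally the economic quantities are recovered: writing $\tilde{V}_{\scriptscriptstyle PR}(\lambda_{\scriptscriptstyle PR})=v_{\scriptscriptstyle PR}(\lambda_{\scriptscriptstyle PR})$, the relation $V_{\scriptscriptstyle PR}(x)=\inf_{\lambda_{\scriptscriptstyle PR}>0}[\tilde{V}_{\scriptscriptstyle PR}(\lambda_{\scriptscriptstyle PR})+\lambda_{\scriptscriptstyle PR}x]$ of Theorem \ref{Theorem 1} gives the first-order condition $x=-v_{\scriptscriptstyle PR}'(\lambda_{\scriptscriptstyle PR}^{*})$, the stated defining equation for $\lambda_{\scriptscriptstyle PR}^{*}$, and substitution yields (\ref{Equation 4}); the optimal wealth is $X^{*}(t)=-v_{\scriptscriptstyle PR}'(Z_{\scriptscriptstyle PR}^{*}(t))$, the consumption is $c^{*}(t)=I_{\scriptscriptstyle PR}(Z_{\scriptscriptstyle PR}^{*}(t))$ from Theorem \ref{Theorem 1}, and applying It\^o's formula to $X^{*}=-v_{\scriptscriptstyle PR}'(Z_{\scriptscriptstyle PR}^{*})$ and matching the $dB$-coefficient against $\sigma\pi^{*}$ gives $\pi^{*}(t)=\frac{\theta}{\sigma}Z_{\scriptscriptstyle PR}^{*}(t)v_{\scriptscriptstyle PR}''(Z_{\scriptscriptstyle PR}^{*}(t))$, the displayed portfolio.

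I expect the main obstacle to lie in the \emph{verification} rather than the construction. Once the smooth-pasting system has produced the candidate, one still must check that it solves the full variational system, i.e. that the inequality (V4) holds for $z\ge\hat{z}_{\scriptscriptstyle PR}$ and (V2) holds for $0<z<\hat{z}_{\scriptscriptstyle PR}$. This requires controlling the sign of the residual $\tilde{\mathcal{L}}\phi_{\scriptscriptstyle PR}-\gamma\phi_{\scriptscriptstyle PR}+e^{-\gamma t}(\tilde{u}_{\scriptscriptstyle PR}(z)-dz)$ on the affine branch and of $v_{\scriptscriptstyle PR}'(z)+R_{post}$ on the inner branch; establishing these sign conditions throughout each region, together with rigorously justifying the elimination of the $z^{n_{1}}$ mode through the transversality/limiting behaviour as $z\to0$, is the delicate part of the argument.
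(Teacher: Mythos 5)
Your proposal is correct and follows essentially the same route as the paper's proof: the separation ansatz $\phi_{\scriptscriptstyle PR}(t,z)=e^{-\gamma t}v_{\scriptscriptstyle PR}(z)$, the Euler equation from $(V3)$ with roots $n_{1}<0<1<n_{2}$, discarding the $z^{n_{1}}$ mode by boundedness as $z\downarrow 0$, the $\mathcal{C}^{1}$/$\mathcal{C}^{2}$ smooth-fit system (eliminating $B_{2,\scriptscriptstyle PR}\hat{z}_{\scriptscriptstyle PR}^{n_{2}-1}$, exactly the paper's manipulation) to obtain $\hat{z}_{\scriptscriptstyle PR}$ and $B_{2,\scriptscriptstyle PR}$, and then the duality relation with first-order condition $x=-v_{\scriptscriptstyle PR}^{\prime}(\lambda_{\scriptscriptstyle PR}^{*})$ to recover $V_{\scriptscriptstyle PR}$, $X^{*}$, $c^{*}$ and $\pi^{*}$ (where you derive $\pi^{*}=\frac{\theta}{\sigma}Z v_{\scriptscriptstyle PR}^{\prime\prime}(Z)$ by It\^o matching while the paper simply cites He--Pag\`es). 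The verification of the inequalities $(V2)$ and $(V4)$ that you flag as the delicate remaining step is likewise not carried out in the paper's proof, so your construction matches it in both substance and scope.
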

	\begin{proof}
	    See Appendix \ref{Appendix 2}.
	\end{proof}
	
	Then based on the dynamic programming principle, we can only consider a subset of the admissible control set of Problem $(P)$, that is $\mathcal{A}_{1}(x)\subset\mathcal{A}(x)$, in which any policy achieves the maximum of the post-retirement problem's gain function. Hence, for any $(\tau, \{c(t),\pi(t),l(t) \})\in\mathcal{A}_{1}(x)$, we have
	$\mathbb{E}\left[\int_{\tau}^{\infty}e^{-\gamma t}u(c(t),\bar{L})d t\right]=\mathbb{E}\left[e^{-\gamma \tau}V_{\scriptscriptstyle PR}(X^{x,c,\pi,l}(\tau))\mathbb{I}_{\{\tau<\infty\}}\right]$.
	Afterwards, the whole optimization problem can be rewritten as
	\begin{equation*}
		V(x)= \sup_{\left(\tau, \{c(t),\pi(t),l(t) \}\right)\in\mathcal{A}_{1}(x)} \mathbb{E}\left[\int_{0}^{\tau}e^{-\gamma t}u(c(t),l(t))d t+e^{-\gamma\tau}U\left(X^{x,c,\pi,l}(\tau)\right)\right],
	\end{equation*} 
	denoting $U\left(X^{x,c,\pi,l}(\tau)\right)\!\triangleq\! \!\sup\limits_{ \{c(t),\pi(t),l(t) \}\in\mathcal{A}_{1}(x)}\!\! \mathbb{E}\!\left[\!\left.\int_{\tau}^{\infty}\!\!e^{\!-\!\gamma (s\!-\!\tau)}u(c(s),\bar{L})d s\right|\mathcal{F}_{\tau}\right]\!=\!V_{\scriptscriptstyle PR}\left(X^{x,c,\pi,l}(\tau)\right)$.
	Finally, we summarize the two different forms of $U(x)$ and introduce its Legendre-Fenchel transform under the definition $\tilde{U}(z)\!\triangleq\!\!\sup\limits_{x\ge R_{post}}\!\!\left[U(x)\!-\!x z\right]$, $0\!<\!z\!<\!\infty$ following \cite[Chapter 3, Definition 4.2]{karatzas1998methods}.

	\begin{lemma}\label{Lemma 3b}
		The post-retirement value function $U(x)$, for $x\ge R_{post}$, is given in two separate cases, the partition being based on the value of threshold in the liquidity constraint, i.e., $R_{post}$.
		\begin{equation*}
			U\left(x\right)=\begin{cases}
				\left(x-\frac{d}{r}\right)^{\delta(1-k)}K_{1}^{1-\delta(1-k)}\bar{L}^{(1-k)(1-\delta)}\frac{1}{\delta(1-k)}, &  \mbox{if}\quad R_{post}=\frac{d}{r},\\
				B_{2,\scriptscriptstyle PR}(\lambda_{\scriptscriptstyle PR}^{*})^{n_{2}}+\frac{1-\delta(1-k)}{\delta(1-k)}K_{1}\bar{L}^{\frac{(1-k)(1-\delta)}{1-\delta(1-k)}}(\lambda_{\scriptscriptstyle PR}^{*})^{\frac{\delta(1-k)}{\delta(1-k)-1}}-\frac{d}{r}\lambda_{\scriptscriptstyle PR}^{*}+\lambda_{\scriptscriptstyle PR}^{*}x,& \mbox{if}\quad R_{post}>\frac{d}{r}.
			\end{cases}
		\end{equation*}
		Furthermore, the Legendre-Fenchel transform of $U(x)$ is:
		\begin{itemize}
			\item $\tilde{U}(z)=\frac{1-\delta(1-k)}{\delta(1-k)}z^{\frac{\delta(1-k)}{\delta(1-k)-1}}K_{1}\bar{L}^{\frac{(1-k)(1-\delta)}{1-\delta(1-k)}}-\frac{d}{r}z$, $z>0$, if $R_{post}\!=\!\frac{d}{r}$;
			\item 
				$\tilde{U}(z)\!=\!\begin{cases}
					\!B_{2,\scriptscriptstyle PR}\hat{z}_{\scriptscriptstyle PR}^{n_{2}}\!+\!\frac{1\!-\!\delta(1\!-\!k)}{\delta(1\!-\!k)}K_{1}\!\bar{L}^{\frac{(1\!-\!k)(1\!-\!\delta)}{1\!-\!\delta(1\!-\!k)}}\hat{z}_{\scriptscriptstyle PR}^{\frac{\delta(1\!-\!k)}{\delta(1\!-\!k)\!-\!1}}\!\!\!\!-\!\frac{d}{r}\hat{z}_{\scriptscriptstyle PR}\!\!\!-\!R_{post}(z\!-\!\hat{z}_{\scriptscriptstyle PR}), & z\!\ge\!\hat{z}_{\scriptscriptstyle PR},\\
					\!B_{2,\scriptscriptstyle PR}z^{n_{2}}+\frac{1-\delta(1-k)}{\delta(1-k)}K_{1}\bar{L}^{\frac{(1-k)(1-\delta)}{1-\delta(1-k)}}z^{\frac{\delta(1-k)}{\delta(1-k)-1}}-\frac{d}{r}z, & 0\!\!<\!z\!\!<\!\hat{z}_{\scriptscriptstyle PR},
				\end{cases}$ if $R_{post}\!>\!\frac{d}{r}$.
		\end{itemize}
	\end{lemma}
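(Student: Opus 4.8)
The plan is to first identify $U$ with the post-retirement value function computed above, and then obtain $\tilde U$ by a direct Legendre--Fenchel calculation, treating the two regimes of $R_{post}$ separately. The identification $U(x)=V_{\scriptscriptstyle PR}(x)$ is exactly the dynamic-programming relation established just before the statement, so the displayed formula for $U$ merely collects (\ref{Equation 5}) and (\ref{Equation 4}); the genuine content is the evaluation of $\tilde U(z)=\sup_{x\ge R_{post}}[U(x)-xz]$.

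For the unconstrained regime $R_{post}=\frac{d}{r}$ I would compute the transform by hand. Writing $a=x-\frac{d}{r}\ge 0$, one checks that $U$ is strictly concave and increasing (here $\delta(1-k)<0$ and $K_{1}>0$, so the relevant coefficient is positive while the second derivative is negative), hence the first-order condition $U^{\prime}(x)=z$ has a unique solution $a^{*}(z)>0$ given by a power of $z$. Substituting $a^{*}(z)$ back into $U(x)-xz=U(x)-(a+\frac{d}{r})z$ and collecting terms produces the stated closed form, the additive $-\frac{d}{r}z$ coming from the shift. Since $a^{*}(z)>0$ for every $z>0$, the boundary $x=R_{post}$ is never active, and concavity guarantees the stationary point is the global maximum.

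For the constrained regime $R_{post}>\frac{d}{r}$ I would avoid differentiating the parametric expression (\ref{Equation 4}) directly and instead invoke the duality of Proposition \ref{Proposition 1}. There the value function is written as $V_{\scriptscriptstyle PR}(x)=\inf_{\lambda>0}[v_{\scriptscriptstyle PR}(\lambda)+\lambda x]$ with the optimality relation $x=-v_{\scriptscriptstyle PR}^{\prime}(\lambda_{\scriptscriptstyle PR}^{*})$; by the envelope property $U^{\prime}(x)=\lambda_{\scriptscriptstyle PR}^{*}(x)$, so imposing $U^{\prime}(x)=z$ forces the maximiser to satisfy $\lambda_{\scriptscriptstyle PR}^{*}=z$, i.e. $x^{*}(z)=-v_{\scriptscriptstyle PR}^{\prime}(z)$. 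Substituting $\lambda_{\scriptscriptstyle PR}^{*}=z$ in (\ref{Equation 4}) and cancelling the term $\lambda_{\scriptscriptstyle PR}^{*}x$ against $-x^{*}(z)z$ returns precisely $v_{\scriptscriptstyle PR}(z)$, which is the stated expression on the interior branch $0<z<\hat{z}_{\scriptscriptstyle PR}$; in other words $\tilde U=v_{\scriptscriptstyle PR}=\tilde V_{\scriptscriptstyle PR}$, as the biconjugate relation predicts.

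The step requiring care, and the main obstacle, is the domain constraint $x\ge R_{post}$ and the resulting kink at $z=\hat{z}_{\scriptscriptstyle PR}$. The smooth-fit condition $v_{\scriptscriptstyle PR}^{\prime}(\hat{z}_{\scriptscriptstyle PR})=-R_{post}$ yields $x^{*}(\hat{z}_{\scriptscriptstyle PR})=R_{post}$, so for $z>\hat{z}_{\scriptscriptstyle PR}$ the unconstrained maximiser would demand $x<R_{post}$; since $U(x)-xz$ is then monotone near the boundary, the constrained supremum is attained at $x=R_{post}$ and the transform collapses to the affine map $\tilde U(z)=U(R_{post})-R_{post}z$. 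Using $U(R_{post})=v_{\scriptscriptstyle PR}(\hat{z}_{\scriptscriptstyle PR})+R_{post}\hat{z}_{\scriptscriptstyle PR}$ rewrites this as $v_{\scriptscriptstyle PR}(\hat{z}_{\scriptscriptstyle PR})-R_{post}(z-\hat{z}_{\scriptscriptstyle PR})$, matching the upper branch and showing that $\tilde U$ is $C^{1}$ across $\hat{z}_{\scriptscriptstyle PR}$, which completes the verification.
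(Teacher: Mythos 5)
Your proof is correct and follows essentially the same route as the paper's Appendix \ref{Appendix 3}: the case $R_{post}=\frac{d}{r}$ by the first-order condition with the constraint shown inactive, and the case $R_{post}>\frac{d}{r}$ by identifying $\tilde{U}$ with the dual function $v_{\scriptscriptstyle PR}$ through the relation $V_{\scriptscriptstyle PR}(x)=\inf_{\lambda>0}[v_{\scriptscriptstyle PR}(\lambda)+\lambda x]$. The only difference is presentational: where the paper invokes the Legendre--Fenchel inversion lemma of Karatzas and Shreve to conclude $\tilde{U}=v_{\scriptscriptstyle PR}$ on both branches at once (checking $x^{*}=-v_{\scriptscriptstyle PR}^{\prime}(z)\ge R_{post}$), you verify the affine branch $z\ge\hat{z}_{\scriptscriptstyle PR}$ by hand via attainment at the boundary $x=R_{post}$, a sound and slightly more self-contained treatment of the kink.
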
	
	\begin{proof}
	    See Appendix \ref{Appendix 3}.
	\end{proof}

	 \subsection{Proof of Theorem \ref{Theorem 1}}\label{Appendix 1}
	 \noindent 
	 We first provide a lemma for proving Theorem \ref{Theorem 1}.
	 
	 \begin{lemma}\label{Lemma Appendix 1}
	 For any given initial wealth $x>R_{\scriptscriptstyle post}$, and any given progressively measurable consumption process $c(t)\ge 0$ satisfying $\sup\limits_{\tau\in\mathcal{T}}\mathbb{E}\left[\int_{0}^{\tau}H(t)(c(t)+d)d t\right]\leq x-R_{\scriptscriptstyle post}$, with $\mathcal{T}$ standing for the set of $\mathcal{F}$-stopping times, there exists a portfolio process $\pi(t)$ making $X^{x,c,\pi}(t)\ge R_{\scriptscriptstyle post}$, $\forall t\ge 0$, holds almost surely.
	\end{lemma}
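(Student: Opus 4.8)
The plan is to reduce the statement to the construction of a single dominating martingale and then read off the portfolio via the martingale representation theorem. First I would record the basic identity. Since $\theta\sigma=\mu-r$, an application of It\^o's formula to $H(t)X^{x,c,\pi}(t)$, together with $dH(t)=-H(t)\big(r\,dt+\theta\,dB(t)\big)$, yields $d\big(H(t)X(t)\big)=-H(t)(c(t)+d)\,dt+H(t)\big(\sigma\pi(t)-\theta X(t)\big)\,dB(t)$. Writing $I(t):=\int_0^t H(s)(c(s)+d)\,ds$, this says that $G(t):=H(t)X(t)+I(t)$ is a local martingale with $G(0)=x$ (recall $H(0)=1$). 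Conversely, given any local martingale $\hat G$ with $\hat G(0)=x$ and representation $d\hat G=\eta\,dB$, I would define $X(t):=\big(\hat G(t)-I(t)\big)/H(t)$ and $\pi(t):=\frac1\sigma\big(\eta(t)/H(t)+\theta X(t)\big)$; matching the diffusion coefficient forces exactly this $\pi$, and the drift identity of the wealth SDE then holds automatically because $H X+I=\hat G$ has zero drift. Since $X(t)\ge R_{post}$ is equivalent to $H(t)X(t)\ge R_{post}H(t)$, i.e. to $\hat G(t)\ge R(t)$ with $R(t):=R_{post}H(t)+I(t)$, it suffices to produce a martingale $\hat G$ with $\hat G(0)=x$ dominating the continuous, nonnegative, adapted process $R$.

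Second, I would bound the value of the candidate dominating object. For any stopping time $\tau$, optional sampling for the nonnegative supermartingale $H$ gives $\mathbb{E}\big[H(\tau)\mathbb{I}_{\{\tau<\infty\}}\big]\le H(0)=1$, while the hypothesis gives $\mathbb{E}\big[\int_0^\tau H(s)(c(s)+d)\,ds\big]\le x-R_{post}$. Hence $\mathbb{E}[R(\tau)]\le R_{post}+(x-R_{post})=x$ for every $\tau\in\mathcal T$, so $\sup_{\tau\in\mathcal T}\mathbb{E}[R(\tau)]\le x$.

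Third comes the construction. Let $G^{*}(t):=\operatorname*{ess\,sup}_{\tau\ge t}\mathbb{E}[R(\tau)\mid\mathcal F_t]$ be the Snell envelope of $R$, the smallest supermartingale dominating $R$; by the previous step $G^{*}(0)=\sup_{\tau}\mathbb{E}[R(\tau)]\le x$. Its Doob--Meyer decomposition reads $G^{*}(t)=G^{*}(0)+M(t)-A(t)$ with $M$ a martingale, $M(0)=0$, and $A$ nondecreasing, $A(0)=0$. Setting $\hat G(t):=x+M(t)$, I obtain a martingale with $\hat G(0)=x$ and $\hat G(t)=G^{*}(t)+\big(x-G^{*}(0)\big)+A(t)\ge G^{*}(t)\ge R(t)$, since $x-G^{*}(0)\ge0$ and $A\ge0$. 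The martingale representation theorem applied to $M$ furnishes $\eta$ with $M(t)=\int_0^t\eta\,dB$, and feeding $\hat G$ into the recovery formulas of the first paragraph yields a portfolio $\pi$ for which $X^{x,c,\pi}(t)=\big(\hat G(t)-I(t)\big)/H(t)\ge R_{post}$ for all $t\ge0$, with $X^{x,c,\pi}(0)=x$.

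The main obstacle is precisely the floor term $R_{post}H(t)$ inside $R$: because $H$ is merely a supermartingale and $H(t)>1$ on an event of positive probability, no constant shift of $\mathbb{E}\big[\int_0^\infty H(s)(c(s)+d)\,ds\mid\mathcal F_t\big]$ can dominate $R$ \emph{pathwise}, which is why the Snell-envelope/Doob--Meyer route is needed rather than a bare conditional-expectation formula. The accompanying technical points are routine and I would only sketch them: that $R$ is of class (D) (the increasing part $I(\cdot)$ is dominated by the integrable variable $\int_0^\infty H(s)(c(s)+d)\,ds$, and $\sup_\tau\mathbb{E}[R(\tau)]\le x<\infty$), so the Snell envelope and its decomposition are available, and that the recovered $\pi$ satisfies the admissibility/integrability conditions of $\mathcal A_{\scriptscriptstyle PR}(x)$.
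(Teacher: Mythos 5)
Your proposal is correct, and it uses the same toolkit as the paper --- Snell envelope, Doob--Meyer decomposition, martingale representation, and the same recovery formula $\pi=\frac{1}{\sigma}\bigl(\eta/H+\theta X\bigr)$ --- but applies it to a different obstacle process, and the difference is substantive. The paper takes the Snell envelope $\bar K$ of $K(t)=\int_0^t H(s)(c(s)+d)\,ds$ alone and then tacks the floor on as the constant shift $-R_{post}$ after dividing by $H$; since $K$ is nondecreasing, its Snell envelope is simply the martingale $\mathbb{E}[K(\infty)\mid\mathcal F_t]$ (with $\bar A\equiv 0$), and the resulting domination reads $x+\bar M(t)-K(t)\ge x-\bar K(0)\ge R_{post}$, which gives $H(t)X(t)\ge R_{post}$, i.e.\ $X(t)\ge R_{post}/H(t)$ --- \emph{not} $X(t)\ge R_{post}$ on the event $\{H(t)>1\}$, which has positive probability since $H(t)=e^{-(r+\theta^2/2)t-\theta B(t)}$. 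Your closing observation identifies exactly this weak point, and your choice of obstacle $R(t)=R_{post}H(t)+I(t)$ repairs it: enveloping $R$ rather than $I$ makes the domination $\hat G(t)\ge R(t)$ genuinely pathwise, and $X=(\hat G-I)/H\ge R_{post}$ follows with no appeal to $H\le 1$. (This also mirrors how the paper's own Lemma \ref{Lemma 4} handles its floor, where working under $\tilde{\mathbb{P}}$ with the deterministic discount $\xi$ makes the annuity identity available --- an identity that is unavailable under $\mathbb{P}$ with kernel $H$, which is precisely where the Appendix \ref{Appendix 1} argument slips.) The price of your route is that $R$ is no longer monotone, so the Snell envelope is used in earnest and you need $\{R(\tau)\}_{\tau\in\mathcal{T}}$ to be of class (D); your sketch handles $I$ but quietly assumes uniform integrability of $\{H(\tau)\}_{\tau\in\mathcal{T}}$, which does not hold for a general nonnegative supermartingale. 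Here it does hold because $r>0$: for $1<p<1+2r/\theta^2$ one checks $H(t)^p=e^{(p(p-1)\theta^2/2-pr)t}\,\mathcal{E}_t(-p\theta B)$ is a supermartingale, so $\mathbb{E}[H(\tau)^p]\le 1$ uniformly over stopping times, giving $L^p$-boundedness and hence the required uniform integrability; adding one sentence to this effect would close the only real gap in your write-up, and would leave your version of the argument tighter than the paper's.
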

	
	\begin{proof}
		Adopting the technique of \cite[Appendix, Lemma 1]{he1993labor}, we introduce $K(t)\triangleq\int_{0}^{t}(c(s)+d)H(s)d s$ and show that $\{K(\tau)\}_{\tau\in\mathcal{T}}$ is uniformly integrable from the fact $\mathbb{E}[K(t)]<\infty$. Then, Dellacherie \& Meyer (1982), Appendix I,\footnote{C. Dellacherie and P. Meyer, Probabilities and potential b, theory of martingales, North–Holland Mathematics Studies, 1982.} indicates that there exists a Snell envelope of $K(t)$ denoted as $\bar{K}(t)$, which is a super-martingale under the $\mathbb{P}$ measure and satisfies $\bar{K}(0)=\sup\limits_{\tau\in\mathcal{T}}\mathbb{E}[K(\tau)]$, $\bar{K}(\infty)=K(\infty)$. The Doob-Meyer Decomposition Theorem of Karatzas \& Shreve (1998), Section 1.4, Theorem 4.10,\footnote{I. Karatzas and S. E. Shreve, Brownian Motion and Stochastic Calculus. Second edition. Springer-Verlag, 1998.} enables us to represent the super-martingale $\bar{K}(t)$ as $\bar{K}(t)=\bar{K}(0)+\bar{M}(t)-\bar{A}(t)$,
		with a uniformly integrable martingale under the $\mathbb{P}$ measure $\bar{M}(t)$ satisfying $\bar{M}(0)=0$ and a strictly increasing process $\bar{A}(t)$ satisfying $\bar{A}(0)=0$. Moreover, the Martingale Representation Theorem from Bjork (2009), Chapter 11, Theorem 11.2,\footnote{T. Bjork, Arbitrage theory in continuous time. Oxford university press, 2009.} makes $\bar{M}(t)$ take expression $\bar{M}(t)=\int_{0}^{t}\bar{\rho}(s)dB(s)$,
		where $\bar{\rho}(t)$ is an $\mathbb{F}$-adapted process satisfying $\int_{0}^{\infty}\bar{\rho}^{2}(s)d s<\infty$ a.s..
		
		\noindent Defining a new process $\bar{X}(t)\triangleq\frac{1}{H(t)}\left[x-\bar{K}(0)+\bar{K}(t)-K(t)+\bar{A}(t)\right]-R_{\scriptscriptstyle post}$, it can be observed that $\bar{X}(t)$ is a non-negative process with the initial wealth $\bar{X}(0)=x-R_{\scriptscriptstyle post}$,
		because of $$\bar{K}(0)=\sup\limits_{\tau\in\mathcal{T}}\mathbb{E}[K(\tau)]=	\sup\limits_{\tau\in\mathcal{T}}\mathbb{E}\left[\int_{0}^{\tau}H(t)(c(t)+d)d t\right]\leq x-R_{\scriptscriptstyle post}.$$ Then $\bar{X}(t)$ can be expressed with $\bar{M}(t)$ as
		\begin{equation*}
			\bar{X}(t)=\frac{1}{H(t)}\left[x+\bar{M}(t)-K(t)\right]-R_{\scriptscriptstyle post}=\frac{1}{H(t)}\left[x+\int_{0}^{t}\bar{\rho}(s)dB(s)-\int_{0}^{t}(c(s)+d)H(s)d s\right]-R_{\scriptscriptstyle post}.
		\end{equation*}
		Applying the It\^o's formula to $H(t)X^{x,c,\pi}(t)$, we can get
		\begin{equation*}
			d(H(t)X^{x,c,\pi}(t))=-H(t)X^{x,c,\pi}(t)\theta dB(t)-(c(t)+d)H(t)d t+\sigma\pi(t)H(t)dB(t).
		\end{equation*}
		Considering the portfolio strategy $\pi(t)=\frac{\bar{\rho}(t)}{\sigma H(t)}+\frac{\theta X^{x,c,\pi}(t)}{\sigma}$, the wealth process takes
		\begin{equation*}
		    X^{x,c,\pi}(t)=\frac{1}{H(t)}\left[x+\int_{0}^{t}\bar{\rho}(s)dB(s)-\int_{0}^{t}(c(s)+d)H(s)d s\right],
		\end{equation*}
		which indicates that $\bar{X}(t)=X^{x,c,\pi}(t)-R_{\scriptscriptstyle post}$, a.s..
		The non-negativity of $\bar{X}(t)$ makes clear that $X^{x,c,\pi}(t)\ge R_{\scriptscriptstyle post}$, a.s., $\forall t\ge 0.$
	\end{proof}
	
	 \noindent Now we turn back to the proof of Theorem \ref{Theorem 1}. Following \cite[Section 4, Theorem 1]{he1993labor}, the proof mainly contains two aspects: the first part is to show the admissibility of $c^{*}(t)$, and the second part is to claim that $c^{*}(t)$ is the optimal consumption strategy to Problem $(P_{\scriptscriptstyle PR})$.
	 
	 \noindent (1) We first prove that $c^{*}(t)=I_{\scriptscriptstyle PR}^{*}(\lambda_{\scriptscriptstyle PR}^{*} e^{\gamma t}D_{\scriptscriptstyle PR}^{*}(t)H(t))$ is an admissible consumption policy. Taking any stopping time $\tau$ from $\mathcal{T}$ and a positive constant $\epsilon$, we can introduce $D_{\scriptscriptstyle PR}^{\epsilon}(t)\triangleq D_{\scriptscriptstyle PR}^{*}(t)+\epsilon\mathbb{I}_{[0,\tau)}(t)$, which evidently satisfies $D_{\scriptscriptstyle PR}^{\epsilon}(t)\in\mathcal{D}$. Then defining a function
		\begin{equation*}
		\begin{split}
		\mathfrak{L}(D(t))\triangleq\mathbb{E} & \left[\int_{0}^{\infty}e^{-\gamma t}\big(\tilde{u}_{\scriptscriptstyle PR}(\lambda _{\scriptscriptstyle PR}^{*}e^{\gamma t}D(t)H(t))-d\lambda_{\scriptscriptstyle PR}^{*} e^{\gamma t} D(t) H(t)\big)d t\right]\\
		& +\lambda_{\scriptscriptstyle PR}^{*}\mathbb{E}\left[\int_{0}^{\infty}R_{\scriptscriptstyle post}H(t)d D(t)\right]+\lambda_{\scriptscriptstyle PR}^{*}(x-R_{\scriptscriptstyle post})D(0),
		\end{split}
		\end{equation*}
		an inequality, $\mathfrak{L}(D_{\scriptscriptstyle PR}^{*}(t))\leq\mathfrak{L}(D_{\scriptscriptstyle PR}^{\epsilon}(t))$, is obtained from the facts that $D_{\scriptscriptstyle PR}^{*}(t)$ is the optimal solution of Problem $(S_{\scriptscriptstyle PR})$ and $x\ge R_{\scriptscriptstyle post}$.
		This inequality gives us
		\begin{equation*}
			\limsup_{\epsilon\downarrow 0}\mathbb{E}\!\left[\!\int_{0}^{\tau}\!\!\!\!\!\!\big(e^{\!-\!\gamma t}\!\frac{\tilde{u}_{\scriptscriptstyle PR}(\lambda_{\scriptscriptstyle PR}^{*} e^{\gamma t}\!D_{\scriptscriptstyle PR}^{\epsilon }(t)H(t))\!\!-\!\tilde{u}_{\scriptscriptstyle PR}(\lambda_{\scriptscriptstyle PR}^{*} e^{\gamma t}\!D_{\scriptscriptstyle PR}^{*}(t)H(t))}{\epsilon}\!-\!d\lambda_{\scriptscriptstyle PR}^{*} H(t)\big)d t\!\right]\!+\!\lambda_{\scriptscriptstyle PR}^{*}(x\!-\!R_{\scriptscriptstyle post})\!\!\ge\!0,
		\end{equation*} 
		considering $d D^{\epsilon}(t)=d D^{*}(t)$, $\forall t\in(0,\tau)$. The decreasing property of $\tilde{u}_{\scriptscriptstyle PR}(\cdot)$ and the Fatou's lemma endows us with
		\begin{equation*}
		\begin{split}
		 \mathbb{E}\bigg[\!\int_{0}^{\tau} & e^{-\gamma t}\tilde{u}_{\scriptscriptstyle PR}^{\prime}(\lambda_{\scriptscriptstyle PR}^{*} e^{\gamma t}D_{\scriptscriptstyle PR}^{*}(t)H(t))\lambda_{\scriptscriptstyle PR}^{*} e^{\gamma t}H(t)d t\bigg]\ge \\
		 & \limsup_{\epsilon\downarrow 0}\mathbb{E}\left[\int_{0}^{\tau}e^{-\gamma t}\frac{\tilde{u}_{\scriptscriptstyle PR}(\lambda_{\scriptscriptstyle PR}^{*} e^{\gamma t}\!D_{\scriptscriptstyle PR}^{\epsilon }(t)H(t))-\tilde{u}_{\scriptscriptstyle PR}(\lambda_{\scriptscriptstyle PR}^{*} e^{\gamma t}D_{\scriptscriptstyle PR}^{*}(t)H(t))}{\epsilon}d t\right].
		\end{split}
		\end{equation*}
		Then $\tilde{u}_{\scriptscriptstyle PR}^{\prime}(\cdot)=-I_{\scriptscriptstyle PR}(\cdot)$ indicates that $\mathbb{E}\left[\int_{0}^{\tau}H(t)(c^{*}(t)+d)d t\right]\leq x-R_{\scriptscriptstyle post}$.
		Since $\tau$ can be any stopping time in the set $\mathcal{T}$, Lemma \ref{Lemma Appendix 1} claims that there exists a portfolio strategy $\pi^{*}(t)$ making the related wealth process satisfying $X^{x,c^{*},\pi^{*}}(t)\ge R_{\scriptscriptstyle post}$, $\forall t\ge 0$.
		
		\noindent (2) We move to show the optimality of $c^{*}(t)$ to Problem $(P_{\scriptscriptstyle PR})$. The proof of Lemma \ref{Lemma Appendix 1} indicates that for an arbitrary consumption strategy $c(t)\in\mathcal{A}_{\scriptscriptstyle PR}(x)$, there exists a process $\zeta(t)$ satisfying
		\begin{equation}
			\int_{0}^{t}(c(s)+d)H(s)d s+H(t)X^{x,c,\pi}(t)=x+\int_{0}^{t}\zeta(s)dB(s).\label{Equation Appendix 1}
		\end{equation}
		The property $X^{x,c,\pi}(t)\ge R_{\scriptscriptstyle post}$ a.s. gives us the subsequent inequality with any process $D(t)\in\mathcal{D}$,
		\begin{equation*}
			\int_{0}^{T}\int_{0}^{t}(c(s)+d)H(s)d s d D(t)+\int_{0}^{T}R_{\scriptscriptstyle post}H(t)d D(t)\ge\int_{0}^{T}\left[x+\int_{0}^{t}\zeta(s)dB(s)\right]d D(t),
		\end{equation*}
		where $T$ is any time meeting $T\!\ge\! t$. Since $D(t)$ is bounded variational, integrating by parts gives us
		\begin{equation*}
			\begin{split}
				\int_{0}^{T}D(s) & (c(s)+d)H(s)d s -\int_{0}^{T}D(s)\zeta(s)dB(s)\leq \\ &D(0)x+D(T)\left[\int_{0}^{T}(c(s)+d)H(s)d s\!-\!x\!-\!\int_{0}^{T}\zeta(s)dB(s)\right]+\int_{0}^{T}R_{\scriptscriptstyle post}H(s)d D(s).
			\end{split}
		\end{equation*}
		Then we can take the expectation under the $\mathbb{P}$ measure on both sides and replace Equation (\ref{Equation Appendix 1}) to get $\mathbb{E}\left[\int_{0}^{T}D(s)(c(s)+d)H(s)d s\right]\leq D(0)x+\mathbb{E}\left[\int_{0}^{T}R_{\scriptscriptstyle post}H(s)d D(s)\right]$. Then the Lebesgue's Monotone Convergence Theorem indicates that
		\begin{equation*}
		    \mathbb{E}\left[\int_{0}^{\infty}D(s)(c(s)+d)H(s)d s\right]\leq D(0)x+\mathbb{E}\left[\int_{0}^{\infty}R_{\scriptscriptstyle post}H(s)d D(s)\right],
		\end{equation*}
		which holds for any admissible consumption policy $c(t)$ and any non-negative, non-increasing process $D(t)$. Furthermore, it will be proved that the above inequality  becomes equalized with the given $c^{*}(t)$ and $D_{\scriptscriptstyle PR}^{*}(t)$. Introducing $\bar{D}_{\scriptscriptstyle PR}^{\epsilon}(t)\triangleq D_{\scriptscriptstyle PR}^{*}(t)(1+\epsilon)\in\mathcal{D}$ with a small enough constant $\epsilon$ and defining a new function as
		\begin{equation*}
		\begin{split}
		\tilde{\mathfrak{L}}(D(t))\triangleq \mathbb{E} & \left[\int_{0}^{\infty}e^{-\gamma t}\big(\tilde{u}_{\scriptscriptstyle PR}(\lambda_{\scriptscriptstyle PR}^{*} e^{\gamma t}D(t)H(t))-d\lambda_{\scriptscriptstyle PR}^{*} e^{\gamma t}D(t)H(t)\big)d t\right]\\
		& +\lambda_{\scriptscriptstyle PR}^{*}\mathbb{E}\left[\int_{0}^{\infty}R_{\scriptscriptstyle post}H(t)d D(t)\right]+\lambda_{\scriptscriptstyle PR}^{*} x D(0),
		\end{split}
		\end{equation*}
		we get $\tilde{\mathfrak{L}}(\bar{D}_{\scriptscriptstyle PR}^{\epsilon}(t))\ge\tilde{\mathfrak{L}}(D_{\scriptscriptstyle PR}^{*}(t))$. Following the same argument with the first part, we apply the Fatou's lemma to obtain separately
		\begin{equation*}
			\begin{split}
				& \mathbb{E}\left[\int_{0}^{\infty}D_{\scriptscriptstyle PR}^{*}(t)H(t)(c^{*}(t)+d)d t\right]\leq x D_{\scriptscriptstyle PR}^{*}(0)+\mathbb{E}\left[\int_{0}^{\infty}R_{\scriptscriptstyle post}H(t)d D_{\scriptscriptstyle PR}^{*}(t)\right],\\
				&\mathbb{E}\left[\int_{0}^{\infty}D_{\scriptscriptstyle PR}^{*}(t)H(t)(c^{*}(t)+d)d t\right]\ge x D_{\scriptscriptstyle PR}^{*}(0)+\mathbb{E}\left[\int_{0}^{\infty}R_{\scriptscriptstyle post}H(t)d D_{\scriptscriptstyle PR}^{*}(t)\right],
			\end{split}
		\end{equation*}
		which claims $\mathbb{E}\left[\int_{0}^{\infty}\!D_{\scriptscriptstyle PR}^{*}(t)H(t)(c^{*}(t)\!+\!d)d t\right]\!=\!x D_{\scriptscriptstyle PR}^{*}(0)\!+\!\mathbb{E}\left[\int_{0}^{\infty}\!R_{\scriptscriptstyle post}H(t)d D_{\scriptscriptstyle PR}^{*}(t)\right]$.
		Afterwards, we define a new optimization problem named $(P_{\scriptscriptstyle PR}^{\prime})$ as
		\begin{equation*}
			\max_{c(t)\ge 0} \mathbb{E}\left[\int_{0}^{\infty}e^{-\gamma t}u_{\scriptscriptstyle PR}(c(t))d t\right] \tag{$P_{\scriptscriptstyle PR}^{\prime}$}
		\end{equation*}
		\begin{equation*}
			s.t. \quad\mathbb{E} \left[\int_{0}^{\infty}D_{\scriptscriptstyle PR}^{*}(t)H(t)(c(t)+d)d t\right]\leq x D_{\scriptscriptstyle PR}^{*}(0)+\mathbb{E}\left[\int_{0}^{\infty}R_{\scriptscriptstyle post}H(t)d D_{\scriptscriptstyle PR}^{*}(t)\right].
		\end{equation*}
		The Lagrange method implies that the optimal consumption solution of the above problem, denoted as $\tilde{c}^{*}(t)$, satisfies $e^{-\gamma t}u_{\scriptscriptstyle PR}^{\prime}(\tilde{c}^{*}(t))=\tilde{\lambda}_{\scriptscriptstyle PR} D_{\scriptscriptstyle PR}^{*}(t)H(t)$, with $\tilde{\lambda}_{\scriptscriptstyle PR}>0$ as the Lagrange multiplier. The condition $\tilde{\lambda}_{\scriptscriptstyle PR}=\lambda_{\scriptscriptstyle PR}^{*}$ makes the constraint of Problem $(P_{\scriptscriptstyle PR}^{\prime})$ takes equality. And the condition $u_{\scriptscriptstyle PR}^{\prime}(\tilde{c}^{*}(t))=\lambda_{\scriptscriptstyle PR}^{*} e^{\gamma t}D_{\scriptscriptstyle PR}^{*}(t)H(t)$ implies that $\tilde{c}^{*}(t)=I(\lambda_{\scriptscriptstyle PR}^{*} e^{\gamma t}D_{\scriptscriptstyle PR}^{*}(t)H(t))=c^{*}(t)$,
		which shows that $c^{*}(t)$ is the optimal consumption policy of Problem $(P_{\scriptscriptstyle PR}^{\prime})$. Finally, since the maximum utility of Problem $(P_{\scriptscriptstyle PR})$ is upper bounded by the maximum utility of $(P_{\scriptscriptstyle PR}^{\prime})$, we can conclude that $c^{*}(t)$ is also the optimal consumption solution of the primal problem $(P_{\scriptscriptstyle PR})$.
		
		\subsection{Proof of Proposition \ref{Proposition 1}}\label{Appendix 2}
		\noindent Referring to \cite[Appendix A]{choi2008optimal}, the function $\phi_{\scriptscriptstyle PR}(t,z)$ is assumed to be time-independent, that is, $\phi_{\scriptscriptstyle PR}(t,z)=e^{-\gamma t}v_{\scriptscriptstyle PR}(z)$. Then the condition $(V3)$ of (\ref{Equation 3}) leads to a differential equation
		\begin{equation}
		-\gamma v_{\scriptscriptstyle PR}(z)+(\gamma-r)z v_{\scriptscriptstyle PR}^{\prime}(z)+\frac{1}{2}\theta^{2}z^{2}v_{\scriptscriptstyle PR}^{\prime\prime}(z)+\tilde{u}_{\scriptscriptstyle PR}(z)-dz=0,\quad 0<z<\hat{z}_{\scriptscriptstyle PR},\label{Equation Appendix 3}
		\end{equation}
		which has the solution 
		\begin{equation*}
		v_{\scriptscriptstyle PR}(z)=
		B_{1,\scriptscriptstyle PR}z^{n_{1}}+B_{2,\scriptscriptstyle PR}z^{n_{2}}+\frac{1-\delta(1-k)}{\delta(1-k)}K_{1}\bar{L}^{\frac{(1-k)(1-\delta)}{1-\delta(1-k)}}z^{\frac{\delta(1-k)}{\delta(1-k)-1}}-\frac{d}{r}z, \quad 0<z<\hat{z}_{\scriptscriptstyle PR}.
		\end{equation*}
		$n_{1}$ and $n_{2}$ are the roots of the second-order equation
		$\frac{\theta^{2}}{2}n^{2}+\left(\gamma-r-\frac{\theta^{2}}{2}\right)n-\gamma=0$,
		and satisfy 
		\begin{equation*}
		n_{1,2}=-\frac{\gamma-r-\frac{\theta^{2}}{2}}{\theta^{2}}\mp\sqrt{\left(\frac{\gamma-r-\frac{\theta^{2}}{2}}{\theta^{2}}\right)^{2}+\frac{2\gamma}{\theta^{2}}},\qquad n_{1}<0,\quad n_{2}>1.
		\end{equation*}
		Since $n_{1}<0$, the term $z^{n_{1}}$ will suffer the explosion as $z$ goes to 0. Therefore, we set the coefficient $B_{1,\scriptscriptstyle PR}=0$ by the boundedness assumption. Considering the smooth conditions at $\hat{z}_{\scriptscriptstyle PR}$, we can construct a two-equations system to determine the parameters $B_{2,\scriptscriptstyle PR}$ and $\hat{z}_{\scriptscriptstyle PR}$.
		\begin{itemize}
		\item $\mathcal{C}^{1}$ condition at $z=\hat{z}_{\scriptscriptstyle PR}$:
		$n_{2}B_{2,\scriptscriptstyle PR}\hat{z}_{\scriptscriptstyle PR}^{n_{2}-1}-K_{1}\bar{L}^{\frac{(1-k)(1-\delta)}{1-\delta(1-k)}}\hat{z}_{\scriptscriptstyle PR}^{\frac{1}{\delta(1-k)-1}}-\frac{d}{r}+R_{post}=0.$
		\item $\mathcal{C}^{2}$ condition at $z=\hat{z}_{\scriptscriptstyle PR}$:
		$n_{2}(n_{2}-1)B_{2,\scriptscriptstyle PR}\hat{z}_{\scriptscriptstyle PR}^{n_{2}-2}-K_{1}\frac{1}{\delta(1-k)-1}\bar{L}^{\frac{(1-k)(1-\delta)}{1-\delta(1-k)}}\hat{z}_{\scriptscriptstyle PR}^{\frac{2-\delta(1-k)}{\delta(1-k)-1}}=0.$
	\end{itemize}
	By multiplying the $\mathcal{C}^{2}$ condition with $\hat{z}_{\scriptscriptstyle PR}$ and then adding with the $\mathcal{C}^{1}$ condition, we have 
	\begin{equation*}
		B_{2,\scriptscriptstyle PR}\hat{z}_{\scriptscriptstyle PR}^{n_{2}-1}=K_{1}\frac{\delta(1-k)}{\delta(1-k)-1}\frac{1}{n_{2}^{2}}\bar{L}^{\frac{(1-k)(1-\delta)}{1-\delta(1-k)}}\hat{z}_{\scriptscriptstyle PR}^{\frac{1}{\delta(1-k)-1}}+\frac{1}{n_{2}^{2}}\left(\frac{d}{r}-R_{post}\right).
	\end{equation*}
	Then, substituting the above expression into the $\mathcal{C}^{1}$ condition, we get  the exact value of $\hat{z}_{\scriptscriptstyle PR}$ as
	\begin{equation*}
		\hat{z}_{\scriptscriptstyle PR}=\bar{L}^{(1-k)(1-\delta)}\left[\frac{(1-n_{2})(1-\delta(1-k))}{n_{2}(\delta(1-k)-1)-\delta(1-k)}\frac{\left(R_{post}-\frac{d}{r}\right)}{K_{1}}\right]^{\delta(1-k)-1}>0,
	\end{equation*}
	and $B_{2,\scriptscriptstyle PR}$ can also be solved by bringing $\hat{z}_{\scriptscriptstyle PR}$ into the expression $B_{2,\scriptscriptstyle PR}\hat{z}_{\scriptscriptstyle PR}^{n_{2}-1}$,
	\begin{equation*}
			B_{2,\scriptscriptstyle PR}\!=\!\frac{K_{1}^{(\delta(1\!-\!k)\!-\!1)(n_{2}\!-\!1)}\!\bar{L}^{(1\!-\!k)(1\!-\!\delta)(1\!-\!n_{2})}}{n_{2}(n_{2}-1)(\delta(1\!-\!k)\!-\!1)}\!\!\left[\!\frac{(1\!-\!n_{2})(1\!-\!\delta(1\!-\!k))}{n_{2}(\delta(1\!-\!k)\!-\!1)\!-\!\delta(1\!-\!k)}\!\!\left(\!R_{\scriptscriptstyle post}\!-\!\frac{d}{r}\!\right)\!\right]^{\delta(1\!-\!k)\!-\!n_{2}(\delta(1\!-\!k)\!-\!1)}\!\!\!<\!0.
	\end{equation*}
	Moreover, the piecewise function of $v_{\scriptscriptstyle PR}(z)$ is completely determined as
	\begin{equation*}
		v_{\scriptscriptstyle PR}(z)=\begin{cases}
			B_{2,\scriptscriptstyle PR}\hat{z}_{\scriptscriptstyle PR}^{n_{2}}+\frac{1-\delta(1-k)}{\delta(1-k)}K_{1}\bar{L}^{\frac{(1-k)(1-\delta)}{1-\delta(1-k)}}\hat{z}_{\scriptscriptstyle PR}^{\frac{\delta(1-k)}{\delta(1-k)-1}}-\frac{d}{r}\hat{z}_{\scriptscriptstyle PR}-R_{post}(z-\hat{z}_{\scriptscriptstyle PR}), & z\ge\hat{z}_{\scriptscriptstyle PR},\\
			B_{2,\scriptscriptstyle PR}z^{n_{2}}+\frac{1-\delta(1-k)}{\delta(1-k)}K_{1}\bar{L}^{\frac{(1-k)(1-\delta)}{1-\delta(1-k)}}z^{\frac{\delta(1-k)}{\delta(1-k)-1}}-\frac{d}{r}z, & 0<z<\hat{z}_{\scriptscriptstyle PR}.
		\end{cases}
	\end{equation*}
	
	\noindent Since $v_{\scriptscriptstyle PR}(z)$ is a piecewise polynomial function with smoothing merging conditions and differentiable everywhere, \cite[Section 8, Theorem 8.5]{karatzas2000utility} indicates that $V_{\scriptscriptstyle PR}(x)=\inf\limits_{\lambda_{\scriptscriptstyle PR}>0}[\tilde{V}_{\scriptscriptstyle PR}(\lambda_{\scriptscriptstyle PR})+\lambda_{\scriptscriptstyle PR} x]$ keeps true for any given initial wealth $x\ge R_{post}$. Thereafter, the closed-form of $V_{\scriptscriptstyle PR}(x)$ is
	\begin{equation*}
		V_{\scriptscriptstyle PR}(x)\!=\!
		B_{2,\scriptscriptstyle PR}(\lambda_{\scriptscriptstyle PR}^{*})^{n_{2}}\!+\!\frac{1\!-\!\delta(1\!-\!k)}{\delta(1\!-\!k)}K_{1}\bar{L}^{\frac{(1\!-\!k)(1\!-\!\delta)}{1\!-\!\delta(1\!-\!k)}}(\lambda_{\scriptscriptstyle PR}^{*})^{\frac{\delta(1\!-\!k)}{\delta(1\!-\!k)\!-\!1}}\!-\!\frac{d}{r}\lambda_{\scriptscriptstyle PR}^{*}\!+\!\lambda_{\scriptscriptstyle PR}^{*}x, \quad x\!\ge\!\hat{x}_{\scriptscriptstyle PR},
	\end{equation*}
	with $-n_{2}B_{2,\scriptscriptstyle PR}(\lambda_{\scriptscriptstyle PR}^{*})^{n_{2}-1}+K_{1}\bar{L}^{\frac{(1-k)(1-\delta)}{1-\delta(1-k)}}(\lambda_{\scriptscriptstyle PR}^{*})^{\frac{1}{\delta(1-k)-1}}+\frac{d}{r}=x$, $x\ge\hat{x}_{\scriptscriptstyle PR}$. $\hat{x}_{\scriptscriptstyle PR}$ is the critical wealth level corresponding to $\hat{z}_{\scriptscriptstyle PR}$ and follows
	\begin{equation*}
		\hat{x}_{\scriptscriptstyle PR}=\left.-\frac{\partial v_{\scriptscriptstyle PR}}{\partial z}\right|_{z=\hat{z}_{\scriptscriptstyle PR}}=-n_{2}B_{2,\scriptscriptstyle PR}\hat{z}_{\scriptscriptstyle PR}^{n_{2}-1}+K_{1}\bar{L}^{\frac{(1-k)(1-\delta)}{1-\delta(1-k)}}\hat{z}_{\scriptscriptstyle PR}^{\frac{1}{\delta(1-k)-1}}+\frac{d}{r}.
	\end{equation*}
	Moreover, the optimal wealth process takes the form
	\begin{equation*}
	    X^{*}(t)\!=\!-\!v_{\scriptscriptstyle PR}^{\prime}(Z_{\scriptscriptstyle PR}^{*}(t))\!=\!-\!n_{2}B_{2,\scriptscriptstyle PR}(Z_{\scriptscriptstyle PR}^{*}(t))^{n_{2}\!-\!1}\!+\!K_{1}\bar{L}^{\frac{(1\!-\!k)(1\!-\!\delta)}{1\!-\!\delta(1\!-\!k)}}(Z_{\scriptscriptstyle PR}^{*}(t))^{\frac{1}{\delta(1\!-\!k)\!-\!1}}\!+\!\frac{d}{r},\quad 0\!<\!Z_{\scriptscriptstyle PR}^{*}(t)\!\leq\!\hat{z}_{\scriptscriptstyle PR},
	\end{equation*}
	and the related optimal consumption-portfolio strategies are
	\begin{equation*}
		c^{*}(t)=I_{\scriptscriptstyle PR}(\lambda_{\scriptscriptstyle PR}^{*} e^{\gamma t}H(t))=(Z_{\scriptscriptstyle PR}^{*}(t))^{\frac{1}{\delta(1-k)-1}}\bar{L}^{\frac{(1-k)(1-\delta)}{1-\delta(1-k)}},
	\end{equation*}
	\begin{equation*}
		\pi^{*}(t)=\frac{\theta}{\sigma}\left[n_{2}(n_{2}-1)B_{2,\scriptscriptstyle PR}(Z_{\scriptscriptstyle PR}^{*}(t))^{n_{2}-1}-K_{1}\frac{1}{\delta(1-k)-1}\bar{L}^{\frac{(1-k)(1-\delta)}{1-\delta(1-k)}}(Z_{\scriptscriptstyle PR}^{*}(t))^{\frac{1}{\delta(1-k)-1}}\right],
	\end{equation*}
	the optimal portfolio strategy is obtained from \cite[Section 5, Theorem 3]{he1993labor}.
		
	\subsection{Proof of Lemma \ref{Lemma 3b}}\label{Appendix 3}
	
	\noindent The form of function $U(x)$ is directly summarized from Equation (\ref{Equation 5}) and (\ref{Equation 4}), hence the proof here only focuses on the derivation of the Legendre-Fenchel transform of $U(x)$, which is also divided into two cases. We first extend the supremum in the definition of Legendre-Fenchel transform $\tilde{U}(z)$ by enlarging the range of $x$ to $\mathbb{R}$, that is, $\tilde{U}(z)=\sup\limits_{x\in\mathbb{R}}[U(x)-x z]$, for $0<z<\infty$.
	Moreover, it can be proved the optimal solution $x^{*}$ attaining the supremum automatically satisfies $x^{*}\ge R_{post}$.
	\\(1) $R_{post}=\frac{d}{r}$: From the first-order condition, we have
	\begin{equation*}
	z=U^{\prime}(x^{*})=K_{1}^{1-\delta(1-k)}\bar{L}^{(1-k)(1-\delta)}\left(x^{*}-\frac{d}{r}\right)^{\delta(1-k)-1},
	\end{equation*}
	which entails that $x^{*}=\left(z K_{1}^{\delta(1-k)-1}\bar{L}^{-(1-k)(1-\delta)}\right)^{\frac{1}{\delta(1-k)-1}}+\frac{d}{r}$. Then $x^{*}>R_{post}=\frac{d}{r}$ is obviously satisfied for $z>0$. Taking the above relationship back to the dual transform definition, $\tilde{U}(z)$ is directly acquired after elementary calculation,
	\begin{equation*}
	\tilde{U}(z)=\frac{1-\delta(1-k)}{\delta(1-k)}z^{\frac{\delta(1-k)}{\delta(1-k)-1}}K_{1}\bar{L}^{\frac{(1-k)(1-\delta)}{1-\delta(1-k)}}-\frac{d}{r}z.
	\end{equation*}
	
	\noindent (2) $R_{post}>\frac{d}{r}$: Considering the fact $V_{\scriptscriptstyle PR}(x)\!=\!\inf\limits_{\lambda_{\scriptscriptstyle PR}>0}[v_{\scriptscriptstyle PR}(\lambda_{\scriptscriptstyle PR})\!+\!\lambda_{\scriptscriptstyle PR} x]$, it can be obtained that $v_{\scriptscriptstyle PR}(z)$ is the Legendre-Fenchel transform of $V_{\scriptscriptstyle PR}(x)$ from \cite[Chapter 3, Lemma 4.3]{karatzas1998methods}. Then the identical forms of functions $V_{\scriptscriptstyle PR}(x)$ and $U(x)$ enable us to deduce the solution as $\tilde{U}(z)=v_{\scriptscriptstyle PR}(z)$. The last step is to claim $x^{*}\ge R_{post}$, which can be resorted to the condition $x^{*}=-v_{\scriptscriptstyle PR}^{\prime}(z)\ge -v_{\scriptscriptstyle PR}^{\prime}(\hat{z}_{\scriptscriptstyle PR})=R_{post}$.
		
	\section{Proof of Lemma \ref{Lemma 4}}\label{Appendix 4}	
	\noindent Following \cite[Section 6, Lemma 6.3]{karatzas2000utility}, we first define a new continuous process as
		\begin{equation*}
			G(t)\triangleq\frac{1}{\xi(t)}\tilde{\mathbb{E}}\left[\left.\int_{t}^{\tau}\xi(s)(c(s)+wl(s)+d-w\bar{L})d s+\xi(\tau)K\right|\mathcal{F}_{t}\right],\quad\forall t\in[0,\tau],
		\end{equation*}
		where $\tilde{\mathbb{E}}[\cdot]$ representing the expectation under $\tilde{\mathbb{P}}$ measure. From the property of the random variable $K$, this process satisfies
		$G(t)\!=\!\frac{1}{\xi(t)}\tilde{\mathbb{E}}\!\left[\!\left.\int_{t}^{\tau}\!\xi(s)(c(s)\!+\!w l(s))d s\!+\!\xi(\tau)\left(K\!-\!\frac{d-w\bar{L}}{r}\right)\right|\mathcal{F}_{t}\!\right]\!+\!\frac{d\!-\!w\bar{L}}{r}\!\ge\!\frac{d\!-\!w\bar{L}}{r}$, a.s..
		Then, making use of the condition $\mathbb{E}\left[\int_{0}^{\tau}H(t)(c(t)+w l(t)+d-w\bar{L})d t+H(\tau )K\right]=x$, we get $G(\tau)=K$ and
		\begin{equation*}
			\begin{split}
				G(0)&=\tilde{\mathbb{E}}\left[\int_{0}^{\tau}\xi(s)(c(s)+w l(s)+d-w\bar{L})d s+\xi(\tau) K\right]\\
				&=\mathbb{E}\left[\int_{0}^{\tau}H(s)(c(s)+w l(s)+d-w\bar{L})d s+H(\tau) K\right]=x,
			\end{split}
		\end{equation*}
		the above derivation involves changing the measure from the $\tilde{\mathbb{P}}$ measure with the pricing kernel as $\xi(t)$ to the $\mathbb{P}$ measure with the pricing kernel as $H(t)$.
		Meanwhile, we define a new process
		\begin{equation*}
			M(t)=\xi(t)G(t)+\int_{0}^{t}\xi(s)(c(s)+w l(s)+d-w\bar{L})d s,\quad \forall t\in[0,\tau].
		\end{equation*}
		Based on the fact
		\begin{equation*}
			\begin{split}
				\tilde{\mathbb{E}}[M(t)]&=\!\tilde{\mathbb{E}}\left[\tilde{\mathbb{E}}\left[\left.\int_{t}^{\tau}\xi(s)(c(s)\!+\!w l(s)\!+\!d\!-\!w\bar{L})d s\!+\!\xi(\tau)K\right|\!\mathcal{F}_{t}\right]\!+\!\int_{0}^{t}\xi(s)(c(s)\!+\!w l(s)\!+\!d\!-\!w\bar{L})d s\right]\\
				&=\tilde{\mathbb{E}}\left[\int_{0}^{\tau}\xi(s)(c(s)+w l(s)+d-w\bar{L})d s+\xi(\tau)K\right]=x=M(0),
			\end{split}
		\end{equation*}
		$M(t)$ is a $\tilde{\mathbb{P}}$-martingale. According to the Martingale Representation Theorem from Bjork (2009), Chapter 11, Theorem 11.2, it can be expressed as $M(t)=x+\int_{0}^{t}\rho(s)d\tilde{B}(s)$, $\forall t \in[0,\tau]$, with an $\mathbb{F}$-adapted process $\rho(t)$ satisfying $\int_{0}^{\infty}\rho^{2}(s)d s<\infty$ a.s.. Furthermore, adopting the portfolio strategy $\pi(t)\triangleq\frac{\rho(t)}{\sigma \xi(t)}$, the wealth process becomes
		\begin{equation*}
			\begin{split}
				d X^{x,c,\pi,l}(t)&=r X^{x,c,\pi,l}(t)d t+\pi(t)(\mu-r)d t-(c(t)+w l(t)+d-w\bar{L})d t+\sigma\pi(t)dB(t)\\
				&=r X^{x,c,\pi,l}(t)d t-(c(t)+w l(t)+d-w\bar{L})d t+\sigma\pi(t)d\tilde{B}(t)\\
				&=r X^{x,c,\pi,l}(t)d t-(c(t)+w l(t)+d-w\bar{L})d t+\frac{\rho(t)}{\xi(t)}d\tilde{B}(t),
			\end{split}
		\end{equation*}
		the second equality also comes from changing the measure by $\tilde{B}(t)\triangleq B(t)+\theta t$. We can observe that $G(t)=X^{x,c,\pi,l}(t)$ a.s. on $[0,\tau]$, which concludes the proof of this lemma. 
		
	\section{Proof of Lemma \ref{Lemma 5}}\label{Appendix 5}

\begin{remark}
In this section we prove Lemma \ref{Lemma 5}. Moreover, we also show that the conditions $\bar{z}\!<\!\tilde{y},$ defined below, and $\bar{z}\!<\!\hat{z}_{\scriptscriptstyle PR}$ hold true.
\end{remark}
	
	\noindent The proof here refers to Oksendal (2013) Section 10, Example 10.3.1.\footnote{B. Oksendal, Stochastic differential equations: an introduction with applications. Springer Science \& Business Media, 2013.} First of all, \cite[Lemma 2.1]{2021arXiv210615426D} shows that $$\tilde{u}(z)\!=\!\left[A_{1}z^{\frac{\delta(1\!-\!k)}{\delta(1\!-\!k)\!-\!1}}\!-\!w L z\right]\mathbb{I}_{\{0<z<\tilde{y}\}}\!+\!\left[A_{2}z^{\!-\!\frac{1\!-\!k}{k}}\right]\mathbb{I}_{\{z\ge\tilde{y}\}},$$
	 with $A_{1}\!\triangleq\!\frac{1\!-\!\delta\!+\!\delta k}{\delta (1\!-\!k)}L^{\frac{(1\!-\!k)(1\!-\!\delta)}{1\!-\!\delta(1\!-\!k)}}$, $A_{2}\triangleq\frac{k}{\delta(1\!-\!k)}\left(\frac{1\!-\!\delta}{\delta w}\right)^{\frac{(1\!-\!k)(1\!-\!\delta)}{k}}$, and  $\tilde{y}\triangleq L^{\!-k}\left(\frac{1\!-\!\delta}{\delta w}\right)^{1\!-\!\delta(1\!-\!k)}$.

 Introducing two functions
	\begin{equation*}
	    g(t,z)\triangleq e^{-\gamma t}\tilde{U}(z), \quad G(t,z,\bar{w})\triangleq g(t,z)+\bar{w}=e^{-\gamma t}\tilde{U}(z)+\bar{w},
	\end{equation*}
	and an operator $\mathcal{A}_{P}G(t,z,\bar{w})\triangleq\frac{\partial G}{\partial t}+(\gamma-r)z\frac{\partial G}{\partial z}+\frac{\theta^{2}}{2}z^{2}\frac{\partial^{2}G}{\partial z^{2}}+e^{-\gamma t}\tilde{u}(z)-e^{-\gamma t}(d-w\bar{L})z$, we can determine the continuous region as
	$\Omega_{1}=\{(t,z,\bar{w}):\mathcal{A}_{P}G(t,z,\bar{w})>0\}$. Moreover, since
	\begin{equation*}
	\mathcal{A}_{P}G(t,z,\bar{w})=-\gamma e^{-\gamma t}\tilde{U}(z)+(\gamma-r)z e^{-\gamma t}\tilde{U}^{\prime}(z)+\frac{\theta^{2}}{2}z^{2} e^{-\gamma t}\tilde{U}^{\prime\prime}(z)+e^{-\gamma t}\left(\tilde{u}(z)-(d-w\bar{L})z\right),
	\end{equation*}
	defining a new function $h(z)=-\gamma\tilde{U}(z)+(\gamma-r)z \tilde{U}^{\prime}(z)+\frac{\theta^{2}}{2}z^{2} \tilde{U}^{\prime\prime}(z)+\tilde{u}(z)-(d-w\bar{L})z$, the continuous region can be rewritten as $\Omega_{1}=\{z>0:h(z)>0\}$. Since the function $\tilde{U}(z)$ takes two different forms based on the value of $R_{post}$, we split the remaining discussion also into two cases: $R_{post}=\frac{d}{r}$ and $R_{post}>\frac{d}{r}$.

	\noindent (1) For $R_{post}\!=\!\frac{d}{r}$, we have $\tilde{U}(z)\!=\!\frac{1\!-\!\delta(1\!-\!k)}{\delta(1\!-\!k)}K_{1}\bar{L}^{\frac{(1\!-\!k)(1\!-\!\delta)}{1\!-\!\delta(1\!-\!k)}}\!z^{\frac{\delta(1\!-\!k)}{\delta(1\!-\!k)\!-\!1}}\!-\!\frac{d}{r}z$. After the basic calculation, we get
	\begin{equation}
	\begin{split}
	h(z)&=\frac{\delta(1-k)-1}{\delta(1-k)}\bar{L}^{\frac{(1-k)(1-\delta)}{1-\delta(1-k)}}z^{\frac{\delta(1-k)}{\delta(1-k)-1}}+w\bar{L}z+\tilde{u}(z)\\
	&=\frac{\delta(1\!-\!k)\!-\!1}{\delta(1\!-\!k)}\bar{L}^{\frac{(1\!-\!k)(1\!-\!\delta)}{1\!-\!\delta(1\!-\!k)}}z^{\frac{\delta(1\!-\!k)}{\delta(1\!-\!k)\!-\!1}}\!+\!w\bar{L}z\!+\!\left[A_{1}z^{\frac{\delta (1\!-\!k)}{\delta(1\!-\!k)\!-\!1}}\!-\!w L z\right]\!\mathbb{I}_{\{0<z<\tilde{y}\}}\!+\!\left[A_{2}z^{-\frac{1\!-\!k}{k}}\right]\!\mathbb{I}_{\{z\ge\tilde{y}\}}.
	\end{split}\label{Equation Appendix 2}
	\end{equation}
	$h(z)$ inherits the piecewise form from the function $\tilde{u}(z)$. Afterwards, determining the continuous region corresponds to characterize the features of the zero of $h(z)$. We begin claiming its convexity by the second derivative function. On the interval $0<z<\tilde{y}$, we can directly determine the sign of $h^{\prime \prime}(z)$ with
	$h^{\prime \prime}(z)=\frac{1}{\delta(1-k)-1}z^{\frac{2-\delta(1-k)}{\delta(1-k)-1}}\left[\bar{L}^{\frac{(1-k)(1-\delta)}{1-\delta(1-k)}}-L^{\frac{(1-k)(1-\delta)}{1-\delta(1-k)}}\right]>0$. As for the interval $z>\tilde{y}$, the corresponding second derivative function $h^{\prime\prime}(z)$ is
	\begin{equation*}
	h^{\prime\prime}(z)=\frac{1}{\delta(1-k)-1}\bar{L}^{\frac{(1-k)(1-\delta)}{1-\delta(1-k)}}z^{\frac{2-\delta(1-k)}{\delta(1-k)-1}}+\frac{1}{\delta k}\left(\frac{1-\delta}{\delta w}\right)^{\frac{(1-k)(1-\delta)}{k}}z^{-\frac{1+k}{k}}.
	\end{equation*}
	By solving the inequality, $\frac{1}{\delta(1-k)-1}\bar{L}^{\frac{(1-k)(1-\delta)}{1-\delta(1-k)}}z^{\frac{2-\delta(1-k)}{\delta(1-k)-1}}+\frac{1}{\delta k}\left(\frac{1-\delta}{\delta w}\right)^{\frac{(1-k)(1-\delta)}{k}}z^{-\frac{1+k}{k}}>0$, we get
	\begin{equation*}
	z>\bar{L}^{-k}\left(\frac{1-\delta(1-k)}{\delta k}\right)^{\frac{k(\delta(1-k)-1)}{(1-k)(\delta-1)}}\left(\frac{1-\delta}{\delta w}\right)^{1-\delta(1-k)}.
	\end{equation*} 	
	Since
	\begin{equation*}
	\tilde{y}-\bar{L}^{\!-\!k}\left(\!\frac{1\!-\!\delta(1\!-\!k)}{\delta k}\!\right)^{\frac{k(\delta(1\!-\!k)\!-\!1)}{(1\!-\!k)(\delta\!-\!1)}}\!\!\left(\!\frac{1\!-\!\delta}{\delta w}\!\right)^{1\!-\!\delta(1\!-\!k)}\!\!\!=\!\left(\!\frac{1\!-\!\delta}{\delta w}\!\right)^{1\!-\!\delta(1\!-\!k)}\!\!\left[\!L^{\!-\!k}\!-\!\left(\!\frac{1\!-\!\delta(1\!-\!k)}{\delta k}\!\right)^{\frac{k(\delta(1\!-\!k)\!-\!1)}{(1\!-\!k)(\delta\!-\!1)}}\!\!\bar{L}^{\!-\!k}\!\right]\!\!>\!0,
	\end{equation*}	
	$h^{\prime\prime}(z)>0$ keeps true for $z>\tilde{y}$. Besides, considering the condition $\lim\limits_{z\uparrow \tilde{y}}h^{\prime\prime}(z)=\lim\limits_{z\downarrow \tilde{y}}h^{\prime\prime}(z)$, we can conclude that the function $h(z)$ is strictly convex on the interval $z>0$. Then we move to claim $h(\tilde{y})>0$: before this, a new function is introduced as
	\begin{equation*}
	f(z)=\frac{\delta(1-k)-1}{\delta(1-k)}\bar{L}^{\frac{(1-k)(1-\delta)}{1-\delta(1-k)}}z^{\frac{\delta(1-k)}{\delta(1-k)-1}}+\frac{k}{\delta(1-k)}\left(\frac{1-\delta}{\delta w}\right)^{\frac{(1-k)(1-\delta)}{k}}\!\!\!z^{-\frac{1-k}{k}}+w\bar{L}z, \quad z>0,
	\end{equation*}	
	and its derivative functions are
	\begin{equation*}
	f^{\prime}(z)=z^{\frac{1}{\delta(1-k)-1}}\bar{L}^{\frac{(1-k)(1-\delta)}{1-\delta(1-k)}}-\frac{1}{\delta}\left(\frac{1-\delta}{\delta w}\right)^{\frac{(1-k)(1-\delta)}{k}}z^{-\frac{1}{k}}+w\bar{L},
	\end{equation*}
	\begin{equation*}
	f^{\prime\prime}(z)=\frac{1}{\delta(1-k)-1}z^{\frac{2-\delta(1-k)}{\delta(1-k)-1}}\bar{L}^{\frac{(1-k)(1-\delta)}{1-\delta(1-k)}}+\frac{1}{\delta k}\left(\frac{1-\delta}{\delta w}\right)^{\frac{(1-k)(1-\delta)}{k}}z^{-\frac{1+k}{k}}.
	\end{equation*}
	Defining $\tilde{y}_{1}=\bar{L}^{-k}\left(\frac{1-\delta}{\delta w}\right)^{1-\delta(1-k)}$, it can be obtained that
	\begin{equation*}
	f(\tilde{y}_{1})\!=\!\frac{\delta(1\!-\!k)\!-\!1}{\delta(1\!-\!k)}\bar{L}^{1\!-\!k}\!\left(\frac{1\!-\!\delta}{\delta w}\right)^{\!-\!\delta(1\!-\!k)}\!\!+\!\frac{k}{\delta(1\!-\!k)}\bar{L}^{1\!-\!k}\!\left(\frac{1\!-\!\delta}{\delta w}\right)^{\!-\!\delta(1\!-\!k)}\!\!+\!w\bar{L}^{1\!-\!k}\left(\frac{1\!-\!\delta}{\delta w}\right)^{1\!-\!\delta(1\!-\!k)}\!\!\!=\!0,
	\end{equation*}
	\begin{equation*}
	f^{\prime}(\tilde{y}_{1})\!=\!\!\left(\!\bar{L}^{\!-\!k}\!\left(\!\frac{1\!-\!\delta}{\delta w}\!\right)^{1\!-\!\delta(1\!-\!k)}\!\right)^{\frac{1}{\delta(1\!-\!k)\!-\!1}}\!\!\!\bar{L}^{\frac{(1\!-\!k)(1\!-\!\delta)}{1\!-\!\delta(1\!-\!k)}}\!\!-\!\frac{1}{\delta}\!\left(\!\frac{1\!-\!\delta}{\delta w}\!\right)^{\frac{(1\!-\!k)(1\!-\!\delta)}{k}}\!\!\!\left(\!\bar{L}^{\!-\!k}\!\left(\!\frac{1\!-\!\delta}{\delta w}\!\right)^{1\!-\!\delta(1\!-\!k)}\!\right)^{\!-\!\frac{1}{k}}\!+\!w\bar{L}\!\!=\!0.
	\end{equation*}
	For the second derivative, $f^{\prime\prime}(z)\!>\!0$ is equivalent to $z\!>\!\left(\frac{1\!-\!\delta(1\!-\!k)}{\delta k}\right)^{\frac{k(\delta(1-k)-1)}{(1-k)(\delta-1)}}\left(\frac{1\!-\!\delta}{\delta w}\right)^{1\!-\!\delta(1\!-\!k)}\bar{L}^{\!-\!k}$. Since $0\!<\!\left(\frac{1-\delta(1-k)}{\delta k}\right)^{\frac{k(\delta(1-k)-1)}{(1-k)(\delta-1)}}\!\!<\!1$, we have $\tilde{y}_{1}\!>\!\left(\frac{1\!-\!\delta(1\!-\!k)}{\delta k}\right)^{\frac{k(\delta(1-k)-1)}{(1-k)(\delta-1)}}\left(\frac{1\!-\!\delta}{\delta w}\right)^{1\!-\!\delta(1\!-\!k)}\bar{L}^{-k}$, which results in $f^{\prime\prime}(z)>0$ for $z>\tilde{y}_{1}$. Then the fact $f^{\prime}(\tilde{y}_{1})=0$ indicates that $f^{\prime}(z)>0$ for $z>\tilde{y}_{1}$, which means $f(z)$ is strictly increasing on the corresponding interval. Considering the relationship $\tilde{y}=L^{-k}\left(\frac{1-\delta}{\delta w}\right)^{1-\delta(1-k)}>\tilde{y}_{1}$, we can observe the positive value of $h(\tilde{y})$ through
	\begin{equation}
	0=f(\tilde{y}_{1})<f(\tilde{y})=h(\tilde{y}).\label{Equation Appendix 4}
	\end{equation}
	Finally, in view of the limitations
	\begin{equation*}
	\lim\limits_{z\downarrow 0}h(z)=\lim\limits_{z\downarrow 0}\left[\frac{\delta(1-k)-1}{\delta(1-k)}z^{\frac{\delta(1-k)}{\delta(1-k)-1}}\left(\bar{L}^{\frac{(1-k)(1-\delta)}{1-\delta(1-k)}}-L^{\frac{(1-k)(1-\delta)}{1-\delta(1-k)}}\right)+w z(\bar{L}-L)\right]=0,
	\end{equation*}
	\begin{equation*}
	\lim\limits_{z\downarrow 0}h^{\prime}(z)=\lim\limits_{z\downarrow 0}\left[z^{\frac{1}{\delta(1-k)-1}}\left(\bar{L}^{\frac{(1-k)(1-\delta)}{1-\delta(1-k)}}-L^{\frac{(1-k)(1-\delta)}{1-\delta(1-k)}}\right)+w(\bar{L}-L)\right]=-\infty,
	\end{equation*}
	and the properties $h^{\prime\prime}(z)>0$ for $z>0$, $h(\tilde{y})>0$, we can conclude that there is a unique zero of $h(z)$, which is denoted as $\bar{z}$, satisfying $\bar{z}<\tilde{y}$ and
	$h^{\prime}(\bar{z})=\bar{L}^{\frac{(1-k)(1-\delta)}{1-\delta(1-k)}}\bar{z}^{\frac{1}{\delta(1-k)-1}}+w\bar{L}+\tilde{u}^{\prime}(\bar{z})$. Hence, the continuous region is
	$\Omega_{1}=\{(t,z,\bar{w}):\mathcal{A}_{P}G(s,z,\bar{w})>0\}=\{z:h(z)>0\}=\{z>\bar{z}\}$.
		
	\noindent (2) For $R_{post}>\frac{d}{r}$, we have
	\begin{equation*}
	\tilde{U}(z)=\begin{cases}
	B_{2,\scriptscriptstyle PR}\hat{z}_{\scriptscriptstyle PR}^{n_{2}}+\frac{1-\delta(1-k)}{\delta(1-k)}K_{1}\bar{L}^{\frac{(1-k)(1-\delta)}{1-\delta(1-k)}}\hat{z}_{\scriptscriptstyle PR}^{\frac{\delta(1-k)}{\delta(1-k)-1}}-\frac{d}{r}\hat{z}_{\scriptscriptstyle PR}-R_{post}(z-\hat{z}_{\scriptscriptstyle PR}), & z\!\ge\!\hat{z}_{\scriptscriptstyle PR},\\
	B_{2,\scriptscriptstyle PR}z^{n_{2}}+\frac{1-\delta(1-k)}{\delta(1-k)}K_{1}\bar{L}^{\frac{(1-k)(1-\delta)}{1-\delta(1-k)}}z^{\frac{\delta(1-k)}{\delta(1-k)-1}}-\frac{d}{r}z, & 0\!<\!z\!<\!\hat{z}_{\scriptscriptstyle PR},
	\end{cases}
	\end{equation*} 
	then the function $h(z)$ on the interval $0<z<\hat{z}_{\scriptscriptstyle PR}$ is obtained as
	\begin{equation*}
	h(z)\!=\!B_{2,\scriptscriptstyle PR}z^{n_{2}}\left[-\!\gamma\!+\!(\gamma\!-\!r)n_{2}\!+\!\frac{\theta^{2}}{2}n_{2}(n_{2}\!-\!1)\right]+\frac{\delta(1-k)-1}{\delta(1-k)}\bar{L}^{\frac{(1-k)(1-\delta)}{1-\delta(1-k)}}z^{\frac{\delta(1-k)}{\delta(1-k)-1}}+w\bar{L}z+\tilde{u}(z).
	\end{equation*}
	From $n_{1}+n_{2}=-\frac{\gamma-r-\frac{\theta^{2}}{2}}{\frac{\theta^{2}}{2}}$ and $n_{1}n_{2}=-\frac{\gamma}{\frac{\theta^{2}}{2}}$, we can deduce $-\gamma+(\gamma-r)n_{2}+\frac{\theta^{2}}{2}n_{2}(n_{2}-1)=0$. Hence, the function $h(z)$ is reduced as
	\begin{equation*}
	h(z)=\frac{\delta(1-k)-1}{\delta(1-k)}\bar{L}^{\frac{(1-k)(1-\delta)}{1-\delta(1-k)}}z^{\frac{\delta(1-k)}{\delta(1-k)-1}}+w\bar{L}z+\tilde{u}(z),\quad 0<z<\hat{z}_{\scriptscriptstyle PR}.
	\end{equation*}
	Compared to Equation (\ref{Equation Appendix 2}), we can observe that $h(z)$ adopts the same form but applies to the different intervals. As for the interval $z\ge\hat{z}_{\scriptscriptstyle PR}$, considering that
	$\tilde{U}(z)=v_{\scriptscriptstyle PR}(\hat{z}_{\scriptscriptstyle PR})-R_{post}(z-\hat{z}_{\scriptscriptstyle PR})$,
	and the condition (\ref{Equation Appendix 3}) is applicable at the point $z=\hat{z}_{\scriptscriptstyle PR}$, we get
	\begin{equation*}
	\begin{split}
	h(z)&=-\gamma \left(v_{\scriptscriptstyle PR}(\hat{z}_{\scriptscriptstyle PR})-R_{post}(z-\hat{z}_{\scriptscriptstyle PR})\right)-(\gamma -r)z R_{post}+\tilde{u}(z)-(d-w\bar{L})z\\
	&=(r R_{post}-d)(z-\hat{z}_{\scriptscriptstyle PR})-\tilde{u}_{\scriptscriptstyle PR}(\hat{z}_{\scriptscriptstyle PR})+\tilde{u}(z)+w\bar{L}z.
	\end{split}
	\end{equation*}
	$h^{\prime\prime}(z)\!=\!\tilde{u}^{\prime\prime}(z)\!>\!0$ shows that $h(z)$ is strictly convex on $(\hat{z}_{\scriptscriptstyle PR},\infty)$. Then a contradiction is constructed to prove $\bar{z}\!<\!\hat{z}_{\scriptscriptstyle PR}$. We first use $r\!=\!\frac{\theta^{2}}{2}(1\!-\!n_{2})(n_{1}\!-\!1)$ and $n_{2}\!>\!1\!>\!\frac{\delta(1\!-\!k)}{\delta(1\!-\!k)\!-\!1}\!>\!0$ to derive the condition
	\begin{equation*}
	\frac{(1-n_{2})(1-\delta(1-k))}{n_{2}(\delta(1-k)-1)-\delta(1-k)}\frac{1}{K_{1}}=r\frac{\gamma-r\delta(1-k)-\frac{\theta^{2}}{2}\frac{\delta(1-k)}{1-\delta(1-k)}}{\gamma-r\delta(1-k)+\frac{\theta^{2}}{2}n_{2}}<r,
	\end{equation*}
	which gives us 
	\begin{equation*}
	\hat{z}_{\scriptscriptstyle PR}^{\frac{1}{\delta(1-k)-1}}=\bar{L}^{\frac{(1-k)(1-\delta)}{\delta(1-k)-1}}\frac{(1-n_{2})(1-\delta(1-k))}{n_{2}(\delta(1-k)-1)-\delta(1-k)}\frac{\left(R_{post}-\frac{d}{r}\right)}{K_{1}}<\bar{L}^{\frac{(1-k)(1-\delta)}{\delta(1-k)-1}}\left(r R_{post}-d\right).
	\end{equation*}
	Assuming $\bar{z}\ge\hat{z}_{\scriptscriptstyle PR}$, we can observe the contradiction through
	\begin{equation*}
	\begin{split}
	h^{\prime}(\bar{z})&=\bar{L}^{\frac{(1-k)(1-\delta)}{1-\delta(1-k)}}\bar{z}^{\frac{1}{\delta(1-k)-1}}+w\bar{L}+\tilde{u}^{\prime}(\bar{z})\\
	&\leq \bar{L}^{\frac{(1-k)(1-\delta)}{1-\delta(1-k)}}\hat{z}_{\scriptscriptstyle PR}^{\frac{1}{\delta(1-k)-1}}+w\bar{L}+\tilde{u}^{\prime}(\bar{z})\\
	&<\bar{L}^{\frac{(1-k)(1-\delta)}{1-\delta(1-K)}}\bar{L}^{\frac{(1-k)(1-\delta)}{\delta(1-K)-1}}(r R_{post}-d)+w\bar{L}+\tilde{u}^{\prime}(\bar{z})\\
	&=(r R_{post}-d)+w\bar{L}+\tilde{u}^{\prime}(\bar{z})=h^{\prime}(\bar{z}).
	\end{split}
	\end{equation*}
	Then the condition $\bar{z}<\hat{z}_{\scriptscriptstyle PR}$ implies $h(\hat{z}_{\scriptscriptstyle PR})> 0$ and
	\begin{equation*}
	\lim\limits_{z\uparrow\hat{z}_{\scriptscriptstyle PR}}h^{\prime}(z)=\lim\limits_{z\uparrow\hat{z}_{\scriptscriptstyle PR}}\left[\bar{L}^{\frac{(1-k)(1-\delta)}{1-\delta(1-k)}}z^{\frac{1}{\delta(1-k)-1}}+\tilde{u}^{\prime}(z)+w\bar{L}\right]=-\tilde{u}_{\scriptscriptstyle PR}^{\prime}(\hat{z}_{\scriptscriptstyle PR})+\tilde{u}^{\prime}(\hat{z}_{\scriptscriptstyle PR})+w\bar{L}\ge 0.
	\end{equation*}
	Afterwards, we have
	\begin{equation*}
	\begin{split}
	\lim\limits_{z\downarrow \hat{z}_{\scriptscriptstyle PR}}h^{\prime}(z)&=r R_{post}-d+\tilde{u}^{\prime}(\hat{z}_{\scriptscriptstyle PR})+w\bar{L}\\
	&\ge r R_{post}-d+\tilde{u}_{\scriptscriptstyle PR}^{\prime}(\hat{z}_{\scriptscriptstyle PR})\\
	&=r R_{post}-d-\hat{z}_{\scriptscriptstyle PR}^{\frac{1}{\delta(1-k)-1}}\bar{L}^{\frac{(1-k)(1-\delta)}{1-\delta(1-k)}}\\
	&=r R_{post}-d-\frac{(1-n_{2})(1-\delta(1-k))}{n_{2}(\delta(1-k)-1)-\delta(1-k)}\frac{1}{K_{1}}\left(R_{post}-\frac{d}{r}\right)>0,
	\end{split}
	\end{equation*}
	which indicates that $h(z)$ is strictly increasing for $z>\hat{z}_{\scriptscriptstyle PR}$ regarding the convex property already shown. Therefore, $\bar{z}$ is the unique zero of function $h(z)$, and satisfies $\bar{z}<\hat{z}_{\scriptscriptstyle PR}$. The last step is to claim $\bar{z}<\tilde{y}$ under this case, which is equivalent to $h(\tilde{y})>0$ and discussed in two different situations. If $\tilde{y}\leq\hat{z}_{\scriptscriptstyle PR}$, using the result (\ref{Equation Appendix 4}), we have
	\begin{equation*}
	h(\tilde{y})=\frac{\delta(1-k)-1}{\delta(1-k)}\bar{L}^{\frac{(1-k)(1-\delta)}{1-\delta(1-k)}}\tilde{y}^{\frac{\delta(1-k)}{\delta(1-k)-1}}+w\bar{L}\tilde{y}+\tilde{u}(\tilde{y})=f(\tilde{y})>0,
	\end{equation*}
	otherwise, if $\tilde{y}\!>\!\hat{z}_{\scriptscriptstyle PR}$, we use the increasing property of $h(z)$ on $(\hat{z}_{\scriptscriptstyle PR},\infty)$ to directly obtain $h(\tilde{y})\!>\!0$.

	\section{Calculation of Variational Inequalities (\ref{Equation 9})}\label{Appendix 7}
	
	\noindent The solution of (\ref{Equation 9}) is split into two different cases based on the value of $R_{post}$, namely $R_{post}\!=\!\frac{d}{r}$ and  $R_{post}\!>\!\frac{d}{r}$. Following \cite[Appendix A]{choi2008optimal}, we take the time-separated form of function ${\phi}(t,z)\!=\!e^{\!-\!\gamma t}v(z)$ for solving the above variational inequalities explicitly. 

We recall that \cite[Lemma 2.1]{2021arXiv210615426D} shows that $\tilde{u}(z)\!=\!\left[A_{1}z^{\frac{\delta(1\!-\!k)}{\delta(1\!-\!k)\!-\!1}}\!-\!w L z\right]\mathbb{I}_{\{0<z<\tilde{y}\}}\!+\!\left[A_{2}z^{\!-\!\frac{1\!-\!k}{k}}\right]\mathbb{I}_{\{z\ge\tilde{y}\}}$,
	 with $A_{1}\!\triangleq\!\frac{1\!-\!\delta\!+\!\delta k}{\delta (1\!-\!k)}L^{\frac{(1\!-\!k)(1\!-\!\delta)}{1\!-\!\delta(1\!-\!k)}}$, $A_{2}\triangleq\frac{k}{\delta(1\!-\!k)}\left(\frac{1\!-\!\delta}{\delta w}\right)^{\frac{(1\!-\!k)(1\!-\!\delta)}{k}}$, and  $\tilde{y}\triangleq L^{\!-k}\left(\frac{1\!-\!\delta}{\delta w}\right)^{1\!-\!\delta(1\!-\!k)}$. Moreover, $n_{1}$ and $n_{2}$ are the roots of the second-order equation
		$\frac{\theta^{2}}{2}n^{2}+\left(\gamma-r-\frac{\theta^{2}}{2}\right)n-\gamma=0$,
		and satisfy 
		\begin{equation*}
		n_{1,2}=-\frac{\gamma-r-\frac{\theta^{2}}{2}}{\theta^{2}}\mp\sqrt{\left(\frac{\gamma-r-\frac{\theta^{2}}{2}}{\theta^{2}}\right)^{2}+\frac{2\gamma}{\theta^{2}}},\qquad n_{1}<0,\quad n_{2}>1.
		\end{equation*}

	\paragraph{Case 1. $R_{pre}=\frac{d-w\bar{L}}{r}$ $\&$ $R_{post}=\frac{d}{r}$}~{}
	\newline
	From the condition $(V1)$ of (\ref{Equation 9}), the following differential equation holds in the region $z>\bar{z}$,
	\begin{equation}
		-\gamma v(z)+(\gamma-r)z v^{\prime}(z)+\frac{1}{2}\theta^{2}z^{2}v^{\prime\prime}(z)+\tilde{u}(z)-(d-w\bar{L})z=0,\label{Equation Appendix 5}
	\end{equation}
	whose solution takes a form as
	\begin{equation*}
		v(z)=\begin{cases}
			B_{11}z^{n_{1}}+B_{21}z^{n_{2}}+\frac{A_{1}}{\Gamma_{1}}z^{\frac{\delta(1-k)}{\delta(1-k)-1}}+\frac{w(\bar{L}-L)-d}{r}z, & \bar{z}<z<\tilde{y},\\
			
			B_{12}z^{n_{1}}+B_{22}z^{n_{2}}+\frac{A_{2}}{\Gamma_{2}}z^{-\frac{1-k}{k}}+\frac{w\bar{L}-d}{r}z,& z\ge\tilde{y}.
		\end{cases}
	\end{equation*}
	Since $n_{2}>0$, for the sake of avoiding the explosion of the term $z^{n_{2}}$ as $z$ goes to $\infty$, we set $B_{22}=0$.
	Then, the condition $(V4)$ of (\ref{Equation 9}) enables us to obtain
	\begin{equation*}
		v(\bar{z})=\tilde{U}(\bar{z})=\frac{1-\delta(1-k)}{\delta(1-k)}K_{1}\bar{L}^{\frac{(1-k)(1-\delta)}{1-\delta(1-k)}}\bar{z}^{\frac{\delta(1-k)}{\delta(1-k)-1}}-\frac{d}{r}\bar{z},
	\end{equation*} 
	the second equality results from the condition $R_{post}=\frac{d}{r}$. Furthermore, combining with the smooth condition at the point $z=\tilde{y}$, we can construct a four-equations system to determine the parameters $B_{11}$, $B_{21}$, $B_{12}$ and $\bar{z}$.
	\begin{itemize}
		\item $\mathcal{C}^{0}$ condition at $z=\bar{z}$
		\begin{equation*}
			B_{11}\bar{z}^{n_{1}}+B_{21}\bar{z}^{n_{2}}+\frac{A_{1}}{\Gamma_{1}}\bar{z}^{\frac{\delta(1-k)}{\delta(1-k)-1}}+\frac{w(\bar{L}-L)}{r}\bar{z}=\frac{1-\delta(1-k)}{\delta(1-k)}K_{1}\bar{L}^{\frac{(1-k)(1-\delta)}{1-\delta(1-k)}}\bar{z}^{\frac{\delta(1-k)}{\delta(1-k)-1}};
		\end{equation*}
		\item $\mathcal{C}^{1}$ condition at $z=\bar{z}$
		\begin{equation*}
			n_{1}B_{11}\bar{z}^{n_{1}\!-\!1}\!+\!n_{2}B_{21}\bar{z}^{n_{2}\!-\!1}\!+\!\frac{\delta(1\!-\!k)}{\delta(1\!-\!k)\!-\!1}\frac{A_{1}}{\Gamma_{1}}\bar{z}^{\frac{1}{\delta(1\!-\!k)\!-\!1}}\!+\!\frac{w(\bar{L}\!-\!L)}{r}=\!-\!K_{1}\bar{L}^{\frac{(1\!-\!k)(1\!-\!\delta)}{1\!-\!\delta(1\!-\!k)}}\bar{z}^{\frac{1}{\delta(1\!-\!k)\!-\!1}};
		\end{equation*}
		\item $\mathcal{C}^{0}$ condition at $z=\tilde{y}$
		\begin{equation*}
			B_{11}\tilde{y}^{n_{1}}+B_{21}\tilde{y}^{n_{2}}+\frac{A_{1}}{\Gamma_{1}}\tilde{y}^{\frac{\delta(1-k)}{\delta(1-k)-1}}-\frac{w L}{r}\tilde{y}=B_{12}\tilde{y}^{n_{1}}+\frac{A_{2}}{\Gamma_{2}}\tilde{y}^{-\frac{1-k}{k}};
		\end{equation*}
		\item $\mathcal{C}^{1}$ condition at $z=\tilde{y}$
		\begin{equation*}
			n_{1}B_{11}\tilde{y}^{n_{1}-1}+n_{2}B_{21}\tilde{y}^{n_{2}-1}+\frac{\delta(1-k)}{\delta(1-k)-1}\frac{A_{1}}{\Gamma_{1}}\tilde{y}^{\frac{1}{\delta(1-k)-1}}-\frac{w L}{r}= n_{1}B_{12}\tilde{y}^{n_{1}-1}-\frac{1-k}{k}\frac{A_{2}}{\Gamma_{2}}\tilde{y}^{-\frac{1}{k}}.
		\end{equation*}
	\end{itemize}
	
	\paragraph{Case 2. $R_{pre}=\frac{d-w\bar{L}}{r}$ $\&$ $R_{post}>\frac{d}{r}$}~{}
	\newline
	Then we move to the second case with a different condition $R_{post}>\frac{d}{r}$ compared to Case 1, which mainly affects the post-retirement part and leads to a different form of $\tilde{U}(z)$. Lemma \ref{Lemma 3} shows that the corresponding Legendre-Fenchel transform of post-retirement value function $\tilde{U}(z)$ is
	\begin{equation*}
		\tilde{U}(z)=B_{2,\scriptscriptstyle PR}z^{n_{2}}+\frac{1-\delta(1-k)}{\delta(1-k)}K_{1}\bar{L}^{\frac{(1-k)(1-\delta)}{1-\delta(1-k)}}z^{\frac{\delta(1-k)}{\delta(1-k)-1}}-\frac{d}{r}z.
	\end{equation*}
	Meanwhile, the dual transform involving the pre-retirement part $\tilde{u}(z)$ stays the same; hence Equation $(V1)$ from (\ref{Equation 9}) takes the identical solution.
	Afterwards, using the smooth fit conditions at $z\!=\!\bar{z}$ and $z\!=\!\tilde{y}$, we construct a four-equations system to achieve the unknowns, $B_{11}$, $B_{21}$, $B_{12}$ and $\bar{z}$.
	\begin{itemize}
		\item $\mathcal{C}^{0}$ condition at $z=\bar{z}$
		\begin{equation*}
			B_{11}\bar{z}^{n_{1}}\!+\!B_{21}\bar{z}^{n_{2}}\!+\!\frac{A_{1}}{\Gamma_{1}}\bar{z}^{\frac{\delta(1\!-\!k)}{\delta(1\!-\!k)\!-\!1}}\!+\!\frac{w(\bar{L}\!-\!L)}{r}\bar{z}\!=\!B_{2,\scriptscriptstyle PR}\bar{z}^{n_{2}}\!+\!\frac{1\!-\!\delta(1\!-\!k)}{\delta(1\!-\!k)}K_{1}\bar{L}^{\frac{(1\!-\!k)(1\!-\!\delta)}{1\!-\!\delta(1\!-\!k)}}\bar{z}^{\frac{\delta(1\!-\!k)}{\delta(1\!-\!k)\!-\!1}};
		\end{equation*}
		\item $\mathcal{C}^{1}$ condition at $z=\bar{z}$
		\begin{equation*}
			n_{1}B_{11}\bar{z}^{n_{1}\!-\!1}\!+\!n_{2}B_{21}\bar{z}^{n_{2}\!-\!1}\!+\!\frac{\delta(1\!-\!k)}{\delta(1\!-\!k)\!-\!1}\!\frac{A_{1}}{\Gamma_{1}}\!\bar{z}^{\frac{1}{\delta(1\!-\!k)\!-\!1}}\!+\!\frac{w(\bar{L}\!-\!L)}{r}\!=\!
			n_{2}B_{2,\scriptscriptstyle PR}\bar{z}^{n_{2}\!-\!1}
			\!-\!K_{1}\!\bar{L}^{\frac{(1\!-\!k)(1\!-\!\delta)}{1\!-\!\delta(1\!-\!k)}}\!\bar{z}^{\frac{1}{\delta(1\!-\!k)\!-\!1}};
		\end{equation*}
		\item $\mathcal{C}^{0}$ condition at $z=\tilde{y}$
		\begin{equation*}
			B_{11}\tilde{y}^{n_{1}}+B_{21}\tilde{y}^{n_{2}}+\frac{A_{1}}{\Gamma_{1}}\tilde{y}^{\frac{\delta(1-k)}{\delta(1-k)-1}}-\frac{w L}{r}\tilde{y}=B_{12}\tilde{y}^{n_{1}}+\frac{A_{2}}{\Gamma_{2}}\tilde{y}^{-\frac{1-k}{k}};
		\end{equation*}
		\item $\mathcal{C}^{1}$ condition at $z=\tilde{y}$
		\begin{equation*}
			n_{1}B_{11}\tilde{y}^{n_{1}-1}+n_{2}B_{21}\tilde{y}^{n_{2}-1}+\frac{\delta(1-k)}{\delta(1-k)-1}\frac{A_{1}}{\Gamma_{1}}\tilde{y}^{\frac{1}{\delta(1-k)-1}}-\frac{w L}{r}= n_{1}B_{12}\tilde{y}^{n_{1}-1}-\frac{1-k}{k}\frac{A_{2}}{\Gamma_{2}}\tilde{y}^{-\frac{1}{k}}.
		\end{equation*}
	\end{itemize}

   \noindent After obtaining the closed forms of $v(z)$ separately in Case 1 and Case 2, and given the initial wealth $x\ge R_{pre}$, the optimal Lagrange multiplier $\lambda^{*}$ can be acquired through solving the equation $x=-v^{\prime}(\lambda^{*})$, due to the fact that
   $V(x)=\inf\limits_{\lambda>0}\big[\tilde{V}(\lambda)+\lambda x\big]=\inf\limits_{\lambda>0}\big[v(\lambda)+\lambda x\big]=v(\lambda^*)+\lambda^* x$ holds under the differentiable property of $v(\cdot)$. Then the optimal dual process of wealth follows $Z^{*}(t)=\lambda^{*}e^{\gamma t}H(t)$.
   
   	\begin{proposition}\label{Proposition Appendix 5}
   	For Case 1 and Case 2, the optimal retirement time is $\tau^{*}=\inf\limits_{t\ge 0} \{Z^{*}(t)\leq\bar{z}\}$, the optimal consumption-portfolio-leisure plan $\{c^{*}(t),\pi^{*}(t),l^{*}(t)\}$ before retirement is given by
   	\begin{equation*}
   		c^{*}(t)=\begin{cases}
   			L^{-\frac{(1-k)(1-\delta)}{\delta(1-k)-1}}(Z^{*}(t))^{\frac{1}{\delta(1-k)-1}}, &\bar{z}<Z^{*}(t)<\tilde{y},\\
   			\left(\frac{1-\delta}{\delta w}\right)^{\frac{(1-\delta)(1-k)}{k}}(Z^{*}(t))^{-\frac{1}{k}}, & Z^{*}(t)\ge\tilde{y},\\
   		\end{cases}
   	\end{equation*}
   	\begin{equation*}
   		l^{*}(t)=\begin{cases}
   			L, & \bar{z}<Z^{*}(t)<\tilde{y},\\
   			\left(\frac{1-\delta}{\delta w}\right)^{-\frac{\delta(1-k)-1}{k}}(Z^{*}(t))^{-\frac{1}{k}}, & Z^{*}(t)\ge\tilde{y},\\
   		\end{cases}
   	\end{equation*}
   	\begin{equation*}
   		\pi^{*}(t)=
   		\begin{cases}
   			\frac{\theta}{\sigma}\left[n_{1}(n_{1}-1)B_{11}(Z^{*}(t))^{n_{1}-1}+n_{2}(n_{2}-1)B_{21}(Z^{*}(t))^{n_{2}-1}\right.& \\
   			\qquad\qquad\qquad\qquad\left.+\frac{\delta(1-k)}{(\delta(1-k)-1)^{2}}\frac{A_{1}}{\Gamma_{1}}(Z^{*}(t))^{\frac{1}{\delta(1-k)-1}}\right],	& \bar{z}<Z^{*}(t)<\tilde{y},\\
   			\frac{\theta}{\sigma}\left[n_{1}(n_{1}\!-\!1)B_{12}(Z^{*}(t))^{n_{1}-1}+\frac{1-k}{k^{2}}\frac{A_{2}}{\Gamma_{2}}\right.(Z^{*}\left.(t))^{-\frac{1}{k}}\right], & Z^{*}(t)\ge\tilde{y}.
   		\end{cases}
   	\end{equation*}
   \end{proposition}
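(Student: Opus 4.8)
The plan is to read off all three optimal controls from the dual value function $v(z)$ already determined (via the four-equation smooth-fit systems) in Case 1 and Case 2, together with the optimal dual state process $Z^{*}(t)=\lambda^{*}e^{\gamma t}H(t)$, where $\lambda^{*}$ solves $x=-v^{\prime}(\lambda^{*})$. I would organise the verification in three stages: the retirement time, then the consumption--leisure pair, then the portfolio.

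\textbf{Step 1 (retirement time).} By Lemma \ref{Lemma 5}, the continuation region of the stopping problem (\ref{Equation 8}) is $\{z>\bar{z}\}$ and the stopping region is $\{0<z\leq\bar{z}\}$, and the extension noted after (\ref{Equation 9}) shows the same dichotomy persists here. Since $Z^{*}$ has continuous paths and $\tilde{V}(\lambda)=\phi(0,\lambda)$ with $v$ solving (\ref{Equation 9}), the optimal rule is to stop the first instant $Z^{*}$ enters the stopping region, i.e. $\tau^{*}=\inf\{t\geq 0:Z^{*}(t)\leq\bar{z}\}$. By Remark \ref{Rem1} and the convexity of $v$, this is equivalently the first upward hitting time of the wealth barrier $\bar{x}=-v^{\prime}(\bar{z})$.

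\textbf{Step 2 (consumption and leisure).} On $[0,\tau^{*})$ the duality optimality condition identifies $(c^{*}(t),l^{*}(t))$ as the maximiser of the static problem $\sup_{c\geq 0,\,0\leq l\leq L}[u(c,l)-(c+wl)Z^{*}(t)]$ defining $\tilde{u}$. I would solve this pointwise in $z=Z^{*}(t)$. The interior first-order conditions $\partial_{c}u=z$ and $\partial_{l}u=wz$ force the ratio $l/c=\frac{1-\delta}{\delta w}$, so the leisure cap $l\leq L$ is slack precisely for $z\geq\tilde{y}$ and binding for $\bar{z}<z<\tilde{y}$, with $\tilde{y}$ the crossover at which the interior leisure equals $L$ (one checks $l^{*}(\tilde{y})=L$ directly). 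On $\{z\geq\tilde{y}\}$ both conditions hold and give the stated power-law forms of $c^{*}$ and $l^{*}$; on $\{\bar{z}<z<\tilde{y}\}$ one sets $l^{*}=L$ and solves $\partial_{c}u(c,L)=z$, yielding $c^{*}=L^{-\frac{(1-k)(1-\delta)}{\delta(1-k)-1}}z^{\frac{1}{\delta(1-k)-1}}$. This is exactly the piecewise description of $\tilde{u}$ recalled from \cite[Lemma 2.1]{2021arXiv210615426D}.

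\textbf{Step 3 (portfolio).} Using $X(t)=-v^{\prime}(Z^{*}(t))$ and the dynamics $dZ^{*}(t)=Z^{*}(t)\left[(\gamma-r)\,dt-\theta\,dB(t)\right]$, I apply It\^o's formula to $-v^{\prime}(Z^{*})$: the $dB$-coefficient of $dX$ equals $\theta\,Z^{*}(t)v^{\prime\prime}(Z^{*}(t))$. Matching this with the diffusion term $\sigma\pi\,dB$ of the wealth SDE gives $\pi^{*}(t)=\frac{\theta}{\sigma}Z^{*}(t)v^{\prime\prime}(Z^{*}(t))$. Substituting the two branches of $v$ (with $B_{22}=0$ imposed) and differentiating twice reproduces the claimed formulas on $\{\bar{z}<Z^{*}<\tilde{y}\}$ and $\{Z^{*}\geq\tilde{y}\}$; equivalently the same expression follows from \cite[Section 5, Theorem 3]{he1993labor}.

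The main obstacle is not the algebra but the admissibility/optimality bookkeeping: one must argue that $Z^{*}$ remains in $\{z>\bar{z}\}$ until $\tau^{*}$, that the constructed triple $(c^{*},l^{*},X)$ is replicable by an admissible portfolio keeping $X\geq R_{pre}=\frac{d-w\bar{L}}{r}$ (here Lemma \ref{Lemma 4} supplies exactly this $\pi$ and guarantees the liquidity constraint is met automatically), and that the $C^{1}$ smooth fit at $\bar{z}$ and at $\tilde{y}$ makes $v$ differentiable so that $V(x)=v(\lambda^{*})+\lambda^{*}x$ holds. The delicate points are the correct determination of the binding/slack boundary $\tilde{y}$ and the sign tracking in the It\^o computation for $\pi^{*}$.
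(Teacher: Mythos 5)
Your proposal is correct and follows essentially the same route as the paper: the paper's proof simply cites \cite[Lemma 2.1]{2021arXiv210615426D} for the pointwise static maximisation yielding $(c^{*},l^{*})$ (your Step 2 reproduces exactly that lemma's first-order-condition analysis, including the binding/slack threshold $\tilde{y}$) and \cite[Section 5, Theorem 3]{he1993labor} for $\pi^{*}(t)=\frac{\theta}{\sigma}Z^{*}(t)v^{\prime\prime}(Z^{*}(t))$ (which your It\^o computation on $X(t)=-v^{\prime}(Z^{*}(t))$ rederives correctly, matching the stated coefficients $n_{i}(n_{i}-1)B_{\cdot}$ and $\frac{\delta(1-k)}{(\delta(1-k)-1)^{2}}\frac{A_{1}}{\Gamma_{1}}$), while the retirement rule follows from Lemma \ref{Lemma 5} as in your Step 1. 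You merely unpack the two citations into explicit derivations, which is a faithful expansion rather than a different argument.
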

   \begin{proof}
   	The optimal consumption and leisure strategies come from \cite[Lemma 2.1]{2021arXiv210615426D}, and the optimal portfolio strategy is derived by $\pi^{*}(t)=\frac{\theta}{\sigma}Z^{*}(t)v^{\prime\prime}(Z^{*}(t))$ from \cite[Section 5, Theorem 3]{he1993labor}.
   \end{proof}

	\section{Calculation of Variational Inequalities (\ref{Equation 12})}\label{Appendix 8}
	
	\noindent Recalling the condition $\bar{z}<\tilde{y}$ in Lemma \ref{Lemma 5}, the problem to be solved is split into four different cases depending on the relationship between $\tilde{y}$ with $\hat{z}$, and $R_{post}$ with $\frac{d}{r}$. We provide a diagram for a clear classification.
	% Diagram 1
		\begin{center}
			\begin{tikzpicture}[
				grow=right,
				level 1/.style={sibling distance=1.5cm,level distance=6.5cm},
				level 2/.style={sibling distance=1.5cm, level distance=5.5cm},
				edge from parent/.style={very thick,draw=blue!40!black!60,
					shorten >=5pt, shorten <=5pt},
				edge from parent path={(\tikzparentnode.east) -- (\tikzchildnode.west)},
				kant/.style={text width=2cm, text centered, sloped},
				every node/.style={text ragged, inner sep=2mm},
				punkt/.style={rectangle, rounded corners, shade, top color=white,
					bottom color=blue!50!black!20, draw=blue!40!black!60, very
					thick }
				]
				
				\node[punkt, text width=6.5em] {$0<\bar{z}<\tilde{y}\leq\hat{z}$}
				child{ node[punkt, text width=22em] {Case 4. $0<\bar{z}<\tilde{y}\leq\hat{z}$, $R_{pre}>\frac{d-w\bar{L}}{r}$ $\&$ $R_{post}>\frac{d}{r}$}
					edge from parent{
						node[kant, above] {}}
				}
				%Upper part, lv1
				child { node[punkt, text width=22em] {Case 3. $0<\bar{z}<\tilde{y}\leq\hat{z}$, $R_{pre}>\frac{d-w\bar{L}}{r}$ $\&$ $R_{post}= \frac{d}{r}$}
					edge from parent{
						node[kant, above] {}}
				};
			\end{tikzpicture}
		\end{center}
		
	% Diagram 2	
		\begin{center}
			\begin{tikzpicture}[
				grow=right,
				level 1/.style={sibling distance=1.5cm,level distance=6.5cm},
				level 2/.style={sibling distance=1.5cm, level distance=5.5cm},
				edge from parent/.style={very thick,draw=blue!40!black!60,
					shorten >=5pt, shorten <=5pt},
				edge from parent path={(\tikzparentnode.east) -- (\tikzchildnode.west)},
				kant/.style={text width=2cm, text centered, sloped},
				every node/.style={text ragged, inner sep=2mm},
				punkt/.style={rectangle, rounded corners, shade, top color=white,
					bottom color=blue!50!black!20, draw=blue!40!black!60, very
					thick }
				]
				
				\node[punkt, text width=6.5em] {$0<\bar{z}<\hat{z}<\tilde{y}$}
				child{ node[punkt, text width=22em] {Case 6. $0<\bar{z}<\hat{z}<\tilde{y}$, $R_{pre}>\frac{d-w\bar{L}}{r}$ $\&$ $R_{post}>\frac{d}{r}$}
					edge from parent{
						node[kant, above] {}}
				}
				%Upper part, lv1
				child { node[punkt, text width=22em] {Case 5. $0<\bar{z}<\hat{z}<\tilde{y}$, $R_{pre}>\frac{d-w\bar{L}}{r}$ $\&$ $R_{post}= \frac{d}{r}$}
					edge from parent{
						node[kant, above] {}}
				};
			\end{tikzpicture}
		\end{center}
		
	 \noindent Additionally, we assume that ${\phi}(t,z)$ takes the time-separated form, ${\phi}(t,z)\!=\!e^{\!-\!\gamma t}v(z)$, as in \cite[Appendix A]{choi2008optimal}.
	 
	 \paragraph{Case 3. $0<\bar{z}<\tilde{y}\leq\hat{z}$, $R_{pre}>\frac{d-w\bar{L}}{r}$ $\&$ $R_{post}=\frac{d}{r}$}~{}
	\newline
	We begin with the condition $(V3)$ in (\ref{Equation 12}), the following differential equation is obtained,
	\begin{equation*}
		-\gamma v(z)+(\gamma-r)z v^{\prime}(z)+\frac{1}{2}\theta^{2}z^{2}v^{\prime\prime}(z)+\tilde{u}(z)-(d-w\bar{L})z=0,
	\end{equation*}
	which is identical with Equation (\ref{Equation Appendix 5}), hence shares the same solution as
	\begin{equation}
		v(z)=\begin{cases}
			B_{11}z^{n_{1}}+B_{21}z^{n_{2}}+\frac{A_{1}}{\Gamma_{1}}z^{\frac{\delta(1-k)}{\delta(1-k)-1}}+\frac{w(\bar{L}-L)-d}{r}z, & \bar{z}< z<\tilde{y},\\
			
			B_{12}z^{n_{1}}+B_{22}z^{n_{2}}+\frac{A_{2}}{\Gamma_{2}}z^{-\frac{1-k}{k}}+\frac{w\bar{L}-d}{r}z,& \tilde{y}\leq z<\hat{z}.\label{Equation Appendix 6}
		\end{cases}
	\end{equation}
	As follows, a six-equations system is established to obtain the unknown parameters $B_{11}$, $B_{21}$, $B_{12}$, $B_{22}$, $\bar{z}$ and $\hat{z}$.  
	\begin{itemize}
		\item $\mathcal{C}^{0}$ condition at $z=\bar{z}$
		\begin{equation*}
			B_{11}\bar{z}^{n_{1}}+B_{21}\bar{z}^{n_{2}}+\frac{A_{1}}{\Gamma_{1}}\bar{z}^{\frac{\delta(1-k)}{\delta(1-k)-1}}+\frac{w(\bar{L}-L)}{r}\bar{z}=\frac{1-\delta(1-k)}{\delta(1-k)}K_{1}\bar{L}^{\frac{(1-k)(1-\delta)}{1-\delta(1-k)}}\bar{z}^{\frac{\delta(1-k)}{\delta(1-k)-1}}.
		\end{equation*}
		\item $\mathcal{C}^{1}$ condition at $z=\bar{z}$
		\begin{equation*}
			n_{1}B_{11}\bar{z}^{n_{1}-1}\!+\!n_{2}B_{21}\bar{z}^{n_{2}-1}\!+\!\frac{\delta(1-k)}{\delta(1-k)-1}\frac{A_{1}}{\Gamma_{1}}\bar{z}^{\frac{1}{\delta(1-k)-1}}\!+\!\frac{w(\bar{L}-L)}{r}\!=\!-K_{1}\bar{L}^{\frac{(1-k)(1-\delta)}{1-\delta(1-k)}}\bar{z}^{\frac{1}{\delta(1-k)-1}}.
		\end{equation*}
		\item $\mathcal{C}^{0}$ condition at $z=\tilde{y}$
		\begin{equation*}
			B_{11}\tilde{y}^{n_{1}}+B_{21}\tilde{y}^{n_{2}}+\frac{A_{1}}{\Gamma_{1}}\tilde{y}^{\frac{\delta(1-k)}{\delta(1-k)-1}}-\frac{w L}{r}\tilde{y}=B_{12}\tilde{y}^{n_{1}}+B_{22}\tilde{y}^{n_{2}}+\frac{A_{2}}{\Gamma_{2}}\tilde{y}^{-\frac{1-k}{k}}.
		\end{equation*}
		\item $\mathcal{C}^{1}$ condition at $z=\tilde{y}$
		\begin{equation*}
			n_{1}B_{11}\tilde{y}^{n_{1}\!-\!1}\!+\!n_{2}B_{21}\tilde{y}^{n_{2}\!-\!1}\!+\!\frac{\delta(1\!-\!k)}{\delta(1\!-\!k)\!-\!1}\!\frac{A_{1}}{\Gamma_{1}}\tilde{y}^{\frac{1}{\delta(1\!-\!k)\!-\!1}}\!-\!\frac{w L}{r}\!=\! n_{1}B_{12}\tilde{y}^{n_{1}\!-\!1}\!+\!n_{2}B_{22}\tilde{y}^{n_{2}\!-\!1}\!-\!\frac{1\!-\!k}{k}\!\frac{A_{2}}{\Gamma_{2}}\tilde{y}^{\!-\!\frac{1}{k}}.
		\end{equation*}
		\item $\mathcal{C}^{1}$ condition at $z=\hat{z}$
		\begin{equation*}
			n_{1}B_{12}\hat{z}^{n_{1}-1}+n_{2}B_{22}\hat{z}^{n_{2}-1}-\frac{1-k}{k}\frac{A_{2}}{\Gamma_{2}}\hat{z}^{-\frac{1}{k}}+\frac{w\bar{L}-d}{r}+R_{pre}=0.
		\end{equation*}
		\item $\mathcal{C}^{2}$ condition at $z=\hat{z}$
		\begin{equation*}
			n_{1}(n_{1}-1)B_{12}\hat{z}^{n_{1}-2}+n_{2}(n_{2}-1)B_{22}\hat{z}^{n_{2}-2}+\frac{1-k}{k^{2}}\frac{A_{2}}{\Gamma_{2}}\hat{z}^{-\frac{1+k}{k}}=0.
		\end{equation*}	
	\end{itemize}
	
	\paragraph{Case 4. $0<\bar{z}<\tilde{y}\leq\hat{z}$, $R_{pre}>\frac{d-w\bar{L}}{r}$ $\&$ $R_{post}> \frac{d}{r}$}~{}
	\newline
	The only difference between this case and the previous
	one occurs in $z=\bar{z}$. Since $R_{post}>\frac{d}{r}$, Lemma \ref{Lemma 3} shows that the Legendre-Fenchel transform of post-retirement value function $\tilde{U}(\bar{z})$ is
	\begin{equation*}
		\tilde{U}(\bar{z})=B_{2,\scriptscriptstyle PR}\bar{z}^{n_{2}}+\frac{1-\delta(1-k)}{\delta(1-k)}K_{1}\bar{L}^{\frac{(1-k)(1-\delta)}{1-\delta(1-k)}}\bar{z}^{\frac{\delta(1-k)}{\delta(1-k)-1}}-\frac{d}{r}\bar{z}.
	\end{equation*}
	As the same before, we set up a six-equation system to achieve the unknowns, $B_{11}$, $B_{21}$, $B_{12}$, $B_{22}$, $\bar{z}$ and $\hat{z}$. Compared with the first case, only $\mathcal{C}^{0}$ and $\mathcal{C}^{1}$ conditions at $z=\bar{z}$ change, whereas all the others keep true.
	\begin{itemize}
		\item $\mathcal{C}^{0}$ condition at $z=\bar{z}$
		\begin{equation*}
			B_{11}\bar{z}^{n_{1}}\!+\!B_{21}\bar{z}^{n_{2}}\!+\!\frac{A_{1}}{\Gamma_{1}}\bar{z}^{\frac{\delta(1\!-\!k)}{\delta(1\!-\!k)\!-\!1}}\!+\!\frac{w(\bar{L}\!-\!L)}{r}\bar{z}\!=\!B_{2,\scriptscriptstyle PR}\bar{z}^{n_{2}}\!+\!\frac{1\!-\!\delta(1\!-\!k)}{\delta(1\!-\!k)}K_{1}\bar{L}^{\frac{(1\!-\!k)(1\!-\!\delta)}{1\!-\!\delta(1\!-\!k)}}\bar{z}^{\frac{\delta(1\!-\!k)}{\delta(1\!-\!k)\!-\!1}}.
		\end{equation*}
		\item $\mathcal{C}^{1}$ condition at $z=\bar{z}$
		\begin{equation*}
			n_{1}\!B_{11}\!\bar{z}^{n_{1}\!-\!1}\!+\!n_{2}\!B_{21}\!\bar{z}^{n_{2}\!-\!1}\!+\!\frac{\delta(1\!-\!k)}{\delta(1\!-\!k)\!-\!1}\!\frac{A_{1}}{\Gamma_{1}}\!\bar{z}^{\frac{1}{\delta(1\!-\!k)\!-\!1}}\!+\!\frac{w(\bar{L}\!-\!L)}{r}\!=\!
			n_{2}\!B_{2,\scriptscriptstyle PR}\!\bar{z}^{n_{2}\!-\!1}
			\!-\!K_{1}\!\bar{L}^{\frac{(1\!-\!k)(1\!-\!\delta)}{1\!-\!\delta(1\!-\!k)}}\!\bar{z}^{\frac{1}{\delta(1\!-\!k)\!-\!1}}.
		\end{equation*}
	\end{itemize}
	
	\paragraph{Case 5. $0<\bar{z}<\hat{z}<\tilde{y}$, $R_{pre}>\frac{d-w\bar{L}}{r}$ $\&$ $R_{post}=\frac{d}{r}$}~{}
	\newline
	Firstly, the interval $0<\bar{z}< z<\hat{z}<\tilde{y}$, where the condition $(V3)$ of (\ref{Equation 12}) holds, is considered. Also adopting the time-independent form of ${\phi}(t,z)=e^{-\gamma t}v(z)$, the following differential equation is obtained $-\gamma v(z)+(\gamma-r)z v^{\prime}(z)+\frac{1}{2}\theta^{2}z^{2}v^{\prime\prime}(z)+\tilde{u}(z)-(d-w\bar{L})z=0$.
	The dual transform of $u(c,l)$ is $\tilde{u}(z)=A_{1}z^{\frac{\delta(1-k)}{\delta(1-k)-1}}-w L z$ in the considered interval; therefore, the above differential equation takes the identical form of the one in $0<\bar{z}< z<\tilde{y}<\hat{z}$ of Case 3. The solution of $v(z)$ is given directly from (\ref{Equation Appendix 6}), only changing the parameters' notations from $B_{11}$ to $B_{1}$ and $B_{21}$ to $B_{2}$ respectively,
	\begin{equation*}
		v(z)=B_{1}z^{n_{1}}+B_{2}z^{n_{2}}+\frac{A_{1}}{\Gamma_{1}}z^{\frac{\delta(1-k)}{\delta(1-k)-1}}+\frac{w(\bar{L}-L)-d}{r}z, \quad\bar{z}< z<\hat{z}.
	\end{equation*}
	Next, a four-equations system is set up to derive the desired parameters $B_{1}$, $B_{2}$, $\bar{z}$, $\hat{z}$. The same arguments with Case 3, only $\mathcal{C}^{1}$ and $\mathcal{C}^{2}$ conditions in $z=\hat{z}$ changes.
	\begin{itemize}
		\item $\mathcal{C}^{0}$ condition at $z=\bar{z}$
		\begin{equation*}
			B_{1}\bar{z}^{n_{1}}+B_{2}\bar{z}^{n_{2}}+\frac{A_{1}}{\Gamma_{1}}\bar{z}^{\frac{\delta(1-k)}{\delta(1-k)-1}}+\frac{w(\bar{L}-L)}{r}\bar{z}=\frac{1-\delta(1-k)}{\delta(1-k)}K_{1}\bar{L}^{\frac{(1-k)(1-\delta)}{1-\delta(1-k)}}\bar{z}^{\frac{\delta(1-k)}{\delta(1-k)-1}}.
		\end{equation*}
		\item $\mathcal{C}^{1}$ condition at $z=\bar{z}$
		\begin{equation*}
			n_{1}B_{1}\bar{z}^{n_{1}-1}+n_{2}B_{2}\bar{z}^{n_{2}-1}+\frac{\delta(1-k)}{\delta(1-k)-1}\frac{A_{1}}{\Gamma_{1}}\bar{z}^{\frac{1}{\delta(1-k)-1}}+\frac{w(\bar{L}-L)}{r}=-K_{1}\bar{L}^{\frac{(1-k)(1-\delta)}{1-\delta(1-k)}}\bar{z}^{\frac{1}{\delta(1-k)-1}}.
		\end{equation*}
		\item $\mathcal{C}^{1}$ condition at $z=\hat{z}$
		\begin{equation*}
			n_{1}B_{1}\hat{z}^{n_{1}-1}+n_{2}B_{2}\hat{z}^{n_{2}-1}+\frac{\delta(1-k)}{\delta(1-k)-1}\frac{A_{1}}{\Gamma_{1}}\hat{z}^{\frac{1}{\delta(1-k)-1}}+\frac{w(\bar{L}-L)-d}{r}+R_{pre}=0.
		\end{equation*}
		\item $\mathcal{C}^{2}$ condition at $z=\hat{z}$
		\begin{equation*}
			n_{1}(n_{1}-1)B_{1}\hat{z}^{n_{1}-2}+n_{2}(n_{2}-1)B_{2}\hat{z}^{n_{2}-2}+\frac{\delta(1-k)}{(\delta(1-k)-1)^{2}}\frac{A_{1}}{\Gamma_{1}}\hat{z}^{\frac{2-\delta(1-k)}{\delta(1-k)-1}}=0.
		\end{equation*}
	\end{itemize}
	
	\paragraph{Case 6. $0<\bar{z}<\hat{z}<\tilde{y}$, $R_{pre}>\frac{d-w\bar{L}}{r}$ $\&$ $R_{post}>\frac{d}{r}$}~{}
	\newline
	We now move to Case 6. The only difference from the previous case happens on the condition $R_{post}>\frac{d}{r}$, which is mainly involved in the post-retirement part; hence, the solution of the partial differential equation corresponding to Condition $(V3)$ in (\ref{Equation 12}) remains unchanged, that is,
	\begin{equation*}
		v(z)=B_{1}z^{n_{1}}+B_{2}z^{n_{2}}+\frac{A_{1}}{\Gamma_{1}}z^{\frac{\delta(1-k)}{\delta(1-k)-1}}+\frac{w(\bar{L}-L)-d}{r}z, \quad \bar{z}< z<\hat{z}.
	\end{equation*}
	Considering the smooth fit conditions at $\bar{z}$ and $\hat{z}$, we construct a four-equations system to deduce the values of unknown parameters $B_{1}$, $B_{2}$, $\bar{z}$ and $\hat{z}$.
	\begin{itemize}
		\item $\mathcal{C}^{0}$ condition at $z=\bar{z}$
		\begin{equation*}
			B_{1}\bar{z}^{n_{1}}+B_{2}\bar{z}^{n_{2}}+\frac{A_{1}}{\Gamma_{1}}\bar{z}^{\frac{\delta(1-k)}{\delta(1-k)-1}}+\frac{w(\bar{L}-L)}{r}\bar{z}=B_{2,\scriptscriptstyle PR}\bar{z}^{n_{2}}+\frac{1-\delta(1-k)}{\delta(1-k)}K_{1}\bar{L}^{\frac{(1-k)(1-\delta)}{1-\delta(1-k)}}\bar{z}^{\frac{\delta(1-k)}{\delta(1-k)-1}}.
		\end{equation*}
		\item $\mathcal{C}^{1}$ condition at $z=\bar{z}$
		\begin{equation*}
			n_{1}B_{1}\bar{z}^{n_{1}\!-\!1}\!+\!n_{2}B_{2}\bar{z}^{n_{2}\!-\!1}\!+\!\frac{\delta(1\!-\!k)}{\delta(1\!-\!k)\!-\!1}\!\frac{A_{1}}{\Gamma_{1}}\!\bar{z}^{\frac{1}{\delta(1\!-\!k)\!-\!1}}\!+\!\frac{w(\bar{L}\!-\!L)}{r}\!=\!
			n_{2}B_{2,\scriptscriptstyle PR}\bar{z}^{n_{2}\!-\!1}
			\!-\!K_{1}\!\bar{L}^{\frac{(1\!-\!k)(1\!-\!\delta)}{1\!-\!\delta(1\!-\!k)}}\!\bar{z}^{\frac{1}{\delta(1\!-\!k)\!-\!1}}.
		\end{equation*}
		\item $\mathcal{C}^{1}$ condition at $z=\hat{z}$
		\begin{equation*}
			n_{1}B_{1}\hat{z}^{n_{1}-1}+n_{2}B_{2}\hat{z}^{n_{2}-1}+\frac{\delta(1-k)}{\delta(1-k)-1}\frac{A_{1}}{\Gamma_{1}}\hat{z}^{\frac{1}{\delta(1-k)-1}}+\frac{w(\bar{L}-L)-d}{r}+R_{pre}=0.
		\end{equation*}
		\item $\mathcal{C}^{2}$ condition at $z=\hat{z}$
		\begin{equation*}
			n_{1}(n_{1}-1)B_{1}\hat{z}^{n_{1}-2}+n_{2}(n_{2}-1)B_{2}\hat{z}^{n_{2}-2}+\frac{\delta(1-k)}{(\delta(1-k)-1)^{2}}\frac{A_{1}}{\Gamma_{1}}\hat{z}^{\frac{2-\delta(1-k)}{\delta(1-k)-1}}=0.
		\end{equation*}
	\end{itemize}
    
    \noindent Same argument with Case 1 and Case 2 in Appendix \ref{Appendix 7}, given the initial wealth $x\ge R_{pre}$ and solving $x=-v^{\prime}(\lambda^{*})$, we can obtain the optimal Lagrange multiplier $\lambda^{*}$ and then the optimal process $Z^{*}(t)=\lambda^{*}e^{\gamma t}H(t)$. 
    
    \begin{proposition}\label{Proposition Appendix 6}
    Under the condition $\tilde{y}\leq\hat{z}$, corresponding to Case 3 and Case 4, the optimal consumption-portfolio-leisure plan $\{c^{*}(t),\pi^{*}(t),l^{*}(t)\}$ before retirement is given by
    	\begin{equation*}
    		c^{*}(t)=\begin{cases}
    			\left(\frac{1-\delta}{\delta w}\right)^{\frac{(1-\delta)(1-k)}{k}}(Z^{*}(t))^{-\frac{1}{k}}, & \tilde{y}\leq Z^{*}(t)\leq\hat{z},\\
    			L^{-\frac{(1-k)(1-\delta)}{\delta(1-k)-1}}(Z^{*}(t))^{\frac{1}{\delta(1-k)-1}}, &\bar{z}<Z^{*}(t)<\tilde{y},
    		\end{cases}
    	\end{equation*}
    	\begin{equation*}
    		l^{*}(t)=\begin{cases}
    			\left(\frac{1-\delta}{\delta w}\right)^{-\frac{\delta(1-k)-1}{k}}(Z^{*}(t))^{-\frac{1}{k}}, & \tilde{y}\leq Z^{*}(t)\leq\hat{z},\\
    			L, & \bar{z}<Z^{*}(t)<\tilde{y},
    		\end{cases}
    	\end{equation*}
    	\begin{equation*}
    		\pi^{*}(t)=
    		\begin{cases}
    			\frac{\theta}{\sigma}\bigg[n_{1}(n_{1}-1)B_{12}(Z^{*}(t))^{n_{1}-1}+n_{2}(n_{2}-1)B_{22}(Z^{*}(t))^{n_{2}-1} & \\
    			\qquad\qquad\qquad\qquad\qquad\qquad\qquad\qquad+\frac{1-k}{k^{2}}\frac{A_{2}}{\Gamma_{2}}(Z^{*}(t))^{-\frac{1}{k}}\bigg],& \tilde{y}\leq Z^{*}(t)\leq\hat{z},\\
    			\frac{\theta}{\sigma}\bigg[n_{1}(n_{1}-1)B_{11}(Z^{*}(t))^{n_{1}-1}+n_{2}(n_{2}-1)B_{21}(Z^{*}(t))^{n_{2}-1} & \\
    			\qquad\qquad\qquad\qquad\qquad\quad+\frac{\delta(1-k)}{(\delta(1-k)-1)^{2}}\frac{A_{1}}{\Gamma_{1}}(Z^{*}(t))^{\frac{1}{\delta(1-k)-1}}\bigg], & \bar{z}<Z^{*}(t)<\tilde{y}.
    		\end{cases}
    	\end{equation*}
    Meanwhile, under the condition $\hat{z}<\tilde{y}$, corresponding to Case 5 and Case 6, the optimal consumption-portfolio-leisure plan $\{c^{*}(t),\pi^{*}(t),l^{*}(t)\}$ before retirement is given by
    \begin{equation*}
    	c^{*}(t)=L^{-\frac{(1-k)(1-\delta)}{\delta(1-k)-1}}\left(Z^{*}(t)\right)^{\frac{1}{\delta(1-k)-1}},\qquad l^{*}(t)=L,
    \end{equation*}
    \begin{equation*}
    	\pi^{*}(t)\!=\!
    	\frac{\theta}{\sigma}\!\left[\!n_{1}(n_{1}\!-\!1)B_{1}(Z^{*}(t))^{n_{1}\!-\!1}\!+\!n_{2}(n_{2}\!-\!1)B_{2}(Z^{*}(t))^{n_{2}\!-\!1}\!+\!\frac{\delta(1\!-\!k)}{(\delta(1\!-\!k)\!-\!1)^{2}}\frac{A_{1}}{\Gamma_{1}}(Z^{*}(t))^{\frac{1}{\delta(1\!-\!k)\!-\!1}}\right].
    \end{equation*}
    \end{proposition}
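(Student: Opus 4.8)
The plan is to follow the same two-step argument already used for Proposition \ref{Proposition Appendix 5}, now applied to the four cases catalogued above: the optimal consumption and leisure are read off from the duality relation, while the optimal portfolio is recovered from the second derivative of the dual value function $v$.

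First I would invoke the optimality conditions of the Duality Theorem, which force $c^{*}(t)+wl^{*}(t)=-\tilde{u}^{\prime}(Z^{*}(t))$ along the optimal path, with $Z^{*}(t)=\lambda^{*}e^{\gamma t}H(t)$. To split this sum into the individual controls, I would appeal to \cite[Lemma 2.1]{2021arXiv210615426D}, which both exhibits the piecewise form of $\tilde{u}$ and identifies the pointwise maximizers of $u(c,l)-(c+wl)z$ over $c\ge 0$, $0\le l\le L$: the leisure bound $l\le L$ is slack exactly when $z\ge\tilde{y}$, in which regime $c^{*}$ and $l^{*}$ both scale as $z^{-1/k}$, whereas it binds at $l^{*}=L$ when $0<z<\tilde{y}$, in which regime $c^{*}$ scales as $z^{1/(\delta(1-k)-1)}$. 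Substituting $z=Z^{*}(t)$ yields the stated $c^{*}$ and $l^{*}$ in every case.

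Next, for the portfolio I would use the representation $\pi^{*}(t)=\frac{\theta}{\sigma}Z^{*}(t)v^{\prime\prime}(Z^{*}(t))$ from \cite[Section 5, Theorem 3]{he1993labor}, and simply differentiate twice the explicit piecewise $v(z)$ obtained for each case from the four- or six-equation systems, then multiply by $\frac{\theta}{\sigma}z$. The only bookkeeping is to pair each leisure regime with the corresponding polynomial piece of $v$: the forcing term $\tilde{u}(z)$ in the ODE (\ref{Equation Appendix 5}) changes form precisely at $z=\tilde{y}$, so $v$ inherits a breakpoint there. When $\tilde{y}\le\hat{z}$ (Cases 3 and 4) the continuation interval $(\bar{z},\hat{z})$ straddles $\tilde{y}$, so $v$ uses the coefficients $(B_{11},B_{21})$ together with $\frac{A_{1}}{\Gamma_{1}}z^{\delta(1-k)/(\delta(1-k)-1)}$ on $(\bar{z},\tilde{y})$ and $(B_{12},B_{22})$ together with $\frac{A_{2}}{\Gamma_{2}}z^{-(1-k)/k}$ on $(\tilde{y},\hat{z})$; this produces the two-branch $\pi^{*}$. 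When $\hat{z}<\tilde{y}$ (Cases 5 and 6) the whole interval lies below $\tilde{y}$, so $v$ uses the single pair $(B_{1},B_{2})$ and the leisure constraint binds throughout, producing the single-branch $\pi^{*}$ with $l^{*}\equiv L$.

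No genuine obstacle arises: the differentiations of $z^{n_{1}}$, $z^{n_{2}}$, $z^{\delta(1-k)/(\delta(1-k)-1)}$ and $z^{-(1-k)/k}$ are routine, and one may check directly, for instance, that $\frac{\theta}{\sigma}z\,\frac{d^{2}}{dz^{2}}\!\left[\frac{A_{2}}{\Gamma_{2}}z^{-(1-k)/k}\right]=\frac{\theta}{\sigma}\frac{1-k}{k^{2}}\frac{A_{2}}{\Gamma_{2}}z^{-1/k}$, reproducing the claimed coefficient. The one point warranting a moment's care is the coincidence of the consumption/leisure threshold (governed by $\tilde{y}$) with the portfolio threshold (governed by the breakpoint of $v$); these agree automatically because both originate from the single breakpoint of $\tilde{u}$ at $\tilde{y}$.
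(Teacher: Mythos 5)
Your proposal is correct and coincides with the paper's own (very terse) argument: the proof of Proposition \ref{Proposition Appendix 6} is simply ``follow the lines of Proposition \ref{Proposition Appendix 5}'', i.e.\ read $c^{*}$ and $l^{*}$ off the pointwise maximizers in \cite[Lemma 2.1]{2021arXiv210615426D} with the regime switch at $\tilde{y}$, and obtain $\pi^{*}(t)=\frac{\theta}{\sigma}Z^{*}(t)v^{\prime\prime}(Z^{*}(t))$ from \cite[Section 5, Theorem 3]{he1993labor} applied to the piecewise $v$ of each case. Your explicit coefficient checks (e.g.\ $\frac{1-k}{k^{2}}\frac{A_{2}}{\Gamma_{2}}$ and $\frac{\delta(1-k)}{(\delta(1-k)-1)^{2}}\frac{A_{1}}{\Gamma_{1}}$) and the observation that the breakpoint of $v$ and of the controls both stem from the single kink of $\tilde{u}$ at $\tilde{y}$ are exactly the bookkeeping the paper leaves implicit.
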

    \begin{proof}
    	Follow the lines of Proposition \ref{Proposition Appendix 5}.
    \end{proof}

\end{document}